\date{}
\newcommand{\captionfonts}{\small}
\long\def\@makecaption#1#2{%
  \vskip\abovecaptionskip
  \sbox\@tempboxa{{\captionfonts #1: #2}}%
  \ifdim \wd\@tempboxa >\hsize
    {\captionfonts #1: #2\par}
  \else
    \hbox to\hsize{\hfil\box\@tempboxa\hfil}%
  \fi
  \vskip\belowcaptionskip}
\DeclareMathOperator{\arccot}{ArcCot}
\begin{document}

\title{\textbf{Quantum Computation of Scattering \\ in Scalar 
Quantum Field Theories}}
\author{{\Large Stephen P.\ Jordan,{\normalsize $^{\dag\S}$} 
		Keith S.\ M.\ Lee,{\normalsize $^{\ddag\S}$} 
        	and John Preskill {\normalsize $^{\S}$} 
\thanks{\texttt{stephen.jordan@nist.gov}, \texttt{ksml@theory.caltech.edu},
\texttt{preskill@theory.caltech.edu}} } \\[20pt]
\textit{ $^{\dag}$ National Institute of Standards and Technology, 
	       Gaithersburg, MD 20899 } \\[8pt] 
\textit{$^{\ddag}$ University of Pittsburgh, Pittsburgh, PA 15260 }\\[8pt] 
\textit{$^{\S}$ California Institute of Technology, Pasadena, CA 91125}}

\maketitle

\bibliographystyle{unsrt}

\newcommand{\ud}{\mathrm{d}}
\newcommand{\braket}[2]{\langle #1|#2\rangle}
\newcommand{\bra}[1]{\langle #1|}
\newcommand{\ket}[1]{|#1\rangle}
\newcommand{\Bra}[1]{\left<#1\right|}
\newcommand{\Ket}[1]{\left|#1\right>}
\newcommand{\Braket}[2]{\left< #1 \right| #2 \right>}
\renewcommand{\th}{^\mathrm{th}}
\newcommand{\tr}{\mathrm{Tr}}
\newcommand{\id}{\mathds{1}}
\newcommand\T{\rule{0pt}{2.6ex}}
\newcommand\B{\rule[-1.2ex]{0pt}{0pt}}
\newcommand{\nn}{\nonumber}
\newcommand{\eq}[1]{(\ref{#1})}
\newcommand{\sect}[1]{\S\ref{#1}}

\newtheorem{theorem}{Theorem}
\newtheorem{proposition}{Proposition}

\begin{abstract}
 Quantum field theory provides the framework for the most fundamental
 physical theories to be confirmed experimentally and has enabled
 predictions of unprecedented precision. However, calculations of
 physical observables often require great computational complexity and
 can generally be performed only when the interaction strength is
 weak. A full understanding of the foundations and rich consequences
 of quantum field theory remains an outstanding challenge.  We develop
 a quantum algorithm to compute relativistic scattering amplitudes in
 massive $\phi^4$ theory in spacetime of four and fewer
 dimensions. The algorithm runs in a time that is polynomial in the
 number of particles, their energy, and the desired precision, and
 applies at both weak and strong coupling.  Thus, it offers
 exponential speedup over existing classical methods at high precision
 or strong coupling. 
\end{abstract}

\newpage

\setcounter{tocdepth}{2}
\tableofcontents

\newpage

\section{Introduction}

Quantum field theory, by which quantum mechanics and special relativity
are reconciled, applies quantum mechanics to functions of space and
time --- fields. In the quantum mechanics of point particles,
dynamical variables become operators obeying suitable commutation
relations, and the wavefunction $\psi(\mathbf{x})$ encodes the
probability that a particle is at position $\mathbf{x}$.
In quantum field theory, fields are promoted to operators satisfying
commutation relations, and the wavefunction specifies the probability
of each field configuration.

Quantum field theory is the basis for the most fundamental physical theory 
to be established by experiment, namely, the Standard Model of particle
physics. This quantum field theory encompasses the strong, weak, and
electromagnetic forces --- all known forces except gravity --- 
and explains all subatomic processes observed so far. 
Quantum field theory has also enabled predictions of unparalleled precision.
For example, the most precise determination of 
the fine-structure constant (a measure of the strength of
the electromagnetic force) uses quantum electrodynamics (QED)
theory and the electron magnetic moment; it gives \cite{Hanneke}
\begin{equation}
\alpha_{\rm em}^{-1} = 137.035\,999\,084\,\,[0.37 \, \textrm{ppb}] \,, 
\end{equation}
in excellent agreement with the next most precise independent 
determinations \cite{Clade, Gerginov}.

To compare the predictions of quantum field theories with experimental
results from particle accelerators, one typically calculates
scattering amplitudes, using a perturbative expansion in powers of the
coupling constant, which quantifies the strength of interactions
between particles. These perturbative calculations are greatly
simplified through the use of Feynman diagrams.  Nevertheless, the
number of Feynman diagrams needed to evaluate a scattering process
involving $n$ particles scales factorially with both the order in the
perturbative expansion and $n$.  Many-particle scattering processes
can be phenomenologically important: for example, in a strongly
coupled quantum field theory, the collision of two highly relativistic
particles (with momentum much larger than mass, $|\mathbf{p}| \gg m$) can
produce a shower of $n_{\mathrm{out}} \sim |\mathbf{p}|/m$ outgoing
particles.

In cases where the coupling constant is strong, such as at low
energies in quantum chromodynamics (QCD), the quantum field theory governing 
nuclear physics, perturbative techniques break down. Strongly coupled 
quantum field theories can be studied on supercomputers via a 
method called lattice field theory. However, lattice field 
theory is useful only for computing static quantities such as mass
ratios and cannot predict dynamical quantities such as scattering
amplitudes.

In this work we develop a quantum algorithm to calculate scattering
amplitudes in a massive quantum field theory with a quartic
self-interaction, called $\phi^4$ theory. The complexity of our
algorithm is polynomial in the time and volume being simulated, the
number of external particles, and the energy scale. The algorithm
enables the calculation of amplitudes to arbitrarily high  precision,
and at strong coupling.
Specifically, the asymptotic scaling of the number of quantum gates,
$G_{\rm weak}$, required to sample from scattering probabilities
in the weakly coupled, $(d+1)$-dimensional theory with 
an error of order $\epsilon$ 
is (\sect{efficiency})
\begin{equation*} 
G_{\mathrm{weak}} = 
\left\{ \begin{array}{ll}
\left( \frac{1}{\epsilon} \right)^{1.5+o(1)} \,, & d=1 \,,\\
\left( \frac{1}{\epsilon} \right)^{2.376+o(1)} \,, & d=2 \,,\\
\left( \frac{1}{\epsilon} \right)^{5.5+o(1)} \,, & d=3 \,.
\end{array} \right.
\end{equation*}
For the strongly coupled theory, the asymptotic scaling is given in
Table~\ref{strongtable} of \sect{efficiency}.
The minimum number of (perfect) qubits required for a non-trivial 
simulation in $D=2$ is estimated --- necessarily crudely --- 
to be on the order of a thousand to ten thousand, corresponding to 
$\epsilon$ ranging from $10\%$ to $1\%$ (see Appendix~F).

There are at least three motivations for this work. First, there is a
familiar motivation: the search for new problems that can be solved
more efficiently by a quantum algorithm than by any existing classical
algorithm. Secondly, one would like to know the physical limits of
computation. What is the computational power of our universe? 
Since our most complete theory yet is the Standard Model, this is a
question about the computational power of quantum field
theories. Thirdly, we wish to learn more about the nature and
foundations of quantum field theory itself. Indeed, history provides a
precedent: Wilson discovered deep insights --- of great conceptual and
technical significance --- about quantum field theory through his work
on simulating it on classical computers \cite{Wilson_Nobel}.

The extensive efforts that are currently being made to build 
a quantum computer are largely motivated by the
discovery of a small number of quantum algorithms that can solve
certain computational problems in exponentially fewer steps (as a
function of input size) than the best existing classical
algorithms. The most famous example is Shor's quantum algorithm for 
prime factorization of an $n$-bit integer in $O(n^2 \log n \log \log n)$
steps \cite{Shor_factoring}. In comparison, the fastest known classical 
algorithm uses $2^{O(n^{1/3} (\log n)^{2/3})}$ steps \cite{Lenstra}. 
There has now been a large amount
of effort devoted to finding additional quantum algorithms
offering exponential speedup. This turns out to be very difficult. At
present, only relatively few problems have been discovered admitting
exponential quantum speedup~\cite{algorithm_zoo}, including simulating 
nonrelativistic many-body quantum systems, estimating certain topological 
invariants such as Jones polynomials, and solving certain number-theory 
problems such as Pell's equation.

The theory of classical computation was put on a firm foundation by
proofs that Turing machines, possibly supplemented by random-number
generation, can efficiently simulate classical systems and, in
particular, all other proposed models of universal classical 
computation. Thus, computational-complexity results obtained within the
Turing machine model, or any of the equivalent models, are universally
applicable, up to polynomial factors. The field of quantum computing
originated with Feynman's observation that quantum-mechanical systems
represent an apparent exception to this efficient mutual simulability:
simulating quantum-mechanical systems with $n$ degrees of freedom on a
classical computer requires time scaling exponentially in $n$
\cite{Feynman}. Feynman conjectured that this exception was an
indication of two classes of computation: classical and
quantum, with all quantum models able to simulate each other efficiently, 
and all classical models able to simulate each other, but not quantum
models, efficiently.

In 1985, Deutsch defined quantum Turing machines and proposed them as a
concrete model of a universal quantum computer able efficiently to 
simulate all others~\cite{Deutsch85}. Subsequently proposed models of quantum
computation, such as the quantum circuit model \cite{Deutsch89},
continuous \cite{Continuous_walk} and discrete \cite{Discrete_walk}
quantum walks, topological quantum computers \cite{Kitaev97}, and
adiabatic quantum computers \cite{Farhi_adiabatic}, have all been
proven to be polynomially equivalent to the quantum Turing machine
model \cite{Yao, Childs_equivalence, FKW, FLS, Aharonov_equivalence, 
  Bernstein_Vazirani}. 
Today, the model
of quantum computation most widely used by theorists is the quantum
circuit model. Feynman's original question about whether a universal
quantum computer can efficiently simulate not only other models of
quantum computation but also naturally occurring quantum systems has
also been addressed: quantum circuits have been designed to simulate
the evolution for time $t$ of systems of $n$ nonrelativistic particles
using $\mathrm{poly}(n,t)$ gates
\cite{Zalka, Abrams_Lloyd, Wiesner, Lloyd_science}.

Whether quantum circuits can efficiently simulate quantum field
theories has remained an open question. 
Interactions between fields are nonlinear (that is, there are 
nonlinearities in the equations of motion), so the dynamics of
an interacting theory is non-trivial to compute. 
Interactions are also local, occurring at a point in spacetime, and 
hence there are infinitely many degrees of freedom even within a finite 
volume. 
Thus, known techniques for simulating the dynamics of
many-body quantum-mechanical systems do not directly apply. If quantum
field theories could not be simulated in polynomial time by quantum
circuits, this would undermine the status of quantum circuits as a
universal model of quantum computation and raise the possibility of
quantum field computers, exponentially faster for certain tasks than
even quantum circuits.

Known classical methods for simulating quantum field theory require, on
worst-case instances, exponential time to compute scattering amplitudes.
Moreover, complexity theory provides reasons to believe that this is a
fundamental barrier for all classical algorithms. It is generally believed 
that, in principle, a quantum computer could be built from known forms of
matter. (The obstacles are mainly related to achieving high precision and good 
isolation from noise.) Thus, the quantum field theory governing known forms
of matter (the Standard Model) should be sufficiently rich to model an 
idealized universal quantum computer. 
Hence, polynomial-time classical simulation of the Standard Model 
should be impossible, unless all quantum computations can
be efficiently performed by classical computers. 
One could restate this claim in the language of complexity theory as a formal 
conjecture: the problem of simulating the Standard Model is 
$\mathsf{BQP}$-hard and therefore cannot be contained in 
$\mathsf{P}$ unless $\mathsf{P}=\mathsf{BQP}$.

When one simulates classical physics, it is standard to discretize space
using a lattice. The same thing can be done in quantum field theory,
and this method underlies the lattice field theory calculations
extensively used in supercomputer studies of quantum
chromodynamics. To simulate a typical process at energy scale $E$,
it is believed to suffice if one chooses a lattice spacing small
compared with $\frac{\hbar c}{E}$. However, discretizing a quantum field
theory involves special difficulties unfamiliar from the classical
context. A discretized system from classical physics converges to a
continuum limit when one straightforwardly takes the lattice spacing to
zero. In contrast, continuum limits of quantum field theories
  are achieved only through careful adjustment of the
parameters of the Hamiltonian (or, equivalently, Lagrangian) as a function
of the lattice spacing. This is the process of renormalization, which
we discuss in \sect{continuum}, and it is an important
consideration in the analysis of quantum (or classical) algorithms for
simulating quantum field theories. In \sect{sec:a} we use methods
of effective field theory to make a detailed analysis of the errors
introduced to our simulation by discretization.

Once discretized, a quantum field theory becomes essentially an ordinary 
many-body quantum-mechanical system, whose evolution can be efficiently 
simulated on quantum computers with established methods \cite{Byrnes}. 
However, a full simulation includes, in addition to time evolution, 
preparation of a physically meaningful initial state, and final measurement 
of physically relevant observables. The state preparation and
measurements depend strongly on the underlying physics and must be analyzed
on a case-by-case basis. In \sect{algorithm} we propose an
adiabatic method for preparing wavepacket states of $\phi^4$ theory,
and in \sect{preparing} we analyze its complexity. 
Adiabatic state preparation is widely used to prepare energy eigenstates
in quantum-simulation algorithms (see, for example, \cite{aspuru2005}).
Our extension of this technique counteracts the dynamical phases
associated with the different energy eigenstates in superposition in a
wavepacket. 
We construct observables for measuring the outcome of scattering events
in \sect{measurement} and \sect{detectors}. 
Our analysis of state preparation and discretization are two of the most 
important technical contributions of this paper.

The issue of gauge symmetries in quantum simulation of lattice
field theories has been addressed in \cite{Byrnes}. There is an extensive 
literature on analog simulation of interacting quantum field theories
using ultracold atoms \cite{Buchler, Zohar, Szirmai:2011pj,
  IgnacioCirac:2010us, Mazza, Kapit, Bermudez, Maraner, Lepori, Maeda,
  Rapp, Weimer}, trapped ions \cite{Casanova, Casanova2}, and 
Josephson-junction arrays \cite{Doucot}. Much work has also been done on
analog simulation of special-relativistic quantum-mechanical effects
such as Zitterbewegung and the Klein paradox, as well as
general-relativistic quantum effects such as Hawking radiation. For
recent reviews, see \cite{Lewenstein, Johanning}. 
Our work, in contrast to these previous studies, addresses digital quantum 
simulation, with explicit consideration of convergence to the
continuum limit and efficient preparation of wavepacket states for the
computation of dynamical quantities such as scattering probabilities.
Our analysis includes error estimates of all parts of our algorithm.

The experimental implementations of Hamiltonians approximating lattice
theories could be viewed as specialized quantum
computers. However, they differ from the quantum circuit model
considered here in that they are a form of analog, rather than digital,
quantum computation. The proofs that digital quantum computers can in
principle perform universal, scalable, fault-tolerant computation
(e.g. \cite{Aharonov_Benor, Knill_Laflamme_Zurek, Kitaev97}) do
not apply to analog quantum computers. The studies of analog 
and digital quantum simulation thus serve complementary
purposes. By studying digital quantum simulation we may understand the
ultimate abilities and limits of future large-scale quantum computers,
whereas analog quantum simulation offers more specialized and less
scalable techniques that are implementable with present-day technology.

Some of the techniques discussed in this paper may be useful for designing 
quantum algorithms to do things other than the simulation of high-energy 
physics.  In particular, suppose one wishes to simulate the
quantum dynamics of a crystal lattice. One could do this directly using
the standard quantum algorithms for quantum simulation, and the
resources would scale polynomially in the number of lattice
sites. However, if the phenomena being studied are characterized by
much longer length scales than the crystal lattice spacing, this
direct approach may be unnecessarily costly. Instead, one could use
the renormalization group to obtain an effective Hamiltonian on a
coarse-grained lattice and then simulate that using the standard
techniques.

\section{Background}

Our presentation is intended to be self-contained and accessible
to a broad audience.\footnote{For a concise description of the main
ideas, see \cite{shortversion}.} 
Section \ref{sec:bg1} describes some basic concepts of quantum field theory
drawn upon in this work.  
For simplicity, the discussion is restricted to the scalar $\phi^4$ theory, 
which is the focus of this paper.
Section \ref{QC} gives an overview of the standard techniques in 
quantum computing that are used in our algorithm.
For reference, notation introduced in this and subsequent sections 
is tabulated in Appendix~A.

\subsection{Quantum Field Theory}
\label{sec:bg1}

The scalar $\phi^4$ quantum field theory in the continuum limit
is relevant to diverse physical phenomena.
In the Standard Model of particle physics, the Higgs boson endows
mass to other particles. Its self-interactions are described by the
(four-component) $\phi^4$ theory. Additionally, the Euclidean versions
of $\phi^4$ theories have been highly successful in describing
critical phenomena and universality in statistical-mechanical
systems.

The simplicity of $\phi^4$ theory makes it suitable for studying
challenging formal aspects of quantum field theory.
The existence of the $\phi^4$ continuum theory has been rigorously 
established in two
\cite{Glimm:1968kh,Glimm:1986ki,Glimm:1986kj,Glimm:1972kn,Glimm:1986kk} 
and three \cite{Glimm:1973kp,Feldman:1976im} (spacetime) dimensions.
The perturbative expansion for the scattering matrix has been shown
to be asymptotic in two \cite{Osterwalder:1975zn,Eckmann:1976xa}
and three \cite{Constantinescu:1977xr} dimensions.
Furthermore, in those dimensions the Schwinger functions (Euclidean 
correlation functions) can be reconstructed from their perturbative
expansions, that is, they are Borel summable  
\cite{Eckmann:1975,Magnen:1977ha,Eckmann:1979pr}. 
In contrast, in five or more dimensions, $\phi^4$ theory is trivial
\cite{Aizenman:1982ze,Frohlich:1982tw}, that is, the continuum limit
is equivalent to a free theory.
In the borderline case of four dimensions, obtaining a rigorous proof
is still an open question, but the theory is believed to be trivial.
Despite this, it is still an important theory: it is simply regarded
as an effective field theory valid below some high-energy cutoff.

\medskip

One possibility for attempting to construct a continuum theory in an 
infinite volume is to define a theory on a finite lattice and then 
study the limits as the lattice spacing is taken to zero and the volume 
to infinity. In the following, we present a lattice theory and its
continuum limit in a manner directly corresponding to the basis of our
algorithm. From a field theorist's perspective, the setup is therefore
unorthodox in several respects. Only the spatial dimensions of the
(Minkowski-space) theory are put on a lattice, as opposed to all
spacetime dimensions of a Euclidean theory. The Schr\"{o}dinger
picture is used rather than the interaction picture, and the
Hamiltonian formalism rather than the Lagrangian formalism. As is
conventional in quantum field theory, we use units where 
$\hbar = c = 1$, so that all quantities have units of some power of
mass. At the end, any necessary factors of $\hbar$ and $c$
can be reinserted by dimensional analysis.

\subsubsection{Lattice $\phi^4$ Theory}
\label{lattice}
Let $\Omega$ be a cubic lattice in $d$ spatial dimensions with length $L$,
volume $V = L^d$, lattice spacing $a$, and $\mathcal{V} = (L/a)^d$
lattice sites. Also let $\hat{L} = L/a$. 
For simplicity, we assume periodic boundary conditions, that is,
\begin{equation}
\Omega = a \mathbb{Z}_{\hat{L}}^d.
\end{equation}
For each $\mathbf{x} \in \Omega$, let $\phi(\mathbf{x})$ be a
continuous, real degree of freedom interpreted as the field at
$\mathbf{x}$, and $\pi(\mathbf{x})$ the canonically conjugate variable to 
$\phi(\mathbf{x})$. 
By canonical quantization, these
degrees  of freedom are promoted to Hermitian operators with the
following commutation relations:
\begin{eqnarray}
\left[ \phi(\mathbf{x}), \pi(\mathbf{y}) \right] 
& = & i a^{-d} \delta_{\mathbf{x},\mathbf{y}} \id \,, \nonumber \\
 \label{canonical}
[\phi(\mathbf{x}), \phi(\mathbf{y})] = [\pi(\mathbf{x}),
  \pi(\mathbf{y})] & = & 0 \,.
\end{eqnarray}
Here, the factor of $a^{-d}$ provides dimensions of
$(\mathrm{length})^{-d}$, matching a Dirac delta function in
the $d$-dimensional continuum. $\phi^4$ theory on the lattice $\Omega$
is defined by the following Hamiltonian:
\begin{equation}
\label{ham}
H = \sum_{\mathbf{x} \in \Omega} a^d \left[ \frac{1}{2} \pi(\mathbf{x})^2 
+ \frac{1}{2} \left( \nabla_a \phi \right)^2 (\mathbf{x}) + \frac{1}{2}
m_0^2 \phi(\mathbf{x})^2 + \frac{\lambda_0}{4!} \phi(\mathbf{x})^4 \right]
\,.
\end{equation}
$\nabla_a \phi$ denotes a discretized derivative, so that
\begin{equation}
\left( \nabla_a \phi \right)^2 (\mathbf{x}) = \sum_{j=1}^d \left(
\frac{\phi(\mathbf{x}+ a\hat{r}_j) - \phi(\mathbf{x})}{a} \right)^2,
\end{equation}
where $\hat{r}_1,\ldots,\hat{r}_d$ are unit vectors such that $a
\hat{r}_1,\ldots, a \hat{r}_d$ form the basis of the 
lattice $\Omega$. The ground state of $H$ is called the vacuum and denoted
$\ket{\mathrm{vac}}$. It can be interpreted as the state in which
there are no particles.  The covariance matrix
\begin{equation}
\label{propagator}
G(\mathbf{x}-\mathbf{y}) = \bra{\mathrm{vac}} \phi(\mathbf{x}) \phi(\mathbf{y})
\ket{\mathrm{vac}}
\end{equation}
is very useful in quantum field theory and is referred to as
the equal-time propagator. 
In the free theory ($\lambda_0 = 0$), the equal-time propagator, $G^{(0)}$, 
can be exactly evaluated.
Let $\Gamma$ be the momentum-space lattice
corresponding to $\Omega$:
\begin{equation}
\Gamma = \frac{2 \pi}{L} \mathbb{Z}_{\hat{L}}^d.
\end{equation}
Then,
\begin{equation}
\label{freepropagator}
G^{(0)}(\mathbf{x} - \mathbf{y}) = \sum_{\mathbf{p} \in \Gamma} \frac{1}{L^d}
\frac{1}{2 \omega(\mathbf{p})} e^{i \mathbf{p} \cdot (\mathbf{x} -
  \mathbf{y})},
\end{equation}
where
\begin{equation}
\omega(\mathbf{p}) = \sqrt{m_0^2 + \frac{4}{a^2} 
\sum_{j=1}^d \sin^2 \left( \frac{ a \mathbf{p} \cdot \hat{r}_j}{2} \right)}.
\end{equation}
Note that $\sum_{\mathbf{x} \in \Omega} a^d \to \int d^dx$ as $a \to 0$, and 
$\sum_{\mathbf{p} \in \Gamma} \frac{1}{L^d} \to \int \frac{d^d p}{(2 \pi)^d}$
as $L \to \infty$.

Let $H^{(0)}$ be the Hamiltonian of \eq{ham} with $\lambda_0 = 0$. One
can exactly solve for the spectrum of $H^{(0)}$ by rewriting it in
terms of creation and annihilation operators, $a_{\mathbf{p}}^\dag$ and
$a_{\mathbf{p}}$. 
For each $\mathbf{p} \in \Gamma$, let
\begin{equation}
a_{\mathbf{p}} = \sum_{\mathbf{x} \in \Omega} a^d 
e^{-i\mathbf{p} \cdot \mathbf{x}} 
\left[ \sqrt{\frac{\omega(\mathbf{p})}{2}} \phi(\mathbf{x}) 
+ i \sqrt{\frac{1}{2 \omega(\mathbf{p})}} \pi(\mathbf{x}) \right].
\label{a}
\end{equation}
One can verify that
\begin{eqnarray}
\phi(\mathbf{x}) & = & \sum_{\mathbf{p} \in \Gamma} \frac{1}{L^d} 
e^{i\mathbf{p} \cdot \mathbf{x}}
\sqrt{\frac{1}{2 \omega(\mathbf{p})}} \left( a_{\mathbf{p}} +
a^\dag_{-\mathbf{p}} \right) \,, 
 \\ \label{phidags}
\pi(\mathbf{x}) & = & -i \sum_{\mathbf{p}
  \in \Gamma} \frac{1}{L^d}  e^{i \mathbf{p} \cdot \mathbf{x}} 
\sqrt{ \frac{\omega(\mathbf{p})}{2}} \left( a_{\mathbf{p}} -
a^\dag_{-\mathbf{p}} \right) \,,
 \label{pidags} \\
H^{(0)} & = & \sum_{\mathbf{p} \in \Gamma} \frac{1}{L^d} \omega(\mathbf{p})
a_{\mathbf{p}}^\dag a_{\mathbf{p}} + E^{(0)} \id \,,
\label{h0dags} \\
\big[ a_{\mathbf{p}}, a_{\mathbf{q}}^\dag \big] & = & L^d
\delta_{\mathbf{p},\mathbf{q}} \id \,, \label{laddercom1} \\
\left[ a_{\mathbf{p}}, a_{\mathbf{q}} \right] & = & 0 \,, \label{laddercom2}
\end{eqnarray}
with $E^{(0)} = \sum_{\mathbf{p} \in \Gamma} \frac{1}{2} \omega(\mathbf{p})$. 
In the limit $L \to \infty$ we have $[ a_{\mathbf{p}}, a_{\mathbf{q}}^\dag ] \to
(2 \pi)^d \delta^{(d)}(\mathbf{p}-\mathbf{q})$, which is the standard
relation for bosonic creation and annihilation operators in quantum
field theory. 

Let $\ket{\mathrm{vac}(0)}$ denote the ground state of $H^{(0)}$
(``the free vacuum''). By \eq{h0dags}, \eq{laddercom1}, and
\eq{laddercom2}, one sees that $\ket{\mathbf{p}} \equiv 
L^{-d/2} a^\dag_{\mathbf{p}} \ket{\mathrm{vac}(0)}$ 
is an eigenstate of $H^{(0)}$ of unit
norm\footnote{
The normalization
$\braket{\mathbf{p}_1,\dotsc,\mathbf{p}_n}{\mathbf{p}_1,\dotsc\mathbf{p}_n}=1$
is more convenient here than the standard relativistic normalization.
} 
with energy
$\omega(\mathbf{p})$. This state can be interpreted as containing a
single particle of momentum $\mathbf{p}$.  Applying additional
creation operators with sharply defined momentum yields an eigenstate
of $H^{(0)}$ whose energy is simply the sum of the energies of the
particles created. In this way, one obtains the entire spectrum of
this non-interacting theory. A perfectly momentum-resolving particle
detector can be modeled by the observable $L^{-d} a_{\mathbf{p}}^\dag
a_{\mathbf{p}}$, whose eigenvalues are integers that count how many
particles are in momentum mode $\mathbf{p}$.

In accordance with the uncertainty principle, $L^{-d/2} a^\dag_{\mathbf{p}}
\ket{\mathrm{vac}(0)}$ represents a particle completely delocalized
in space.  The state $\phi(\mathbf{x}) \ket{\mathrm{vac}(0)}$ is
interpreted as a single particle localized at $\mathbf{x}$ (up to
normalization). Because $a_{\mathbf{p}} \ket{\mathrm{vac}(0)} = 0$,
$\phi(\mathbf{x}) \ket{\mathrm{vac}(0)} = a_{\mathbf{x}}^\dag
\ket{\mathrm{vac}(0)}$, where
\begin{equation}
\label{axdag}
a_{\mathbf{x}}^\dag = \sum_{\mathbf{p} \in \Gamma} \frac{1}{L^d} e^{-i
  \mathbf{p} \cdot \mathbf{x}} \sqrt{\frac{1}{2 \omega(\mathbf{p})}}
a_{\mathbf{p}}^\dag \,.
\end{equation}
In the nonrelativistic limit $m_0 \gg |\mathbf{p}|$, the factor of
$\frac{1}{\omega(\mathbf{p})}$ in \eq{axdag} becomes approximately
constant, and one recovers the familiar nonrelativistic Fourier
relation between position-space and momentum-space wavefunctions. 

We are able to solve for the spectrum of $H^{(0)}$ in terms of
non-interacting particles with sharply defined momentum. In an
interacting quantum field theory, the momentum of individual particles
is no longer conserved. Nevertheless, total momentum is conserved, and
thus the single-particle subspace can be analyzed similarly to the
non-interacting case. Specifically, by starting with the
single-particle momentum-$\mathbf{p}$ eigenstate of the non-interacting 
theory, $L^{-d/2} a_{\mathbf{p}}^\dag \ket{\mathrm{vac}(0)}$, and then 
adiabatically 
turning on $\lambda_0$, one obtains an eigenstate of $H$, which can be
interpreted as a single particle of momentum $\mathbf{p}$ of the
interacting theory.

\subsubsection{Continuum Limit}
\label{continuum}

In the preceding section we described a lattice field theory. The
quantum mechanics of such a system is mathematically well
defined. However, a lattice theory lacks translational, rotational,
and Lorentz invariance. Thus one is led to try to construct a
continuum limit of a sequence of lattice theories with successively
finer spacing, much as one obtains integrals by a sequence of more
finely discretized Riemann sums.

The naive attempt at constructing a continuum limit in which one simply 
takes $a \to 0$ in the definition of the lattice Hamiltonian fails to yield
convergent answers to physical questions. Instead, one should consider
a sequence of Hamiltonians of the form \eq{ham} on successively
finer lattices, where $m_0^2$ and $\lambda_0$ are functions of $a$.
One aims to show that, for a suitable choice of the $a$ dependence of
$m_0^2$ and $\lambda_0$, the entire theory converges to some 
meaningful limit as $a \to 0$. This is known as renormalization. 

Although the existence of a continuum limit of a sequence of
(Euclidean) $\phi^4_{2,3}$ lattice theories has been shown rigorously
in \cite{Sokal}, all that has been demonstrated for most physically
interesting quantum field theories is perturbative renormalizability,
namely, that physical quantities are finite to all orders in
perturbation theory.  Physical (renormalized) quantities are then
calculated as a perturbative series in the coupling.  In our analysis
of weak coupling, we use expressions obtained from perturbation
theory.  Specifically, in \sect{sec:renorm}, we calculate the physical
mass $m$ up to second order in $\lambda_0$. Inverting this
expression and the analogous one for $\lambda$ yields
prescriptions for  choosing the bare parameters $m_0$ and $\lambda_0$
as functions of $a$ to achieve given physical parameters $\lambda$
and $m$ (corresponding to what is measured by experiment).

For the purpose of simulation, in particular the analysis of
algorithmic complexity, one must ask not only whether the
sequence of Hamiltonians converges to a continuum limit, but also how
quickly it converges to that limit. 
Such questions are addressed in \sect{sec:a}.
Similarly, the rate of convergence to an infinite-volume limit is
studied in \sect{sec:vol}.

\subsubsection{Phase Transitions}
\label{QPT}

Consider the continuum $\phi^4$ theory in the infinite-volume limit
in $D=2$ or $3$ spacetime dimensions.
At certain values of the parameters $m_0^2$ and $\lambda_0$,
the particle mass $m$ vanishes. These points form a critical curve,
across which a quantum (zero-temperature) phase transition occurs,
corresponding to the spontaneous breaking of the $\phi \to -\phi$ symmetry 
of the theory.

The existence of such a phase transition was shown rigorously
in \cite{Glimm:1974tz,Guerra:1975ym,McBryan:1976ga}.
As the system approaches it, thermodynamic functions and correlation 
functions exhibit power-law behavior, as is characteristic of
a second-order phase transition. In particular, for constant $m_0^2$, 
\begin{equation}
\label{numass}
m \sim |\lambda_0 - \lambda_c|^\nu \,,
\end{equation}
where $\lambda_c$, the critical value of the coupling, depends on $m_0^2$. 
The critical exponent $\nu$ depends only on the dimension
and is bounded below: $\nu \geq 1/2$ \cite{Glimm:1974tz,McBryan:1976ga}. 

Empirically, it has been found that systems with second-order phase
transitions can be classified into universality classes.
Within each class, critical exponents are universal, taking the same
values for all systems, even though the systems may be vastly different
in their microscopic interactions. (That the critical behavior depends
only on the symmetries and the spatial dimensionality of the 
Hamiltonian is explained by the concept of the renormalization
group.) The $\phi^4$ theory is believed to be in the same
universality class as the Ising model, for which
\begin{equation}
\label{nu}
\nu = \left\{ \begin{array}{ll} 1 \,, & D=2 \,,
\\
0.63\ldots \,, & D=3 \,.
\end{array} \right.
\end{equation}
The value above for $D=3$ has also been obtained directly in the $\phi^4$
theory by Borel resummation \cite{LeGuillou:1977ju}.

In general, the non-perturbative regime, in which the coupling is 
sufficiently strong that perturbation theory fails (and hence also
called the strong-coupling regime), is expected to occur in the vicinity 
of the phase transition, with the inter-particle force maximum at the 
critical value of the coupling. Correspondingly, the dimensionless
ratio $\lambda/m^{4-D}$ ($D=2,\,3$) will become large: indeed, this
is consistent with the mass shrinking to zero and the coupling 
approaching a non-trivial infrared fixed point. 

In $D=4$ dimensions, in contrast, the believed triviality of the 
continuum $\phi^4$ theory implies that there is no non-trivial fixed point
of the renormalization group and hence no phase transition as one varies
($m_0^2$, $\lambda_0$). Moreover, triviality places bounds on the 
maximum value of the renormalized coupling \cite{Luscher:1987ay}.
In particular, strong coupling requires $p a$ to be  $O(1)$: 
in the continuum-like regime, renormalized perturbation theory
should be valid.

\subsection{Quantum Computing}
\label{QC}

The quantum circuit model is a convenient framework for describing
quantum computation.\footnote{An introduction to the basic notions of
  quantum circuits can be found in \cite{Nielsen_Chuang}.} Quantum
circuits are in many respects analogous to classical circuits. In
classical circuits, logic gates operate on bits, each in the state
$0$ or $1$. There exist universal sets of gates (on their own able to
implement any Boolean function): examples are  $\{\mathrm{AND},
\mathrm{OR}, \mathrm{NOT} \}$ and $\{\mathrm{NAND} \}$.  Similarly, in
quantum circuits, quantum gates operate unitarily on qubits,  each of
which is in a linear superposition of two basis states ($\ket{0}$ and
$\ket{1}$). Arbitrary single-qubit and controlled-NOT gates together
can perform any unitary operation. Furthermore, there exist
small\footnote{There exist two-qubit gates that are universal by
  themselves, but the standard universal gate set consists of three
  gates: controlled-NOT, Hadamard, and $\pi/8$ \cite{Nielsen_Chuang}.}
sets of gates that are universal, in the sense that they can
approximate any unitary operation to arbitrary accuracy.

In the following subsections we review the standard primitives from
the theory of quantum circuits that are used in our algorithm. The
number of gates needed to implement a unitary transformation is used
as a measure of its running time, although the actual running time might
be much shorter if the computation can be highly parallelized. 

\subsubsection{Quantum Fourier Transform}

For any $f:\{0,1,2,\ldots,N-1\} \to \mathbb{C}$, the function 
$\tilde{f}:\{0,1,\ldots,N-1\} \to \mathbb{C}$ given by
\begin{equation}
\tilde{f}(k) = \frac{1}{\sqrt{N}} \sum_{j=0}^{N-1} e^{-2 \pi i j k/N} f(j)
\end{equation}
is called the discrete Fourier transform of $f$. The linear
transformation $f \to \tilde{f}$ on the $N$-dimensional complex vector space
of all functions $f:\{0,1,2,\ldots,N-1\} \to \mathbb{C}$ is
unitary. It can be implemented on the amplitudes of an arbitrary state 
with $O(\log^2 N \log \log N \log \log \log N)$ 
quantum gates \cite{Shor_factoring}.

\subsubsection{Reversible Circuits and Phase Kickback}
\label{reversible}

Because the quantum time evolution of a closed system is unitary,
information cannot be erased in a quantum circuit. Suppose we are
given a classical circuit computing some function
$f:\{0,1\}^n \to \{0,1,2,\ldots,M-1\}$. (We think of the output as an
integer, although it is of course represented as a
string of $m = \lceil \log_2 M \rceil$ bits, called a ``register''.) 
Unless $f$ is injective,
there is no unitary operator $V_f$ such that $V_f \ket{x} =
\ket{f(x)}$. However, one can always define a unitary operator $U_f$
such that, for any $y \in \{0,1,2,\ldots,M-1\}$, 
\begin{equation}
U_f \ket{x} \ket{y} = \ket{x} \ket{(y + f(x)) \!\!\! \mod M} \,.
\end{equation}
Furthermore, the number of quantum gates needed to implement this unitary 
operator can never exceed the number of classical gates needed to compute $f$
by more than a constant factor \cite{FT82, Nielsen_Chuang}. If we
initialize the output qubits to the zero state, then applying $U_f$
yields $f(x)$ written into the values of the bits:
\begin{equation}
U_f \ket{x} \ket{0} = \ket{x} \ket{f(x)} \,.
\end{equation}
If we instead prepare the output qubits in the state
\begin{equation}
\ket{R_M} = \frac{1}{\sqrt{M}} \sum_{y = 0}^{M-1} e^{-i 2 \pi y/M} 
\ket{y} \,,
\end{equation}
then applying $U_f$ ``kicks back'' $f(x)$ into the
phase~\cite{Cleve_revisit}, that is,
\begin{equation}
U_f \ket{x} \ket{R_M} = e^{i 2 \pi f(x)/M} \ket{x} \ket{R_M} \,.
\end{equation}
The state $\ket{R_M}$ can be prepared efficiently by means of a
quantum Fourier transform.

\subsubsection{Phase Estimation}
\label{phase_estimation}

Consider a quantum circuit implementing a unitary transformation $U$ 
on $n$ qubits, and an eigenstate $\ket{\theta}$ such that $U \ket{\theta} =
e^{i \theta} \ket{\theta}$. One can use quantum Fourier transforms to measure
$\theta$ to $m$ bits of precision using $O(m^2 \log m \log \log m +
T)$ quantum gates, where $T$ is the number of gates needed to
implement $U^{2^m}$ \cite{Kitaev95}. 
Applying the phase estimation circuit 
to an arbitrary state yields an approximate measurement in the
eigenbasis of $U$. Eigenvalues closer together than $\sim 2^{-m}$ are not
distinguished. In many physical applications, one wishes to measure in
the eigenbasis of a Hermitian operator $Q$. The eigenbasis of $Q$ is
the same as the eigenbasis of the unitary operator $e^{iQt}$. Thus, in
cases where one can implement the unitary transformation $e^{iQt}$ by
an efficient quantum circuit, the problem of measuring $Q$ reduces to
the standard phase-estimation problem. More discussion of the
implementation of $e^{iQt}$ is given in the next section.

\subsubsection{Simulating Hamiltonian Time Evolution}
\label{hamiltonian_simulation}

Given a Hamiltonian, we wish to find a quantum circuit of
few gates implementing the corresponding unitary time evolution. The
details of how to do this depend on the specifics of the
Hamiltonian. Thus for concreteness we specifically consider 
the Hamiltonian $H$ defined in \eq{ham}. The simulation method
described here was introduced in \cite{Zalka, Wiesner}.

To simulate time evolution according to $H$, we decompose $H$ as
\begin{eqnarray}
H & = & H_{\pi} + H_{\phi} \,, \\
H_{\pi} & = & \frac{1}{2} \sum_{\mathbf{x} \in \Omega} a^d \pi(\mathbf{x})^2 
\,,\\
H_{\phi} & = & \sum_{\mathbf{x} \in \Omega} a^d \left[ \frac{1}{2} 
\left( \nabla_a \phi \right)^2 (\mathbf{x}) + \frac{m_0^2}{2} \phi(\mathbf{x})^2 
+ \frac{\lambda_0}{4 !} \phi(\mathbf{x})^4 \right] 
\,.
\end{eqnarray}
$e^{-iH_{\phi} \delta t}$ acts on the computational basis states simply by
inducing a phase. This phase (namely, $\sum_{\mathbf{x} \in \Omega} a^d
\left[ \frac{1}{2} \left( \nabla_a   \phi \right)^2 (\mathbf{x}) +
  \frac{m_0^2}{2} \phi(\mathbf{x})^2  + \lambda_0 \phi(\mathbf{x})^4
  \right]$) is not hard to compute; on a classical computer one could
compute it using $O(\mathcal{V})$ gates. Thus, using the method of
phase kickback, one can implement $e^{-iH_{\phi}\delta t}$ for any $\delta t$ 
using $O(\mathcal{V})$ quantum gates (\sect{reversible}).

Similarly, we can simulate $e^{-i
  H_{\pi} \delta t}$ for any $\delta t$ by first Fourier transforming
each of the $\mathcal{V}$ registers. (Because each register contains
only logarithmically many qubits (\sect{qubits}),
these Fourier transforms use logarithmically many quantum gates each,
and thus they make only a small contribution to the overall
complexity of the algorithm.) This brings the state into the
eigenbasis of the complete set of commuting observables $\{ \pi(\mathbf{x}) 
| \mathbf{x} \in \Omega \}$. In this basis, $e^{-i H_{\pi} \delta t}$ simply 
induces a phase 
$\sum_{\mathbf{x} \in \Omega} a^d \pi(\mathbf{x})^2 \delta t$. 
Computing this phase has complexity $O(\mathcal{V})$, and therefore
phase kickback implements $e^{-i H_{\pi} \delta t}$ using
$O(\mathcal{V})$ quantum gates. Afterwards, inverse Fourier
transforms bring the system back into the computational basis, which
is the eigenbasis of $\{ \phi(\mathbf{x}) | \mathbf{x} \in \Omega \}$.

Given the ability to simulate $e^{-i H_{\phi} \delta t}$ and $e^{-i
  H_{\pi} \delta t}$, one can simulate $e^{-i (H_{\phi} + H_{\pi}) t}$
using Trotter's formula \cite{Lloyd_science},
\begin{equation}
e^{-i (H_{\phi} + H_{\pi}) t} = \left( e^{-i H_{\phi} t/n} e^{-i H_{\pi}
  t/n} \right)^n + O(t^2 \| [H_{\pi}, H_{\phi} ] \| /n) \,.
\end{equation}
Thus, to achieve error $\epsilon_{\mathrm{ST}}$ one alternatingly performs time
evolutions of $H_{\phi}$ and $H_{\pi}$ with $n \sim t^2 \| [H_{\pi},
  H_{\phi}] \|/\epsilon_{\mathrm{ST}}$ steps. Each step requires $O(\mathcal{V})$
gates. $\| [H_{\pi}, H_{\phi}]\| = O(\mathcal{V})$ because of
locality. Thus, the total complexity is $O(t^2 \mathcal{V}^2)$.

Fortunately, one can systematically construct higher-order Suzuki-Trotter
formulae that yield better scaling in $t$.  
Using a $k\th$-order Suzuki-Trotter formula, one can simulate
time evolution for time $t$ on a lattice of $\mathcal{V}$ sites using
a quantum circuit of $O((t\mathcal{V})^{1+\frac{1}{2k}})$ gates
(\sect{Trotter}). Note
that our analysis of the $\mathcal{V}$ scaling of quantum circuits to
simulate spatially local Hamiltonians on large lattices is, to our
knowledge, novel. This may be of independent interest for the
simulation of lattice Hamiltonians arising, for example, in
condensed-matter physics.

\subsubsection{Adiabatic State Preparation}
\label{ASP}

Preparing an arbitrary state on $n$ qubits requires exponentially many
quantum gates \cite{Knill}.  Even the restricted problem of preparing 
ground states of local Hamiltonians is, for worst-case instances, 
intractable for quantum computers. (More precisely, this problem is
QMA-complete \cite{Kitaev_book, Kempe_Regev, Kempe}.) However, for certain 
Hamiltonians one can solve the problem of preparing the ground state 
using polynomially many gates by simulating adiabatic time evolution, 
that is, time evolution according to a Hamiltonian with slowly 
time-varying parameters. 

Let $H_{\mathrm{init}}$ be a Hamiltonian whose ground state is easy to
prepare. Suppose there exists some Hamiltonian $H(s)$ such that $H(0)
= H_{\mathrm{init}}$ and $H(1) = H_{\mathrm{final}}$, and let
$\ket{\psi(s)}$ be the ground state of $H(s)$. According to the
adiabatic theorem, by starting with $\ket{\psi(0)}$, and evolving
according to the time-dependent Hamiltonian $H(t/\tau)$ for
sufficiently large $\tau$, one obtains at time $\tau$ a state
approximately equal to $\ket{\psi(1)}$. 
Let $\gamma$ be the energy gap between the ground state and first excited
state.
Quantitative versions of the
adiabatic theorem have been proven that show it always suffices to
choose $\tau = O(1/\gamma^3)$ \cite{Ruskai, Goldstone}, and in certain
cases it suffices to choose $\tau = O(1/\gamma^2)$ \cite{Messiah}.

For efficient state preparation within the quantum circuit model, 
it suffices if $\gamma(s)$ is not too small and time evolution 
governed by $H(s)$ can be simulated efficiently. 

\section{Quantum Algorithm}
\label{algorithm}

The set of field operators $\{\phi(\mathbf{x})| \mathbf{x} \in \Omega\}$ 
forms a complete set of commuting observables. We represent the state 
of our lattice field theory by devoting one register of qubits to store the
value of the field at each lattice point. 
One can represent a quantum field at energy scale $E$ with fidelity
$1-\epsilon$ to the exact state, with each register consisting of 
$O\left( \log \left( \frac{E \mathcal{V}}{\epsilon} \right)
\right)$ qubits (\sect{qubits}).

Schematically, our quantum algorithm proceeds as follows.
\begin{enumerate}
\item Use the method of Kitaev and Webb \cite{Kitaev_Webb} to prepare
  the ground state of the free theory ($\lambda_0 = 0$).
\item Excite wavepackets of the free theory.
\item Evolve for a time $\tau$ during which the interaction is
  adiabatically turned on. This yields wavepackets of the interacting
  theory.
\item Evolve for a time $t$ in which scattering occurs.
\item Perform measurements. 
There are two possible methods, described below.\\
i) Evolve for a time $\tau$ during which the interaction is
  adiabatically turned off. This brings us back to the free theory,
  where interpreting field states in terms of particles is
  straightforward.
  Then, measure the number operators of the momentum modes of the free theory.
  This method is suitable if there are no outgoing bound states.\\
ii) Choose small regions corresponding to the positions of localized 
  detectors, and measure the total energy operator in each one via phase 
  estimation.

\end{enumerate}


\subsection{Description and Discussion}
\label{details}

We next discuss each step of the quantum algorithm in more detail. The
full analysis of their complexity scaling occupies \sect{analysis}.

Instead of $\{\phi(\mathbf{x})| \mathbf{x} \in \Omega\}$,
one could use $\{a^\dag_{\mathbf{p}} a_{\mathbf{p}} | \mathbf{p} 
\in \Gamma\}$ as the
complete set of commuting observables. In this case, the qubits store
the occupation numbers of momentum modes. However,  
simulations in the field representation seem to be more efficient than
simulations in the occupation-number representation. The primary
reason for this is that, upon expanding the $\phi^4$ operator in terms
of creation and annihilation operators, one obtains an expression that
is nonlocal in momentum space. This makes the simulation of $e^{-iHt}$
using Suzuki-Trotter formulae somewhat inefficient (although still
polynomial-time). Therefore, throughout this paper we consider only
simulations based on the field representation.

\begin{enumerate}

\item 
Improving upon the efficiency of earlier, more general, state-construction 
  methods \cite{Zalka, Grover_Rudolph}, Kitaev and Webb
  developed a quantum algorithm for constructing multivariate Gaussian
  superpositions \cite{Kitaev_Webb}.
  The main idea of this algorithm is to prepare a $\mathcal{V}$-dimensional
  multivariate Gaussian wavefunction with a diagonal covariance matrix, 
  and then reversibly change basis to obtain the desired covariance matrix.
  For large $\mathcal{V}$, the dominant cost in Kitaev and Webb's
  method 
  is the computation of the 
  $\mathbf{L}\mathbf{D}\mathbf{L}^T$ decomposition of the
  inverse covariance matrix, where $\mathbf{L}$ is a unit lower-triangular 
  matrix, and $\mathbf{D}$ is a diagonal matrix. This can be done in
  $\tilde{O}(\mathcal{V}^{2.376})$ time with established classical
  methods \cite{Bunch, Coppersmith}.\footnote{Recent work~\cite{Williams}
  on the complexity of matrix multiplication has lowered the bound
  on the exponent from 2.376 \cite{Coppersmith} to 2.373. In fact, a 
  long-standing conjecture is that this exponent is 2.} 
 (The notation $f(n) = \tilde{O}(g(n))$ means $f(n) = O(g(n) \log^c(n))$ 
  for some constant $c$.)
  The computation of the matrix elements of the
  covariance matrix itself is easy because, for large $V$, the sum
  \eq{freepropagator} is well approximated by an easily evaluated integral.

\item The expression $\phi(\mathbf{x}) \ket{\mathrm{vac}}$ for a
  single-particle state is not directly useful for quantum computing
  because $\phi(\mathbf{x})$ is not unitary. Instead, given a
  position-space wavefunction $\psi$ (with normalization
  $\sum_{\mathbf{x} \in \Omega} a^d |\psi(\mathbf{x})|^2 = 1$ so that it
  matches the dimensions of a continuum wavefunction) let
\begin{equation}
\ket{\psi} = a_{\psi}^\dag \ket{\mathrm{vac}(0)} \,,
\end{equation}
where
\begin{equation}
a_{\psi}^\dag = \eta(\psi) \sum_{\mathbf{x} \in \Omega} a^d \psi(\mathbf{x}) a_{\mathbf{x}}^\dag \,,
\end{equation}
$a_{\mathbf{x}}^\dag$ is defined in \eq{axdag}, and $\eta(\psi)$
is the normalization factor that ensures $a_{\psi} a_{\psi}^\dag
\ket{\mathrm{vac}(0)} = \ket{\mathrm{vac}(0)}$. Introduce one
ancillary qubit, and let 
\begin{equation}
\label{hpsi}
H_\psi = a_\psi^\dag \otimes \ket{1}\bra{0} + a_\psi \otimes \ket{0}
\bra{1}.
\end{equation}
One can easily verify that the span of $\ket{\mathrm{vac}(0)} \ket{0}$ and
$\ket{\psi} \ket{1}$ is an invariant subspace, on which $H_\psi$
acts as
\begin{eqnarray}
H_\psi \ket{\mathrm{vac}(0)} \ket{0} & = & \ket{\psi} \ket{1} \,, \\
H_\psi \ket{\psi} \ket{1} & = & \ket{\mathrm{vac}(0)} \ket{0}.
\end{eqnarray}
Thus
\begin{equation}
e^{-i H_\psi \pi/2} \ket{\mathrm{vac}(0)} \ket{0} = -i \ket{\psi} \ket{1}.
\end{equation}
Hence, by simulating a time evolution according to the Hamiltonian
$H_\psi$, we obtain the desired wavepacket state $\ket{\psi}$, up to an
irrelevant global phase and extra qubit, which can be discarded. After
rewriting $H_\psi$ in terms of the operators $\phi(\mathbf{x})$ and 
$\pi(\mathbf{x})$, one sees that simulating $H_\psi$ is a very similar task 
to simulating $H$ and can be done with the same techniques,
described in \sect{hamiltonian_simulation} and \sect{Trotter}. 

For a spatially localized wavepacket, that is, $\psi$ with bounded support,
$H_\psi$ is a quasilocal operator (\sect{locality}): 
it can be written in the form
$\sum_{\mathbf{x}} \left[ f_\psi(\mathbf{x}) \pi(\mathbf{x})  +
  g_\psi(\mathbf{x}) \phi(\mathbf{x}) \right]$, where $f$ and $g$ decay 
exponentially with characteristic decay length $1/m_0$ outside the support 
of $\psi$. One can thus choose some $c_1 \gg 1$ and set $f$ and $g$ to zero 
outside a distance $c_1/m_0$ from the support of $\psi$. The resulting
operator will be fully local and an exponentially good approximation
to $H_\psi$. The time evolution according to this local approximation
to $H_\psi$ can be simulated with complexity independent of
$V$. Furthermore, this quasilocality shows that, to create wavepackets
of additional particles, we can simply repeat this procedure with
different, well separated, choices of the position-space wavefunction
$\psi$. The only errors introduced at this step are due to the finite 
separation distance $\delta$ between wavepackets and are of order 
$\epsilon_{\mathrm{loc}} \sim e^{-\delta/m}$. 
(However, our wavepackets have a constant spread in momentum and thus
differ from the idealization of particles with precisely defined momenta.)
The wavepacket preparation thus has 
complexity scaling linearly with $n_{\mathrm{in}}$, the number of 
particles being prepared, and necessitates a dependence 
$V \sim n_{\mathrm{in}} \log(1/\epsilon_{\mathrm{loc}})$.

\item  To obtain a wavepacket of the interacting theory, we start with
  the wavepacket of the free theory, constructed in the previous step,
  and then simulate adiabatic turn-on of the interaction.\footnote{In
  this paper, we do not consider the simulation of processes involving
  incoming bound states.} 
  An adiabatic process of time $\tau$ can be simulated by a quantum
  circuit of $O\big((\tau \mathcal{V})^{1+\frac{1}{2k}}\big)$ gates 
  implementing a $k\th$-order Suzuki-Trotter formula (\sect{Trotter}).

For $0 \leq s \leq 1$, let 
\begin{equation}
\label{HS}
H(s) = \sum_{\mathbf{x} \in \Omega} a^d \left[ \frac{1}{2} \pi(\mathbf{x})^2 
+ \frac{1}{2} (\nabla_a \phi)^2 (\mathbf{x}) + \frac{1}{2} m_0^2(s)
  \phi(\mathbf{x})^2 + \frac{\lambda_0(s)}{4!} \phi(\mathbf{x})^4 \right]
\end{equation}
with $\lambda_0(0) = 0$. If we started with an eigenstate of the free
theory $H(0)$ and applied the time-dependent Hamiltonian $H(t/\tau)$ 
for time $\tau$ then, by the adiabatic theorem, we would obtain a
good approximation to the corresponding eigenstate of $H(1)$, provided
$\tau$ was sufficiently large. However, a wavepacket is a superposition
of eigenstates with different energies. These acquire different phases
during the adiabatic state preparation. Physically, this means the
wavepackets propagate and broaden. Propagation during adiabatic state
preparation is undesirable because the particles being prepared could
collide and scatter prematurely. Broadening of wavepackets decreases the
efficiency of the algorithm because very diffuse wavepackets largely
pass through each other without significant scattering. (For diffuse
wavepackets, the expected distance between particles is large even
when the wavepackets are on top of each other.) In this case, many
repetitions of the simulation are necessary before we observe
interesting scattering events. 

We can correct the dynamical phases by interspersing the adiabatic state 
preparation with backward time evolutions, thereby counteracting the
propagation and broadening of wavepackets. Specifically, we
follow an adiabatic path $H(s)$ from $s=0$ to $s=1$. To undo the
dynamical phase, we divide the total adiabatic evolution into $J$
steps. Before and after each step we apply time-independent
Hamiltonians as follows. For $j=0,1,\ldots,J-1$, let
\begin{eqnarray}
M_j & = & \exp \left[ i H \left( \frac{j+1}{J} \right) \frac{\tau}{2J}
  \right] U_j \exp \left[ i H \left( \frac{j}{J} \right)
  \frac{\tau}{2J} \right] \,, \label{Mj} \\
U_j & = & T \left\{ \exp \left[ -i \int_{j/J}^{(j+1)/J} H(s) \tau ds
  \right] \right\} \,, \label{Uj}
\end{eqnarray}
where $T\{ \cdot \}$ indicates the time-ordered product. ($U_j$ is the 
unitary time evolution induced by $H(t/\tau)$ from
$t=\frac{j\tau}{J}$ to $t=\frac{(j+1)\tau}{J}$.) The full
state-preparation process is given by the unitary operator
\begin{equation}
M = \prod_{j=0}^{J-1} M_j \,.
   \label{Mprod}
\end{equation}
We suppress the dynamical phases by choosing $J$ to be sufficiently 
large. The choice of a suitable ``path'' $\lambda_0(s),m_0^2(s)$ and
the complexity of this state-preparation process depend in a
complicated manner on the parameters in $H$ (\sect{preparing}). 

\item The time evolution $e^{-iH(1)t}$ can be implemented with
  $O\big((t\mathcal{V})^{1+\frac{1}{2k}}\big)$
  gates via a $k\th$-order Suzuki-Trotter formula (\sect{Trotter}).

\item In method i), the adiabatic turn-off of the coupling is simply the
  time-reversed version of the adiabatic turn-on. 
The complexity is no higher than that of the adiabatic state
preparation.\footnote{Readers of \sect{preparing} may notice that, in the
strongly coupled case, the matrix element for particle splitting,
which appears in the numerator of the diabatic error, gets multiplied
by $n_{\mathrm{out}}$. However, this is more than compensated by the
smaller momentum and hence larger energy gap against splitting.}

  By the method of phase estimation, 
  measurement of the number operator
  $L^{-d} a_{\mathbf{p}}^\dag a_{\mathbf{p}}$ reduces to the problem of
  simulating $e^{i L^{-d} a_{\mathbf{p}}^\dag a_{\mathbf{p}} t}$ for various 
  $t$. Thus,
  for a given $\mathbf{p}$, $L^{-d} a_{\mathbf{p}}^\dag a_{\mathbf{p}}$ 
  can be measured with $O \big( \mathcal{V}^{2+\frac{1}{2k}} \big)$ 
  quantum gates via a $k\th$-order Suzuki-Trotter formula
  (\sect{measurement}). Furthermore, if we instead simulate localized 
  detectors, the computational cost becomes independent of $V$ 
  (much as the computational cost of
  creating local wavepackets is independent of $V$ (\sect{locality})),
  but the momentum resolution becomes lower, as dictated by the 
  uncertainty principle. 

  Details of method ii) are given in \sect{detectors}.

\end{enumerate}

The allowable rate of adiabatic increase of the coupling constant
during state preparation is determined by the physical mass of the
theory. In the weakly coupled case, this can be calculated perturbatively, 
as is done in \sect{sec:mass}. In the strongly coupled case, it is not known
how to do the calculation. Thus one is left with the problem 
of determining how fast one can perform the adiabatic state
preparation without introducing errors. Fortunately, one can easily
calculate the mass using a quantum computer, as follows. First, one
adiabatically prepares the interacting vacuum state at some small
$\lambda_0$ and measures the energy of the vacuum using phase
estimation. 
The speed at which to increase $\lambda_0$ can be chosen
perturbatively for this small value of $\lambda_0$. Next, one
adiabatically prepares the state with a single zero-momentum particle
at the same value of $\lambda_0$ and measures its energy using phase
estimation. Taking the difference of these values yields the physical mass. 
This value of the physical mass provides guidance as to the speed of
adiabatic increase of the coupling to reach a slightly higher
$\lambda_0$. Repeating this process for successively higher
$\lambda_0$ allows one to reach strong coupling, while always having an
estimate of mass by which to choose a safe speed for adiabatic state
preparation. In addition, mapping out the physical mass as a function
of bare parameters (hence, for example, mapping out the phase diagram)
may be of independent interest.


\subsection{Efficiency}
\label{efficiency}

In complexity theory, the efficiency of an algorithm is judged by
how its computational demands scale with the problem size or some other
quantity associated with the problem's intrinsic difficulty.
An algorithm with polynomial-time asymptotic scaling is considered to
be feasible, whereas one with super-polynomial (typically, exponential)
scaling is considered infeasible. This classification has proved to
be a very useful guide in practice. The results stated below can
be roughly summarized as follows: the calculation of quantum
field-theoretical scattering amplitudes at high precision or strong
coupling is infeasible on classical computers but feasible on quantum
computers.

Traditional calculations of QFT scattering amplitudes rely upon 
perturbation theory, namely, a series expansion in powers of the coupling
(the coefficient of the interaction term), which is taken to be small. 
A powerful and intuitive way of organizing this perturbative expansion is 
through Feynman diagrams, in which the number of loops is associated with 
the power of the coupling. A reasonable measure of the computational
complexity of perturbative calculations is therefore the number of
Feynman diagrams involved. The number of diagrams is determined by
combinatorics and grows factorially with the number of loops and the
number of external particles. If the coupling constant is
insufficiently small, the perturbation series does not yield correct
results. There are then no feasible classical methods for calculating
scattering amplitudes, although lattice field theory can be used to
obtain static quantities, such as mass ratios. 
For computing the time evolution of the state vector on a classical 
computer non-perturbatively, no better method is known than discretizing the 
field and solving the Schr\"odinger equation numerically. This method is
infeasible because the dimension of the Hilbert space is exponentially large.

Even at weak coupling, the perturbation series is not
convergent, but only asymptotic: the error in the $N\th$-order sum
$\sum_{k=1}^N b_k g^k$ satisfies
\begin{equation}
\bigg| {\cal M} - \sum_{k=1}^N b_k g^k\bigg| = O(g^{N+1}) \,\,\,  
\mathrm{as} \,\, g \rightarrow 0 \,.
\end{equation}
Since the coefficients grow as \cite{Lipatov:1976ny,Brezin:1976vw} 
\begin{equation}
b_k \sim k ! c_1^k k^{c_2} \,,
\end{equation}
for some constants $c_1$ and $c_2$, 
there is a maximum possible precision, corresponding to truncation of
the series around the $(1/g)\th$ term. 

Consequently, our quantum algorithm should have an advantage in the
non-perturbative regime, or if very high precision is required. Thus,
we analyze the asymptotic scaling of our quantum algorithm when simulating
weakly coupled theories at arbitrarily high precision and 
strongly coupled theories arbitrarily close to the quantum phase
transition.  In the strongly coupled case, we also consider the scaling as
a function of the momenta of the incoming particles. As the energy of
the incoming particles becomes larger, the maximum number of kinematically
allowed outgoing particles correspondingly increases, thereby making the
problem potentially more computationally difficult. In the weakly
coupled case, processes producing large numbers of outgoing particles
are suppressed even at high energy, because they arise only at high
order in perturbation theory. 

In the weakly coupled case, we wish to determine the complexity of our
algorithm as a function of precision. We quantify this by demanding
that any standard physical quantity $\sigma$ extracted from the
simulation (for example, a scattering cross section) satisfy
\begin{equation}
\label{epsiloncrit}
(1-\epsilon) \sigma_{\mathrm{exact}} \leq \sigma \leq (1+\epsilon)
\sigma_{\mathrm{exact}} \,.
\end{equation}

To analyze the scaling of our algorithm with $\epsilon$, we first
consider errors due to spatial discretization. The effect of spatial
discretization is captured by (infinitely many) additional terms in
the effective Hamiltonian (\sect{sec:a}). Truncation of these terms
alters the calculated probability of scattering events. In particular,
the two dominant extra terms in the effective Hamiltonian are $\sum_i
\phi \partial_i^4 \phi \equiv \phi \partial_{\mathbf{x}}^4 \phi$ 
and $\phi^6$ terms, arising from
discretization of $(\nabla_a \phi)^2$ and quantum effects,
respectively. The coefficient of the 
$\phi \partial_{\mathbf{x}}^4 \phi$
term is $O(a^2)$, and the coefficient of the $\phi^6$ term is
$O(a^{5-d})$, so that the former dominates for $d=1,2$, whereas the
latter makes a comparable contribution for $d=3$. (To improve the
scaling, one can use better finite differences to approximate the
derivative or include the $\phi^6$ operator. However,
renormalization and mixing of the coefficients make this idea more
complicated than it is in standard numerical analysis.)

We now describe the errors induced by neglecting the effective
$\phi \partial_{\mathbf{x}}^4 \phi$ term. The analysis of $\phi^6$ errors is
analogous. Let $\sigma$ be a scattering cross section induced by the
Hamiltonian $H$ of \eq{ham}, and let $\sigma'$ be the corresponding
scattering cross section induced by
\begin{equation}
H' = H - \sum_{\mathbf{x} \in \Omega} a^d \frac{\tilde{c}}{2} 
\phi \partial_{\mathbf{x}}^4 \phi (\mathbf{x}) \,.
\end{equation}
By standard arguments from effective field theory 
(\sect{sec:eft}--\ref{sec:a}),
\begin{equation}
\sigma' = \sigma(1 + \tilde{c} f + O(\tilde{c}^2)),
\end{equation}
where $f$ is a function of the momenta and masses of the particles
involved in the scattering process. $\tilde{c} = O(a^2)$ and $f$ is
independent of $a$, so the errors induced by neglecting $\phi
\partial_\mathbf{x}^4 \phi$ are of order $a^2$. Similarly, the errors 
induced by neglecting the effective $\phi^6$ term are of order $a^2$ 
or smaller for $D=2,3,4$.

For the total error to satisfy \eq{epsiloncrit}, each individual
source of error must be at most $O(\epsilon)$. Thus, to ensure the
spatial discretization errors are sufficiently small, we set $a \sim
\sqrt{\epsilon}$. This choice of $a$ affects the complexity of the
preparation of the free vacuum and the complexity of the
Suzuki-Trotter time evolutions. Because $\mathcal{V} = \frac{V}{a^d}$,
the Kitaev-Webb state preparation uses
\begin{equation}
\label{Gprep}
G_{\mathrm{prep}} = \tilde{O}(\mathcal{V}^{2.376}) = \tilde{O}(a^{-2.376d}) =
\tilde{O}(\epsilon^{-1.188d}) \,
\end{equation}
quantum gates. Among the various Suzuki-Trotter time evolutions in
our algorithm, the most time-consuming is the adiabatic transition
from the free theory to the interacting theory. We thus substitute $a
\sim \sqrt{\epsilon}$ into \eq{Gstrict} and find that implementing
this process with a $k\th$-order Suzuki-Trotter formula
uses\footnote{Whether we use \eq{Gstrict} or \eq{Glenient} affects
  only the scaling with $V$.}
\begin{equation}
\label{gadiabatic1}
G_{\mathrm{adiabatic}} \sim 
\left\{ \begin{array}{ll}
\left( \epsilon^{-d/2}
\epsilon_{\mathrm{ad}}^{-1} \right)^{1+\frac{1}{2k}}, \quad d=1,2, \\ 
\left( \epsilon^{-4.5}
\epsilon_{\mathrm{ad}}^{-1} \right)^{1+\frac{1}{2k}}, \quad d=3 
\end{array} \right.
\end{equation}
quantum gates, where, by definition, the adiabatically produced state has an
inner product of at least $1-\epsilon_{\mathrm{ad}}$ with the 
exact state (\sect{preparing}).
It is efficient to use Suzuki-Trotter formulae with
large $k$, in which case the terms in the exponents such as
$\frac{1}{2k}$ become very small. To simplify our exposition, we
henceforth use the standard ``little-$o$'' notation, defined as
follows:
\begin{equation}
\label{littleo}
f(n) = o(g(n)) \textrm{ if and only if }  \lim_{n \to \infty} f(n)/g(n) = 0.
\end{equation}
In this language, \eq{gadiabatic1} becomes 
\begin{equation}
G_{\mathrm{adiabatic}} \sim 
\left\{ \begin{array}{ll}
\left( \epsilon^{-d/2}
\epsilon_{\mathrm{ad}}^{-1} \right)^{1+o(1)}, \quad d=1,2, \\
\left( \epsilon^{-4.5}
\epsilon_{\mathrm{ad}}^{-1} \right)^{1+o(1)}, \quad d=3. 
\end{array} \right.
\end{equation}

To obtain the full scaling with $\epsilon$ of the adiabatic state
preparation, we must next determine the relationship between $\epsilon$
and $\epsilon_{\mathrm{ad}}$. 
For small $\epsilon_{\mathrm{ad}}$, the adiabatically prepared
and exact states yield nearly identical probability distributions for all
possible measurements. (More precisely, the total variation distance
from the exact probability distribution is at most
$O(\epsilon_{\mathrm{ad}})$.) Thus, setting
$\epsilon_{\mathrm{ad}} = O(\epsilon)$ is certainly sufficient
to satisfy \eq{epsiloncrit}. We then have 
\begin{equation}
\label{Gadiabatic}
G_{\mathrm{adiabatic}} = 
\left\{ \begin{array}{ll}
\epsilon^{-1-d/2+o(1)}, \quad d=1,2, \\
\epsilon^{-5.5 +o(1)}, \quad d=3. 
\end{array} \right.
\end{equation}

Comparing \eq{Gprep} with \eq{Gadiabatic}, one sees that in
$d=1,3$ the adiabatic state preparation is the dominant cost, whereas in
$d=2$ the preparation of the free vacuum dominates. This leaves a
final asymptotic scaling of
\begin{equation} \label{Gtotal}
G_{\mathrm{total}} = O(G_{\mathrm{adiabatic}} + G_{\mathrm{prep}}) = 
\left\{ \begin{array}{ll}
\left( \frac{1}{\epsilon} \right)^{1.5+o(1)} \,, & d=1 \,,\\
\left( \frac{1}{\epsilon} \right)^{2.376+o(1)} \,, & d=2 \,,\\
\left( \frac{1}{\epsilon} \right)^{5.5+o(1)} \,, & d=3 \,.
\end{array} \right.
\end{equation}

So far, in determining the $\epsilon$ scaling of our algorithm, we
have considered only errors due to spatial discretization and errors
due to imperfect adiabaticity. We now
argue that the remaining sources of error make negligible
contributions to the overall $\epsilon$ scaling, which are already
captured by the $o(1)$ in \eq{Gtotal}.

In a theory with a non-zero mass, the error $\epsilon_{\mathrm{loc}}$
due to imperfect particle separation shrinks exponentially with the 
distance between the particles (\sect{sec:Veff}).
Therefore the total simulated volume $V$ should increase polylogarithmically
with $1/\epsilon_{\mathrm{loc}}$, and correspondingly the complexity
of the algorithm scales as $\mathrm{poly}(\log(1/\epsilon_{\mathrm{loc}}))$.
Similarly, by the analysis of
\sect{qubits}, the number of qubits per site scales only
logarithmically with $\epsilon_{\mathrm{trunc}}$, where
$1-\epsilon_{\mathrm{trunc}}$ is an inner product between the exact
quantum state and the achieved state. Thus, this source of error also
contributes only a polylogarithmic factor to the overall
$\epsilon$ scaling.

By \eq{hightrotter}, the errors resulting from use of a $k\th$-order
Suzuki-Trotter formula with $n$ timesteps are $\epsilon_{\mathrm{ST}}
\sim n^{-2k}$. $\epsilon_{\mathrm{ST}}$ is an operator norm of the
difference between the achieved and exact unitary transformations. It
thus induces a Euclidean distance of $O(\epsilon_{\mathrm{ST}})$
between quantum states. Hence, this error contributes only a factor of
$O(\epsilon^{-1/2k})$ to the overall $\epsilon$ scaling.

The number of quantum gates used to simulate the
strongly coupled theory has scaling in $1/(\lambda_c - \lambda_0)$ and
$p$ that is dominated by adiabatic state preparation (\sect{strong}).
We also estimate scaling with $n_{\mathrm{out}}$ as follows. For
two incoming particles with momenta $\mathbf{p}$ and $\mathbf{-p}$, the
maximum number of kinematically allowed outgoing particles is $n_{\mathrm{out}}
\sim p$. Furthermore, one needs $V \sim n_{\mathrm{out}}$ to obtain
good asymptotic out states separated by a distance of at least $\sim
1/m_0$. For continuum behavior,  $p = \eta/a$ for constant $\eta \ll
1$. Thus, $\mathcal{V} \sim n_{\mathrm{out}}^{d+1}$. Hence one needs 
$n_{\mathrm{out}}^{2.376 (d+1)}$ gates to prepare the free vacuum and, by
\eq{strongpscale}, $n_{\mathrm{out}}^{2d+3+o(1)}$ gates to
reach the interacting theory adiabatically. (The adiabatic turn-off
takes no longer than the adiabatic turn-on.) 
Thus the total scaling in $n_{\mathrm{out}}$ is dominated by
preparation of the free vacuum in three-dimensional spacetime,  but by
adiabatic turn-on in two-dimensional spacetime. The results of our resource
analysis are summarized in Table~\ref{strongtable}.

\begin{table}[hbt]
\begin{center}
\begin{tabular}{|c|c|c|c|}
\hline
    & $\lambda_c - \lambda_0$ & $p$ & $n_{\mathrm{out}}$ \\
\hline
$d=1$ & $\left( \frac{1}{\lambda_c - \lambda_0} \right)^{9+o(1)}$ & 
$p^{4+o(1)}$ & $\tilde{O}(n_{\mathrm{out}}^5)$ \\
\hline
$d=2$ & $\left( \frac{1}{\lambda_c-\lambda_0} \right)^{6.3+o(1)}$ &
$p^{6+o(1)}$ & $\tilde{O}(n_{\mathrm{out}}^{7.128})$ \\
\hline
\end{tabular}
\end{center}
\vspace{6pt}
\caption{\label{strongtable} The asymptotic scaling
  of the number of quantum gates needed to simulate scattering in the
  strong-coupling regime in one and two spatial dimensions is polynomial
  in $p$ (the momentum of the incoming pair of particles), 
  $\lambda_c - \lambda_0$ (the distance from the
  critical coupling), and $n_{\mathrm{out}}$ (the maximum kinematically
  allowed number of outgoing particles). Note that $V$ is kept fixed in
  the calculation of the scaling with $\mathrm{p}$. This is justified 
  when one is interested in scattering processes with a bounded number of
  outgoing particles.}
\end{table}


\section{Analysis of Algorithm}
\label{analysis}

Analysis of the algorithm requires quantifying various sources of error.
In broad terms, these fall into two categories: field-theoretical
cutoffs to render the problem finite and quantum computing primitives
upon which the algorithm is built. 

Cutoffs are imposed on both space and the field itself.
In \sect{qubits}, we analyze the effect of discretizing and imposing 
a cutoff on the field and thereby determine the number of qubits 
that is sufficient to represent the field. The effects of putting space 
on a lattice are considered later, in \sect{sec:eft}--\ref{sec:Veff}.

Sections \ref{preparing}--\ref{detectors} address quantum
computing primitives. In \sect{preparing}, we analyze the
adiabatic preparation of interacting wavepackets. This method induces
a phase shift, whose effect on a wavepacket is to cause undesirable
broadening and propagation.  The phase is shown to be proportional to
$\tau/J^2$, where $J$ is the number of adiabatic steps and $\tau$ is
the period of the turn-on.  The finite period for turn-on causes
errors through imperfect adiabaticity.  Such so-called `diabatic'
errors fall into two classes: particle creation from the vacuum and
the splitting of one particle into three. Using the adiabatic theorem,
we derive the probability of such events. Two criteria, keeping both
propagation and diabatic errors small, then determine satisfactory
choices of $J$ and $\tau$, and hence the gate complexity.  The
physical mass as a function of the coupling features prominently in
our calculations. For the weak-coupling regime, its form is obtained
by perturbation theory (\sect{sec:renorm}). For the strong-coupling
regime, we use its known behavior near the phase transition.

The time evolution during adiabatic turn-on and turn-off, and during 
the scattering, is implemented with a Suzuki-Trotter formula. In 
\sect{Trotter}, we show that a $k\th$-order Suzuki-Trotter
achieves linear scaling in the number of lattice sites, provided that
the Hamiltonian is local. This result is also used for the phase
estimation with which either occupation numbers or the energy and
momentum within regions are measured (see \sect{measurement}
and \sect{detectors}). 



\subsection{Representation by Qubits}
\label{qubits}

To represent the quantum state of the field, we assign a register of
qubits to store $\phi(\mathbf{x})$ at each lattice point 
$\mathbf{x} \in \Omega$. Each $\phi(\mathbf{x})$ is in principle an
unbounded continuous variable. Thus, to represent the field at a given
site with finitely many qubits, we cut off the field at a maximum
magnitude $\phi_{\max}$ and discretize it in increments of
$\delta_\phi$. This requires 
\begin{equation}
\label{placevalue}
n_b = \lceil \log_2 \left( 1+2 \phi_{\max}/\delta_{\phi} \right) \rceil
\end{equation}
qubits per site. In this section we show that one can simulate processes at
energy scale $E$, while maintaining $1-\epsilon_{\mathrm{trunc}}$
inner product with the exact state, with $n_b$ logarithmic in $1/a$,
$1/\epsilon_{\mathrm{trunc}}$, and $V$. Our analysis is
non-perturbative and thus applies equally to strongly and weakly
coupled $\phi^4$ theory.

Let $\ket{\psi}$ be the state, expressed in the field representation,
namely,
\begin{equation}
\ket{\psi} = \int_{-\infty}^\infty d \phi_1 \dotsi
\int_{-\infty}^\infty d \phi_{\mathcal{V}} \ 
\psi(\phi_1,\ldots,\phi_{\mathcal{V}}) \ket{\phi_1, \ldots, \phi_{\mathcal{V}}}
\, ,
\end{equation}
where $\{\phi_1, \dotsc, \phi_{\mathcal{V}}\} \equiv
\{\phi(\mathbf{x})| \mathbf{x} \in \Omega\}$,
and let
\begin{equation}
\ket{\psi_{\mathrm{cut}}} = \int_{-\phi_{\max}}^{\phi_{\max}} d \phi_1
\dotsi \int_{-\phi_{\max}}^{\phi_{\max}} d \phi_{\mathcal{V}} \ 
  \psi(\phi_1, \ldots, \phi_{\mathcal{V}}) \ket{\phi_1, \ldots
    \phi_{\mathcal{V}}} \,.
\end{equation}
Then
\begin{equation}
\braket{\psi}{\psi_{\mathrm{cut}}} =  \int_{-\phi_{\max}}^{\phi_{\max}} d \phi_1
\dotsi \int_{-\phi_{\max}}^{\phi_{\max}} d \phi_{\mathcal{V}}
\ \rho(\phi_1, \ldots, \phi_{\mathcal{V}}) \,,
\end{equation}
where $\rho$ is the probability distribution
\begin{equation}
\rho(\phi_1,\ldots, \phi_{\mathcal{V}}) = | \psi(\phi_1, \ldots, \phi_{\mathcal{V}})|^2 \,.
\end{equation}
In other words, $\braket{\psi}{\psi_{\mathrm{cut}}} = 1 -
p_{\mathrm{out}}$, where $p_{\mathrm{out}}$ is the probability that at
least one of $\phi_1,\ldots,\phi_{\mathcal{V}}$ is out of the range
$[-\phi_{\max},\phi_{\max}]$. By the union bound
($ \mathrm{Pr}(A \cup B) \leq \mathrm{Pr}(A) + \mathrm{Pr}(B)$),
\begin{equation}
\braket{\psi}{\psi_{\mathrm{cut}}} \geq 1 - \mathcal{V}
\max_{\mathbf{x} \in \Omega} p_{\mathrm{out}}(\mathbf{x}) \,,
\end{equation}
where $p_{\mathrm{out}}(\mathbf{x})$ is the probability that
$\phi(\mathbf{x})$ is out of the range $[-\phi_{\max},\phi_{\max}]$.

Let $\mu_{\phi(\mathbf{x})}$ and $\sigma_{\phi(\mathbf{x})}$ denote
the mean and standard deviation of $\phi(\mathbf{x})$ determined by
$\rho$. By Chebyshev's inequality, choosing $\phi_{\max} =
|\mu_{\phi(\mathbf{x})}| + c \sigma_{\phi(\mathbf{x})}$ (with $c>0$) 
ensures
\begin{equation}
p_{\mathrm{out}}(\mathbf{x}) \leq \frac{1}{c^2} \,.
\end{equation}
Thus, choosing
\begin{equation}
\label{phichoice}
\phi_{\max} =  \max_{\mathbf{x} \in \Omega} \left(
\left|\mu_{\phi(\mathbf{x})}\right| + 
\sqrt{\frac{\mathcal{V}}{\epsilon_{\mathrm{trunc}}}}
\sigma_{\phi(\mathbf{x})} \right)  
\end{equation}
ensures $\braket{\psi}{\psi_{\mathrm{cut}}} \geq 1-\epsilon_{\mathrm{trunc}}$.

Next, we observe the following.

\begin{proposition}
\label{canonfourier}
Let $\hat{p}$ and $\hat{q}$ be Hermitian operators on
$L^2(\mathbb{R})$ obeying the canonical commutation relation
$[\hat{q},\hat{p}]=i \id$. Then the eigenbasis of $\hat{p}$ is the
Fourier transform of the eigenbasis of $\hat{q}$.
\end{proposition}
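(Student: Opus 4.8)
The plan is to reduce the abstract statement to the concrete case of the standard position and momentum operators, using the Stone--von Neumann theorem, and then to verify the Fourier relation by direct computation in that concrete realization. First I would invoke the Stone--von Neumann uniqueness theorem: any irreducible pair of self-adjoint operators $\hat q,\hat p$ on a separable Hilbert space satisfying $[\hat q,\hat p]=i\id$ (in the properly exponentiated Weyl form) is unitarily equivalent to the canonical Schr\"odinger pair acting on $L^2(\mathbb{R})$, where $\hat q$ is multiplication by $x$ and $\hat p = -i\,\frac{d}{dx}$. Since a unitary equivalence carries the eigenbasis of one operator to the eigenbasis of the other and commutes with the Fourier transform up to the same unitary, it suffices to establish the claim for this standard pair.

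Next I would exhibit the generalized eigenfunctions explicitly. The (improper) eigenstates of $\hat q$ are the delta functions $\ket{x_0}$ with $\hat q\ket{x_0}=x_0\ket{x_0}$, represented in the position basis by $\delta(x-x_0)$. The eigenstates of $\hat p=-i\,\frac{d}{dx}$ are the plane waves $\ket{p_0}$ with wavefunction proportional to $e^{i p_0 x}$, since $-i\,\frac{d}{dx}e^{ip_0 x}=p_0 e^{ip_0 x}$. The key step is then to compute the overlap
\begin{equation}
\braket{p_0}{x_0} = \frac{1}{\sqrt{2\pi}}\int_{-\infty}^{\infty} e^{-ip_0 x}\,\delta(x-x_0)\,dx = \frac{1}{\sqrt{2\pi}}\,e^{-ip_0 x_0},
\end{equation}
which is exactly the kernel of the Fourier transform. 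Thus expressing an arbitrary state in the $\hat p$-eigenbasis amounts to applying the Fourier transform to its expansion in the $\hat q$-eigenbasis, which is the assertion of the proposition.

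The main obstacle is not the computation but the functional-analytic bookkeeping. The operators $\hat q,\hat p$ are unbounded with continuous spectrum, so their ``eigenbases'' are improper (rigged-Hilbert-space) eigenvectors rather than genuine elements of $L^2(\mathbb{R})$, and the naive commutator $[\hat q,\hat p]=i\id$ holds only on a suitable common dense domain. To make the argument fully rigorous one should work with the Weyl relations $e^{is\hat q}e^{it\hat p}=e^{-ist}e^{it\hat p}e^{is\hat q}$, which is the precise hypothesis under which Stone--von Neumann applies and under which the reduction to the Schr\"odinger representation is legitimate. For the purposes of this paper, however, the physically relevant content is the formal Fourier relation between the two bases established above, and the careful domain analysis can be relegated to standard references on the Stone--von Neumann theorem.
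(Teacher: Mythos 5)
Your proof is correct in substance, but it takes a genuinely different route from the paper's. The paper gives a short, self-contained computation directly from the commutator: it sets $\hat f(\delta)=e^{-i\hat q\delta}\,\hat p\,e^{i\hat q\delta}$, differentiates to get $\hat f(\delta)=\hat p+\delta\id$, concludes that $e^{i\hat q\delta}$ translates $\hat p$-eigenstates (thereby also fixing the relative phases of the $|p\rangle$), and then checks by a change of variables that $\int dp\,|p\rangle e^{-iqp}$ is a $\hat q$-eigenstate. You instead invoke Stone--von Neumann to reduce to the Schr\"odinger pair $(x,-i\,d/dx)$ and read off the overlap $\braket{p_0}{x_0}=e^{-ip_0x_0}/\sqrt{2\pi}$ there. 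Both arguments are sound at the (admittedly formal) level of rigor the paper adopts, and you are right that a fully rigorous treatment of either would pass through the Weyl relations. What the paper's approach buys is self-containedness and minimal hypotheses: it uses only the commutation relation and never needs irreducibility, which is \emph{not} among the stated assumptions of the proposition but \emph{is} a hypothesis of Stone--von Neumann (without it the Schr\"odinger representation occurs with multiplicity, and you would need an extra sentence to say the Fourier relation holds fiber by fiber, or to note that the overlap computation is unaffected). What your approach buys is that it makes the functional-analytic content explicit --- it names the theorem that actually justifies the manipulations --- and it identifies the concrete kernel $e^{-ip_0x_0}/\sqrt{2\pi}$, which is what is actually used downstream when the discretization of $\phi(\mathbf{x})$ is traded for a cutoff on $\pi(\mathbf{x})$. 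Either proof would serve the paper's purpose.
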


\noindent
\begin{proof}
Let $\hat f(\delta) = e^{-i\hat q\delta}\hat p e^{i\hat q\delta}$. Then 
$\hat f(0) = \hat p$ and
\begin{equation}
\frac{d}{d\delta}\hat f(\delta) = e^{-i\hat q\delta}\left( -i [\hat q,\hat p] 
\right)e^{i\hat q\delta} = \id \,;
\end{equation}
therefore
\begin{equation}
\hat f(\delta) = \hat p + \delta \id \,,
\end{equation}
and thus
\begin{equation}
\hat p  e^{i\hat q\delta} = e^{i\hat q\delta}\left(\hat p + \delta \id\right)\,.
\end{equation}
If $|p\rangle$ denotes the $\hat p$ eigenstate with eigenvalue $p$, then
\begin{equation}
\hat p e^{i\hat q\delta}|p\rangle  = e^{i\hat q\delta}\left(\hat p + \delta
\id\right)|p\rangle = \left( p+\delta\right) e^{i\hat q\delta}|p\rangle \,,
\end{equation}
that is, $|p+\delta\rangle \equiv e^{i\hat q\delta}|p\rangle$ is the 
eigenstate of $\hat p$ with eigenvalue $p+\delta$. It follows that 
\begin{equation}
e^{i\hat q \delta} \int_{-\infty}^\infty dp |p\rangle e^{-iqp} 
= \int_{-\infty}^\infty dp |p+\delta\rangle e^{-iqp} 
= \int_{-\infty}^\infty dp |p\rangle e^{-iq(p-\delta)}
= e^{iq\delta}\int_{-\infty}^\infty dp |p\rangle e^{-iqp}.
\end{equation}
Expanding both sides to linear order in $\delta$, we conclude that 
$\int_{-\infty}^\infty dp |p\rangle e^{-iqp}$ is an eigenstate of $\hat q$ 
with eigenvalue $q$.
\end{proof}

By Proposition~\ref{canonfourier}, the eigenbasis of $a^d \pi(\mathbf{x})$ 
is the Fourier transform of the eigenbasis of $\phi(\mathbf{x})$. Thus,
discretizing $\phi(\mathbf{x})$ in increments of
$\delta_{\phi(\mathbf{x})}$ is roughly equivalent to the truncation $-
\pi_{\max} \leq \pi(\mathbf{x}) \leq \pi_{\max}$, where
\begin{equation}
\label{pimax}
\pi_{\max} = \frac{\pi}{a^d \delta_{\phi(\mathbf{x})}} \,.
\end{equation}
By the same argument used to choose $\phi_{\max}$, choosing
\begin{equation}
\label{pichoice}
\pi_{\max} =  \max_{\mathbf{x} \in \Omega} \left(
\left|\mu_{\pi(\mathbf{x})}\right| + 
\sqrt{\frac{\mathcal{V}}{\epsilon_{\mathrm{trunc}}}} 
\sigma_{\pi(\mathbf{x})} \right) 
\end{equation}
ensures fidelity $1-\epsilon_{\mathrm{trunc}}$ between $\ket{\psi}$
and its truncated and discretized version.

To obtain useful bounds on $\phi_{\max}$ and $\pi_{\max}$, we must bound
$\mu_{\phi(\mathbf{x})}$, $\sigma_{\phi(\mathbf{x})}$, 
$\mu_{\pi(\mathbf{x})}$, and $\sigma_{\pi(\mathbf{x})}$. 
To this end, we make the following straightforward observation.
\begin{proposition}
\label{moments}
Let $M$ be a Hermitian operator and let $\ket{\psi}$ be a quantum
state of unit norm. Then 
$|\bra{\psi} M \ket{\psi}| \leq \sqrt{ \bra{\psi} M^2 \ket{\psi} }$.
\end{proposition}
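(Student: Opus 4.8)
The plan is to recognize this as an immediate consequence of the Cauchy--Schwarz inequality applied to the inner product on the underlying Hilbert space. Since $\ket{\psi}$ is a quantum state, it is normalized, $\braket{\psi}{\psi}=1$. I would view the expectation value $\bra{\psi}M\ket{\psi}$ as the inner product of the two vectors $\ket{\psi}$ and $M\ket{\psi}$, and apply Cauchy--Schwarz to obtain
\begin{equation}
|\bra{\psi}M\ket{\psi}| \leq \sqrt{\braket{\psi}{\psi}}\,\sqrt{\bra{\psi}M^\dagger M\ket{\psi}}\,.
\end{equation}
The remaining step is to simplify the right-hand side using the hypothesis that $M$ is Hermitian: because $M^\dagger=M$, the second factor becomes $\sqrt{\bra{\psi}M^2\ket{\psi}}$, while normalization makes the first factor equal to $1$. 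This yields exactly the claimed bound $|\bra{\psi}M\ket{\psi}| \leq \sqrt{\bra{\psi}M^2\ket{\psi}}$.

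An alternative route, which avoids invoking Cauchy--Schwarz and instead derives it in this special case, is to use the non-negativity of the variance. I would set $\mu = \bra{\psi}M\ket{\psi}$, which is real since $M$ is Hermitian, and observe that $(M-\mu\id)^2$ is positive semidefinite, being the square of a Hermitian operator. Hence its expectation is non-negative, and expanding $\bra{\psi}(M-\mu\id)^2\ket{\psi}\geq 0$ gives $\bra{\psi}M^2\ket{\psi}-\mu^2\geq 0$. Taking square roots of the two non-negative quantities then recovers $|\mu|\leq\sqrt{\bra{\psi}M^2\ket{\psi}}$, which is the claim.

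There is no genuine obstacle here; the content is entirely standard, and either argument is a two-line computation. The one point worth flagging is the implicit normalization built into the phrase ``quantum state,'' which is precisely what allows the factor $\braket{\psi}{\psi}$ to be dropped from the Cauchy--Schwarz bound (equivalently, it is what makes $\mu^2$ rather than $\mu^2/\braket{\psi}{\psi}$ appear in the variance expansion). I would state this normalization explicitly at the outset so that the simplification of the bound is unambiguous.
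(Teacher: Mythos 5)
Your proposal is correct, and your second argument (non-negativity of $\bra{\psi}(M-\mu\id)^2\ket{\psi}$ and expansion to get $\langle M^2\rangle - \mu^2 \geq 0$) is exactly the proof given in the paper. The Cauchy--Schwarz route is an equally valid alternative, but it adds nothing beyond the variance argument here.
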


\noindent
\begin{proof}
For brevity, let $\langle Q \rangle = \bra{\psi} Q \ket{\psi}$ for any
observable $Q$. The operator $\left( M - \langle M \rangle \id
\right)^2$ is positive semidefinite. Thus,
\begin{eqnarray}
0 & \leq & \left\langle \left( M - \langle M \rangle \id \right)^2
\right\rangle \\
& = & \left\langle M^2 - 2 \langle M \rangle M + \langle M \rangle^2 \id
\right\rangle \\
& = & \langle M^2 \rangle - \langle M \rangle^2 \,.
\end{eqnarray}
\end{proof}

\noindent
Applied to the definitions
\begin{eqnarray}
\mu_{\phi(\mathbf{x})} & = & \bra{\psi} \phi(\mathbf{x}) \ket{\psi} \,,\\
\sigma_{\phi(\mathbf{x})} & = & \sqrt{ \bra{\psi} \phi(\mathbf{x})^2
    \ket{\psi} - \bra{\psi} \phi(\mathbf{x}) \ket{\psi}^2} \,,\\
\mu_{\pi(\mathbf{x})} & = & \bra{\psi} \pi(\mathbf{x}) \ket{\psi} \,, \\
\sigma_{\pi(\mathbf{x})} & = & \sqrt{ \bra{\psi} \pi(\mathbf{x})^2
  \ket{\psi} - \bra{\psi} \pi(\mathbf{x}) \ket{\psi}^2} \,,
\end{eqnarray}
Proposition~\ref{moments} implies that $\mu_{\phi(\mathbf{x})}$ and
$\sigma_{\phi(\mathbf{x})}$ are each at most 
$\sqrt{\bra{\psi} \phi(\mathbf{x})^2 \ket{\psi}}$, and
$\mu_{\pi(\mathbf{x})}$ and $\sigma_{\pi(\mathbf{x})}$ are each at
most $\sqrt{\bra{\psi} \pi(\mathbf{x})^2 \ket{\psi}}$. Thus, by
\eq{phichoice} and \eq{pichoice},
\begin{eqnarray}
\phi_{\max} & = & O \left( \max_{\mathbf{x} \in \Omega} \sqrt{
  \frac{\mathcal{V}}{\epsilon_{\mathrm{trunc}}} \bra{\psi} \phi(\mathbf{x})^2
  \ket{\psi}} \right) \,, \\
\pi_{\max} & = & O \left( \max_{\mathbf{x} \in \Omega} \sqrt{
  \frac{\mathcal{V}}{\epsilon_{\mathrm{trunc}}} \bra{\psi} \pi(\mathbf{x})^2
  \ket{\psi}} \right) \,,
\end{eqnarray}
so that, by \eq{placevalue} and \eq{pimax},
\begin{equation}
n_b = O \left( \log \left( \frac{V}{\epsilon_{\mathrm{trunc}}}
\max_{\mathbf{x},\mathbf{y} \in \Omega} \sqrt{ \bra{\psi} \pi(\mathbf{x})^2
\ket{\psi} \bra{\psi} \phi(\mathbf{y})^2 \ket{\psi}} \right) \right) \,.
\end{equation}

To establish logarithmic scaling of $n_b$, we need only prove
polynomial upper bounds on $\bra{\psi} \phi(\mathbf{x})^2 \ket{\psi}$
and $\bra{\psi} \pi(\mathbf{x})^2 \ket{\psi}$. Rather than making a physical
estimate of these expectation values, we prove simple
upper bounds that are probably quite loose. In the adiabatic state
preparation described in \sect{preparing}, the parameters
$m_0^2$ and $\lambda_0$ are varied. The following two propositions
cover all the combinations of parameters used in the adiabatic
preparation and subsequent scattering of both strongly and weakly
coupled wavepackets.

\begin{proposition}
\label{mpos}
Let $H$ be of the form shown in \eq{ham}. Suppose $m_0^2 > 0$ and
$\lambda_0 \geq 0$. Let $\ket{\psi}$ be any state of the field such
that $\bra{\psi} H \ket{\psi} \leq E$. Then $\forall \mathbf{x} \in
\Omega$,
\begin{eqnarray}
\label{phiboundpos}
\bra{\psi} \phi(\mathbf{x})^2 \ket{\psi} & \leq & \frac{2E}{a^d m_0^2} \,,\\
\label{piboundpos}
\bra{\psi} \pi(\mathbf{x})^2 \ket{\psi} & \leq & \frac{2E}{a^d} \,.
\end{eqnarray}
\end{proposition}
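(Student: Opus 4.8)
The plan is to exploit the fact that, under the hypotheses $m_0^2 > 0$ and $\lambda_0 \geq 0$, the Hamiltonian \eq{ham} is a sum of positive semidefinite operators, so that the expectation value in $\ket{\psi}$ of any single term cannot exceed $\bra{\psi} H \ket{\psi} \leq E$. This is purely a positivity estimate, and no analytic machinery is required.

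First I would check that each of the four summands appearing in \eq{ham} is positive semidefinite. The operators $\pi(\mathbf{x})^2$ and $\phi(\mathbf{x})^2$ are squares of Hermitian operators; $(\nabla_a \phi)^2(\mathbf{x})$ is manifestly a sum of squares of Hermitian operators by its definition; and $\phi(\mathbf{x})^4 = (\phi(\mathbf{x})^2)^2$ is again the square of a Hermitian operator. Since the hypotheses guarantee $m_0^2 > 0$ and $\lambda_0 \geq 0$, every prefactor ($\frac{1}{2}$, $\frac{1}{2} m_0^2$, $\frac{\lambda_0}{4!}$, together with $a^d > 0$) is nonnegative, so each term has nonnegative expectation value in any state of the field.

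To establish \eq{phiboundpos} at a fixed site $\mathbf{x}$, I would single out the term $\frac{1}{2} a^d m_0^2 \phi(\mathbf{x})^2$ from the sum defining $H$ and discard every other term, each of which contributes a nonnegative amount to $\bra{\psi} H \ket{\psi}$. This gives
\[
E \;\geq\; \bra{\psi} H \ket{\psi} \;\geq\; \frac{1}{2} a^d m_0^2 \, \bra{\psi} \phi(\mathbf{x})^2 \ket{\psi},
\]
and dividing through by $\frac{1}{2} a^d m_0^2$ yields \eq{phiboundpos}. The bound \eq{piboundpos} follows by the identical argument, this time retaining the term $\frac{1}{2} a^d \pi(\mathbf{x})^2$ and dropping the rest.

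There is no real obstacle here: the whole argument is a one-line positivity estimate, and the only thing needing verification is the positive semidefiniteness of each term, which is immediate. It is worth noting that $m_0^2 > 0$ is required strictly (not merely $m_0^2 \geq 0$), since $m_0^2$ appears in the denominator of \eq{phiboundpos}; the companion proposition handling nonpositive $m_0^2$ must bound $\bra{\psi} \phi(\mathbf{x})^2 \ket{\psi}$ by a different mechanism, as the term $\frac{1}{2} a^d m_0^2 \phi(\mathbf{x})^2$ is no longer available as a nonnegative lower bound.
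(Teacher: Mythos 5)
Your proposal is correct and follows essentially the same route as the paper: both arguments observe that every summand of $H$ is positive semidefinite under the stated hypotheses, retain only the single term $\tfrac{1}{2}a^d m_0^2\,\phi(\mathbf{x})^2$ (respectively $\tfrac{1}{2}a^d\,\pi(\mathbf{x})^2$), and divide through. Your closing remark about why $m_0^2>0$ is needed and why Proposition~\ref{mneg} must proceed differently is accurate but not part of the paper's proof.
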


\noindent
\begin{proof}
\begin{eqnarray}
E & \geq & \bra{\psi} H \ket{\psi} \\
  & = & 
\label{almost}
  \bra{\psi} \sum_{\mathbf{y} \in \Omega} a^d \left[ \frac{1}{2}
  \pi(\mathbf{y})^2 + \frac{1}{2} (\nabla_a \phi)^2(\mathbf{y}) +
  \frac{m_0^2}{2} \phi(\mathbf{y})^2 + \frac{\lambda_0}{4!}
  \phi(\mathbf{y})^4 \right] \ket{\psi} \\
& \geq & \bra{\psi} a^d \frac{m_0^2}{2} \phi(\mathbf{x})^2 \ket{\psi},
\end{eqnarray}
where the last inequality follows because all of the operators we have
dropped are positive semidefinite. This establishes
\eq{phiboundpos}. Similarly, we can drop all but the
$\pi(\mathbf{x})$ term from the right-hand side of \eq{almost},
leaving
\begin{equation}
E \geq \bra{\psi} a^d \frac{1}{2} \pi(\mathbf{x})^2 \ket{\psi} \,,
\end{equation}
which establishes \eq{piboundpos}.
\end{proof}

\begin{proposition}
\label{mneg}
Let $H$ be of the form shown in \eq{ham}. Suppose $m_0^2 \leq 0$ and
$\lambda_0 > 0$. Let $\ket{\psi}$ be any state of the field such that
$\bra{\psi} H \ket{\psi} \leq E$. Then $\forall \mathbf{x} \in
\Omega$,
\begin{eqnarray}
\label{phiboundneg}
\bra{\psi} \phi(\mathbf{x})^2 \ket{\psi} & \leq &
-\frac{6m_0^2}{\lambda_0} + \sqrt{\frac{36
    m_0^4}{\lambda_0^2}+\frac{24}{\lambda_0 a^d} \left( E +
  \frac{3(V-a^d)m_0^4}{2\lambda_0} \right)} \,,\\
\label{piboundneg}
\bra{\psi} \pi(\mathbf{x})^2 \ket{\psi} & \leq & \frac{2}{a^d} \left(
E + \frac{3V m_0^4}{2 \lambda_0} \right) \,.
\end{eqnarray}
\end{proposition}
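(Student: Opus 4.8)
The plan is to mimic the energy-comparison argument of Proposition~\ref{mpos}, but the step used there --- discarding the mass term on the grounds that it is positive semidefinite --- is unavailable here, since $m_0^2 \leq 0$ renders $\frac{1}{2}m_0^2\phi(\mathbf{x})^2$ indefinite. The hard part is therefore controlling this negative contribution, and the key is to notice that the full on-site potential is bounded below. Completing the square,
\begin{equation}
\tfrac{1}{2}m_0^2 t^2 + \tfrac{\lambda_0}{4!}t^4 = \frac{\lambda_0}{24}\left(t^2 + \frac{6m_0^2}{\lambda_0}\right)^2 - \frac{3m_0^4}{2\lambda_0} \geq -\frac{3m_0^4}{2\lambda_0} \,,
\end{equation}
the minimum being attained at $t^2 = -6m_0^2/\lambda_0 \geq 0$. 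Since the $\pi$ and $(\nabla_a\phi)^2$ terms of \eq{ham} remain positive semidefinite, this provides a uniform floor for the potential energy at every site.

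For the momentum bound \eq{piboundneg}, I would start from $E \geq \bra{\psi}H\ket{\psi}$, keep only the single operator $\frac{a^d}{2}\pi(\mathbf{x})^2$ (dropping all other $\pi$ terms and all gradient terms, which are positive semidefinite), and replace each of the $\mathcal{V}$ on-site potentials by its floor $-\frac{3m_0^4}{2\lambda_0}$. Using $\sum_{\mathbf{y}\in\Omega}a^d = \mathcal{V}a^d = V$, this gives $E \geq \frac{a^d}{2}\bra{\psi}\pi(\mathbf{x})^2\ket{\psi} - \frac{3Vm_0^4}{2\lambda_0}$, which rearranges at once into \eq{piboundneg}.

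For the field bound \eq{phiboundneg}, I would instead retain the entire on-site potential at $\mathbf{x}$ and apply the floor only to the other $V - a^d$ worth of sites, yielding $E + \frac{3(V-a^d)m_0^4}{2\lambda_0} \geq a^d\bra{\psi}\left[\frac{m_0^2}{2}\phi(\mathbf{x})^2 + \frac{\lambda_0}{4!}\phi(\mathbf{x})^4\right]\ket{\psi}$. Writing $u = \bra{\psi}\phi(\mathbf{x})^2\ket{\psi}$, the decisive step is to lower-bound the quartic moment: Proposition~\ref{moments} with $M = \phi(\mathbf{x})^2$ gives $\bra{\psi}\phi(\mathbf{x})^4\ket{\psi} \geq u^2$, so the right-hand side is at least $a^d\left(\frac{m_0^2}{2}u + \frac{\lambda_0}{24}u^2\right)$. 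This turns the energy constraint into a quadratic inequality in $u$, whose discriminant is automatically nonnegative (since $E \geq \bra{\psi}H\ket{\psi} \geq -\frac{3Vm_0^4}{2\lambda_0}$ by the same floor). As $u \geq 0$ it must lie at or below the larger root, and solving gives $u \leq -\frac{6m_0^2}{\lambda_0} + \sqrt{\frac{36m_0^4}{\lambda_0^2} + \frac{24}{\lambda_0 a^d}\left(E + \frac{3(V-a^d)m_0^4}{2\lambda_0}\right)}$. Because $m_0^2 \leq 0$, the constant $-\frac{6m_0^2}{\lambda_0}$ is no larger than the $-\frac{24m_0^2}{\lambda_0}$ written in \eq{phiboundneg}, so this (tighter) estimate implies the stated bound; verifying the discriminant condition is the only technical point that needs care.
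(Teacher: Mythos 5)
Your proof is correct and follows essentially the same route as the paper's: bound the on-site potential below by its minimum $-3m_0^4/(2\lambda_0)$ (the paper states this as the minimal eigenvalue of $U(\mathbf{x})$; your completion of the square is the same computation), drop the remaining positive semidefinite operators, and for the field bound invoke Proposition~\ref{moments} with $M=\phi(\mathbf{x})^2$ to reduce to a quadratic inequality in $\bra{\psi}\phi(\mathbf{x})^2\ket{\psi}$. Your explicit solution of that quadratic gives the slightly tighter constant $-6m_0^2/\lambda_0$ in place of the stated $-24m_0^2/\lambda_0$, and you correctly note that the former implies the latter because $m_0^2\leq 0$; your verification that the discriminant is nonnegative is also sound.
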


\noindent
\begin{proof}
The operator
\begin{equation}
U(\mathbf{x}) = \frac{m_0^2}{2} \phi(\mathbf{x})^2 +
\frac{\lambda_0}{4!} \phi(\mathbf{x})^4
\end{equation}
is sufficiently simple that we can directly calculate its minimal
eigenvalue  $U_{\min}$. If $m_0^2 \leq 0$ and $\lambda_0 > 0$ then
\begin{equation}
\label{Vmin}
U_{\min} = - \frac{3 m_0^4}{2 \lambda_0} \,.
\end{equation}
Thus, for \emph{any} state $\ket{\psi}$,
\begin{equation}
\label{anystate}
\bra{\psi} \sum_{\mathbf{y} \in \Omega} a^d U(\mathbf{y}) \ket{\psi}
\geq \frac{-3Vm_0^4}{2 \lambda_0} \,.
\end{equation}
Hence, recalling \eq{ham}, we obtain
\begin{eqnarray}
E & \geq & \bra{\psi} H \ket{\psi} \\
& = & 
\label{beginning}
\bra{\psi} \sum_{\mathbf{y} \in \Omega} a^d \left[ \frac{1}{2}
  \pi(\mathbf{y})^2 + \frac{1}{2} ( \nabla_a \phi)^2(\mathbf{y}) + 
  \frac{m_0^2}{2} \phi(\mathbf{y})^2 + \frac{\lambda_0}{4!} 
  \phi(\mathbf{y})^4 \right] \ket{\psi} \\
& \geq &
\label{secondtolast}
 \bra{\psi} \sum_{\mathbf{y} \in \Omega} a^d \left[
  \frac{1}{2} \pi(\mathbf{y})^2 + \frac{1}{2} (\nabla_a
  \phi)^2(\mathbf{y}) \right] \ket{\psi} - \frac{3V m_0^4}{2 \lambda_0}
\\
& \geq & 
\label{last}
\bra{\psi} \frac{a^d}{2} \pi(\mathbf{x})^2 \ket{\psi} -
\frac{3Vm_0^4}{2 \lambda_0} \,.
\end{eqnarray}
\eq{secondtolast} follows from \eq{anystate}. \eq{last} holds
(for any choice of $\mathbf{x}$) because all of the operators we have
dropped are positive semidefinite. This establishes
\eq{piboundneg}.

Similarly, dropping positive operators from \eq{beginning} and
using \eq{anystate} yield, for any $\mathbf{x}$,
\begin{equation}
a^d \bra{\psi} \left( \frac{m_0^2}{2} \phi(\mathbf{x})^2 +
\frac{\lambda_0}{4!} \phi(\mathbf{x})^4 \right) \ket{\psi} \leq
 E + \frac{3(V-a^d)m_0^4}{2\lambda_0}  \,.
\end{equation}
Applying Proposition \ref{moments} with $M = \phi(\mathbf{x})^2$ shows
that $\bra{\psi} \phi(\mathbf{x})^4 \ket{\psi} \geq \bra{\psi}
  \phi(\mathbf{x})^2 \ket{\psi}^2$. Thus,
\begin{equation}
a^d \left[ \frac{m_0^2}{2} \bra{\psi} \phi(\mathbf{x})^2 \ket{\psi} +
\frac{\lambda_0}{4!} \bra{\psi} \phi(\mathbf{x})^2 \ket{\psi}^2
\right] \leq  E + \frac{3(V-a^d)m_0^4}{2\lambda_0}  \,.
\end{equation}
Via the quadratic formula, this implies \eq{phiboundneg}.
\end{proof}


\subsection{Adiabatic Preparation of Interacting Wavepackets}
\label{preparing}

In this section, we analyze the adiabatic state-preparation procedure.
To analyze the error due to finite
$\tau$ and $J$, we consider the process of preparing a single-particle
wavepacket. By the analysis in \sect{sec:Veff}, the procedure
performs similarly in preparing wavepackets for multiple particles,
provided the particles are separated by more than the characteristic
length $1/m$ of the interaction.
In what follows we use $p = |\mathbf{p}|$, rather than the $D$-vector, 
as will be clear from the context.

The phase induced by $M_j$ on the momentum-$p$ eigenstate of $H(s)$
(with energy $E_p(s)$) is
\begin{equation}
\theta_j(p) = \left( E_p \left( \frac{j+1}{J} \right) + E_p \left(
\frac{j}{J} \right) \right) \frac{\tau}{2J} - \tau
\int_{j/J}^{(j+1)/J} ds E_p(s) \,.
\end{equation}
Taylor expanding $E_p$ about $s=(j+\frac{1}{2})/J$ yields
\begin{equation}
\label{thetaj}
\theta_j(p) = \frac{\tau}{12 J^3} \left. 
\frac{\partial^2 E_p}{\partial s^2} \right|_{s=(j+\frac{1}{2})/J}
+ O(J^{-5}) \,.
\end{equation}
Thus, the total phase induced is
\begin{eqnarray}
\theta(p) & = & \sum_{j=0}^{J-1} \theta_j(p) \\
& \simeq & \frac{\tau}{12 J^2} \int_0^1 ds \frac{\partial^2
  E_p}{\partial s^2} \\
& = & \frac{\tau}{12 J^2} \left. \frac{\partial E_p}{\partial s}
\right|_0^1 \,, \label{thetap1}
\end{eqnarray}
where the approximation holds for large $J$. For a Lorentz-invariant
theory, $E_p(s)$ must take the form
\begin{equation}
\label{LI}
E_p(s) = \sqrt{p^2+m^2(s)} \,.
\end{equation}
This should be a good approximation for the lattice theory provided
the particle momentum satisfies $p \ll 1/a$. Substituting \eq{LI}
into \eq{thetap1} yields
\begin{equation}
\label{thetap2}
\theta(p) \simeq \frac{\tau}{24 J^2} \left. \frac{\frac{\partial
    m^2(s)}{\partial{s}}}{\sqrt{p^2 + m^2(s)}} \right|_0^1 \,.
\end{equation}

Next, we consider the effect of this phase shift on a wavepacket
centered around momentum $\bar{p}$. If the wavepacket is narrowly
concentrated in momentum, then we can Taylor expand $\theta(p)$ to
second order about $\bar{p}$:
\begin{equation}
\theta(p) \simeq \theta(\bar{p}) + \mathcal{D} \cdot (p-\bar{p}) + \frac{1}{2}
\mathcal{B} \cdot (p - \bar{p})^2 \,,
\end{equation}
where
\begin{eqnarray}
\mathcal{D} & = & \left. \frac{\partial \theta}{\partial p}
\right|_{\bar{p}} \,, \label{D} \\
\mathcal{B} & = & \left. \frac{\partial^2 \theta}{\partial p^2}
\right|_{\bar{p}} \,. \label{B}
\end{eqnarray}
The phase shift $e^{i \mathcal{D} \cdot (p - \bar{p})}$ induces a 
translation (in position space) of any wavepacket by a distance $\mathcal{D}$. 
Similarly, the phase shift 
$e^{i \frac{1}{2} \mathcal{B} \cdot (p -  \bar{p})^2}$  governs broadening
of the wavepacket. 
From \eq{D} and \eq{thetap2}, we have
\begin{equation}
\label{phasecrit}
\mathcal{D} \simeq -\frac{\tau \bar{p} }{24 J^2}
\left. \frac{\frac{\partial m^2(s)} {\partial  s}}{\left( \bar{p}^2 +
  m^2(s) \right)^{3/2}} \right|_{s=0}^{s=1}  \,.
\end{equation}

We next determine the complexity by demanding that the propagation length
$\mathcal{D}$ be restricted to some small constant, and that the
probability of diabatic particle creation be small. Together, these
criteria determine $J$ and $\tau$. We can obtain a tighter bound in
the perturbative case than in the general case, so we treat these
separately.

\begin{figure}
\begin{center}
\includegraphics[width=0.3\textwidth]{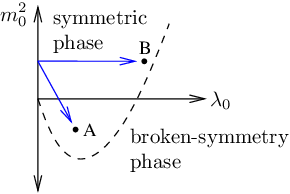}
\vspace{6pt}
\caption{\label{paths} The dashed line illustrates schematically the
  location of a quantum phase transition of $\phi^4$ theory in two and
  three spacetime dimensions. A and B denote weakly and strongly coupled
  continuum-like theories, respectively. We prepare them adiabatically
  by following the arrows starting from the massive free theory
  ($m_0^2 >0$, $\lambda_0 = 0$). To maintain adiabaticity the path
  must not cross the quantum phase transition.}
\end{center}
\end{figure}

\subsubsection{Weak Coupling}
\label{weak}

We wish to prepare the weakly coupled continuum-like theory by
adiabatically following a path that does not cross the quantum phase
transition, as illustrated in Fig. \ref{paths}. Note that, in the weakly
coupled continuum limit $a \to 0$, $m_0^2$ is negative. The path
illustrated in Fig. \ref{paths} can be described by the following
parametrization of $H(s)$ from \eq{HS}:
\begin{eqnarray}
m_0^2(s) & = & m^2 + s \delta_m \,, \label{weakm} \\
\lambda_0(s) & = & s \lambda_0 \,. \label{weaklambda}
\end{eqnarray}
Let $m(s)$ denote the physical mass of particles defined by
$H(s)$. By \eq{weakm} and \eq{weaklambda}, $m_0(0) = m(0) = m$. We
choose $\delta_m$ so that $m(1)$ is also equal to $m$. Thus, $H(s)$
linearly interpolates between a non-interacting theory with physical
mass $m$ and an interacting theory with physical mass $m$. 


We wish to find the asymptotic scaling of the run time of the
adiabatic state preparation. The question is, with which parameter
should we consider scaling? There are at least three regimes in which
classical methods for computing scattering amplitudes break down or
are inefficient: strong coupling, large numbers of external particles,
and high precision. In this section we are considering only weak
coupling (that is, $\lambda/m^{4-D} \ll 1$), leaving discussion of
strong coupling until the next section.
For an asymptotically large number of external particles, the efficiency
of our algorithm depends upon strong coupling, for the following reason.
A connected Feynman diagram involving $n$ external particles must have 
at least $v=O(n)$ vertices, so the amplitude for such a process
is suppressed by a factor of $\left( \frac{\lambda}{E^{4-D}}
\right)^v$, where $E$ is the energy scale of the process. Since 
$E \geq m$, many-particle scattering events are exponentially rare
at weak coupling and thus cannot be efficiently observed in
experiments or simulations. This leaves the high-precision
frontier. Recall that the perturbation series used in quantum field 
theory are asymptotic but not convergent. Thus, perturbative methods 
cannot be extended to arbitrarily high precision. 

Hence, in this section we consider the quantum gate complexity of
achieving arbitrarily high precision.  To do so, one chooses $a$ small
to obtain small discretization errors, $V$ large to obtain better
particle separation, $\tau$ long to improve adiabaticity, and $J$
large enough to limit particle propagation as the interaction is
turned on. Thus, we wish to know the scaling of the probability of
particle creation from the vacuum (denoted $P_{\mathrm{create}}$)
with $a$, $V$, $\tau$, and $J$. In this context, we consider $m$,
$\lambda$, and the incoming momentum to be constants.

We now analyze $m^2(s)$ (to second order in $\lambda_0$),
which determines the energy gap along the adiabatic path and the
propagation of wavepackets. 
By \sect{sec:mass},
\begin{equation}
\label{physmass}
m^2(s) = m_0^2(s) + s \lambda_0 \mu^{(1)} + s^2 \lambda_0^2 \mu^{(2)}
+ O(\lambda_0^3) \,,
\end{equation}
where
\begin{equation}
\label{mu1}
\mu^{(1)} = \left\{ \begin{array}{ll}
\frac{1}{8 \pi} \log \left( \frac{64}{m^2 a^2} \right) \,, & d = 1, \\
\frac{r_0^{(2)}}{16 \pi^2} \frac{1}{a} \,, & d = 2, \\
\frac{r_0^{(3)}}{32 \pi^3} \frac{1}{a^2} \,, & d = 3,
\end{array} \right.
\end{equation}
and
\begin{equation}
\label{mu2}
\mu^{(2)} = \left\{ \begin{array}{ll}
- \frac{1}{384 m^2} \,, & d = 1, \\
\frac{1}{96\pi^2} \log (ma) \,, & d = 2, \\
- \frac{r_1^{(3)}}{1536 \pi^7} \frac{1}{a^2} \,, & d = 3. 
\end{array} \right.
\end{equation} 
with $r_0^{(2)}$, $r_0^{(3)}$, and $r_1^{(3)}$ given in
\sect{sec:mass}. Thus,
\begin{equation}
\label{deltam}
\delta_m = - \lambda_0 \mu^{(1)} - \lambda_0^2 \mu^{(2)} + O(\lambda_0^3) \,.
\end{equation} 
Substituting \eq{deltam} and \eq{weakm} into \eq{physmass} yields
\begin{equation}
\label{pathmass}
m^2(s) = m^2 + s(s-1) \lambda_0^2 \mu^{(2)} + O(\lambda_0^3) \,.
\end{equation}
For the purpose of analyzing adiabaticity, we note that
\begin{equation}
\min_{0 \leq s \leq 1} m^2(s) = m^2 + O(\lambda_0^3) \,.
\end{equation}
This feature is helpful, because a small mass gap would necessitate slow 
adiabatic preparation. 
To analyze wavepacket propagation, we substitute \eq{pathmass} into
\eq{phasecrit} and obtain
\begin{eqnarray}
\mathcal{D} & = & \left. -\frac{\tau \bar{p}}{24 J^2} \frac{(2s-1)
  \lambda_0^2 \mu^{(2)}}{(\bar{p}^2 + m^2)^{3/2}} \right|_{s=0}^{s=1}
\nonumber \\
& = & -\frac{\tau \bar{p}}{12 J^2} \frac{\lambda_0^2
  \mu^{(2)}}{(\bar{p}^2+m^2)^{3/2}} \,. \label{Dweak}
\end{eqnarray}

We can solve \eq{Dweak} to determine the necessary scaling of $J$. We
are primarily interested in the scaling of $J$ with $a$, because we
wish to investigate the high-precision limit. Making the simplifying
assumption that the particles are highly relativistic ($\bar{p}^2 \gg
m^2$), we find 
\begin{equation}
J \simeq \sqrt{-\frac{\tau \lambda_0^2 \mu^{(2)}}{12 \mathcal{D}
    \bar{p}^2}} \,. \label{Jweak}
\end{equation}
Substituting \eq{mu2} into \eq{Jweak} yields
\begin{equation}
\label{Jscaleweak}
J = \left\{ \begin{array}{ll}
\tilde{O} \left( \sqrt{\frac{\tau \lambda_0^2}{m^2 \mathcal{D}
    \bar{p}^2}} \right) \,, & d = 1, \\
\tilde{O} \left( \sqrt{ \frac{\tau \lambda_0^2}{\mathcal{D}
    \bar{p}^2}} \right) \,, & d = 2, \\
\tilde{O} \left( \sqrt{\frac{\tau \lambda_0^2}{a^2 \mathcal{D}
    \bar{p}^2}} \right) \,, & d = 3.
\end{array} \right.
\end{equation}

To determine $\tau$, we next consider adiabaticity. Let $H(s)$ be any 
Hamiltonian differentiable with respect to $s$. Let $\ket{\phi_l(s)}$ 
be an eigenstate (with $H(s) \ket{\phi_l(s)} = E_l(s) \ket{\phi_l(s)}$)
separated by a non-zero energy gap for all $s$. Let $\ket{\psi_l(t)}$
be the state obtained by Schr\"odinger time evolution according to
$H(t/\tau)$ with initial condition $\ket{\psi_l(0)} =
\ket{\phi_l(0)}$. The diabatic transition amplitude to any other eigenstate 
$\ket{\phi_k(s)}$ such that $H(s) \ket{\phi_k(s)} = E_k(s)
\ket{\phi_k(s)}$ ($k \neq l$) is \cite{Messiah}
\begin{equation}
\label{traditional_integral}
\braket{\phi_k(s)}{\psi_l(\tau s)} \sim \int_{0}^{s} d \sigma
\frac{\bra{\phi_k(\sigma)} \frac{dH}{ds} \ket{\phi_l(\sigma)}}{E_k(\sigma)-E_l(\sigma)}
e^{i \tau (\varphi_k(\sigma)-\varphi_l(\sigma))} \left( 1 + O(1/\tau)
\right) \,.
\end{equation}
(The integrand is made well-defined by the phase convention
$\bra{\phi_k} \frac{d \ket{\phi_k}}{ds} = 0$.) Here,
\begin{equation}
\varphi_l(\sigma) = \int_{0}^{\sigma} d \sigma' E_l(\sigma') \,.
\end{equation}
In the case that $E_l$, $E_k$, and $\bra{\phi_k} \frac{dH}{ds}
\ket{\phi_l}$ are $s$-independent, this integral gives
\begin{equation}
\label{traditional}
\braket{\phi_k(s)}{\psi_l(\tau s)} \sim \left( 1 - e^{i \tau
  (E_k-E_l)s} \right) \frac{\bra{\phi_k}
  \frac{dH}{ds} \ket{\phi_l}}{-i \tau (E_k-E_l)^2} (1 + O(1/\tau)) \,.
\end{equation}
In the case that these quantities are approximately $s$-independent,
\eq{traditional} should hold as an approximation. 
The quadratic dependence on $E_k-E_l$ is an adiabatic approximation
traditionally used in physics. Motivated by applications in quantum
computation, mathematicians have developed bounds that hold rigorously 
even with strong $s$ dependence \cite{Ruskai}. These more general results
have a less favorable (cubic) dependence on $E_k-E_l$.
However, the traditional adiabatic approximation appears to be applicable 
to our analysis.

In reality, we wish to prepare a wavepacket state, not an
eigenstate. However, the wavepacket is well separated from other
particles and narrowly concentrated in momentum space. Thus, we shall
approximate it as an eigenstate $\ket{\phi_l(s)}$. Furthermore, by our
choice of path, the energy gap is kept constant to first order in the
coupling, and thus \eq{traditional} should be a good approximation
to \eq{traditional_integral}.

Summing the transition amplitudes to some state $\ket{\phi_k}$ from
the $J$ steps in our preparation process and applying the
triangle inequality\footnote{The $O(J)$ scaling obtained by the
  triangle inequality can be confirmed by a more detailed calculation
  taking into account the relative phases of the contributions to the
  total transition amplitude.} yield the following:
\begin{equation} 
\label{totaldiabatic}
\left| \braket{\phi_k}{\psi_l(\tau)} \right| = O \left(
\frac{1}{\tau} \sum_{j=0}^{J-1} \left| \frac{\bra{\phi_k(j/J)} \frac{dH}{ds}
  \ket{\phi_l(j/J)}}{(E_k(j/J) - E_l(j/J))^2} \right| \right) \,.
\end{equation}

The $j=0$ term in this sum can be evaluated exactly, because it arises
from the free theory. At $j \neq 0$ the theory is no longer exactly
solvable. However, one obtains the lowest-order contribution to the matrix
element $\bra{\mathbf{p}_1,\mathbf{p}_2,\mathbf{p}_3,\mathbf{p}_4;s=1} 
\phi^4 \ket{\mathrm{vac}(1)}$ in renormalized perturbation theory
simply by taking the $j=0$ expression and replacing $m_0$
with the physical mass and $\lambda_0$ with the physical coupling. Our
adiabatic path, \eq{weakm}, is designed so that the physical mass
at $s=1$ matches the bare mass at $j=0$.  Furthermore, the physical 
coupling differs from the bare coupling only by a logarithmically 
divergent (in $a$) correction for $d=3$ and non-divergent corrections 
for $d=1,2$.\footnote{
By calculations analogous to those in \sect{sec:renorm}, one obtains
\begin{eqnarray*} \label{lambda}
\lambda & = & 
\left\{
\begin{array}{ll}
\lambda_0 - \frac{3\lambda_0^2}{8\pi (m^{(1)})^2}  
+ \cdots \,, & \text{for $d=1$},\\[5pt]
\lambda_0 - \frac{3\lambda_0^2}{16\pi m^{(1)}} 
+ \cdots \,, & \text{for $d=2$},\\[5pt]
\lambda_0 + \frac{3\lambda_0^2}{16\pi^2} \log(m^{(1)} a)
+ \cdots \,, & \text{for $d=3$}.
\end{array}
\right.
\end{eqnarray*}
}
Thus, we can make the following approximation:
\begin{equation}
\label{summedadiabatic}
\left| \braket{\phi_k}{\psi_l(\tau)} \right| = \tilde{O} \left(
\frac{J}{\tau} \left| \frac{\bra{\phi_k(0)} \frac{dH}{ds}
  \ket{\phi_l(0)}}{(E_k(0) - E_l(0))^2} \right| \right) \,.
\end{equation}

Diabatic errors come in two types, creation of particles from the
vacuum and splitting of the incoming particles. The matrix element in the
numerator of \eq{summedadiabatic} can correspondingly be decomposed
as the sum of two contributions. We first consider particle creation
from the vacuum, taking $\ket{\phi_l(s)}$ to be
$\ket{\mathrm{vac}(s)}$. 

By \eq{weakm} and \eq{weaklambda},
\begin{equation}
\frac{dH}{ds} = \sum_{\mathbf{x} \in \Omega} a^d \left[
  \frac{\lambda_0}{4!} \phi^4(\mathbf{x}) + \frac{1}{2} \delta_m
  \phi^2(\mathbf{x}) \right] \,.
\end{equation}
Substituting this into the numerator of \eq{summedadiabatic},
setting $\ket{\phi_l(0)} = \ket{\mathrm{vac}(0)}$, and expanding
$\phi$ in terms of creation and annihilation operators show that the
only potentially non-zero transition amplitudes are to states
$\ket{\phi_k(0)}$ of two or four particles. The transition amplitude
to states of two particles has contributions from the $\phi^4$ term
and the $\phi^2$ term in $\frac{dH}{ds}$. The contribution from the
$\phi^2$ term is\footnote{To simplify the presentation, we assume
in \eq{alpha2}--\eq{alphatot}, \eq{treelevel} and \eq{asplit} that 
all momenta are distinct (for example, $\mathbf{p}_1\neq\mathbf{p}_2$). 
The cases of degenerate momenta differ only by numerical factors.}
\begin{eqnarray}
\label{alpha2}
\bra{\mathbf{p}_1, \mathbf{p}_2} \sum_{\mathbf{x} \in
  \Omega} a^d \frac{\delta_m}{2} \phi^2(\mathbf{x})
\ket{\mathrm{vac}(0)}
& = & \sum_{\mathbf{x} \in \Omega} a^d \frac{\delta_m}{L^{2d}} e^{-i
  (\mathbf{p}_1 + \mathbf{p}_2) \cdot \mathbf{x}} \frac{1}{2
  \sqrt{\omega(\mathbf{p}_1) \omega(\mathbf{p}_2)}} \bra{
  \mathbf{p}_1, \mathbf{p}_2} a_{\mathbf{p}_1}^\dag
a_{\mathbf{p}_2}^\dag \ket{\mathrm{vac}(0)} \nonumber \\
& = & \sum_{\mathbf{x} \in \Omega} \frac{a^d}{L^d} e^{-i (\mathbf{p}_1
  + \mathbf{p}_2) \cdot \mathbf{x}} \frac{\delta_m}{2
  \sqrt{\omega(\mathbf{p}_1) \omega(\mathbf{p}_2)}} \nonumber \\
& = & \frac{\delta_m}{2 \omega(\mathbf{p}_1)} \delta_{\mathbf{p}_1 +
  \mathbf{p}_2, 0} \nonumber \\
& = & (- \lambda_0 \mu^{(1)} - \lambda_0^2 \mu^{(2)}) \frac{\delta_{\mathbf{p}_1 +
  \mathbf{p}_2, 0}}{2 \omega(\mathbf{p}_1)} \,.
\end{eqnarray}
The contribution from the $\phi^4$ term is
\begin{equation}
\bra{\mathbf{p}_1, \mathbf{p}_2} \sum_{\mathbf{x}
  \in \Omega} a^d \frac{\lambda_0}{4!} \phi^4(\mathbf{x})
\ket{\mathrm{vac}(0)}
 = \lambda_0 \mu^{(1)} \frac{\delta_{\mathbf{p}_1 +
    \mathbf{p}_2,0}}{2 \omega(\mathbf{p}_1)} \,,
\end{equation}
by the definition of $\mu^{(1)}$ (that is, $\mu^{(1)}$ is the
first-order mass correction given by the first diagram in \eq{eq:diags}).

Thus, the total matrix element for creating two particles is
\begin{equation}
\label{alphatot}
\bra{\mathbf{p}_1, \mathbf{p}_2} \frac{dH}{ds} \ket{\mathrm{vac}(0)} =
- \lambda_0^2 \mu^{(2)} \frac{\delta_{\mathbf{p}_1 +
    \mathbf{p}_2,0}}{2 \omega(\mathbf{p}_1)} + O(\lambda_0^3) \,.
\end{equation}
Note that this requires tuning of $\delta_m$.
The total probability $P_{\mathrm{create}}^{(2)}$ of creating two
particles is obtained by summing the squared amplitudes for all
possible two-particle outgoing states. Thus, by \eq{alphatot} and
\eq{summedadiabatic},
\begin{eqnarray}
P_{\mathrm{create}}^{(2)} & \sim & \frac{J^2 \lambda_0^4}{\tau^2} 
( \mu^{(2)} )^2 \sum_{\mathbf{p} \in \Gamma}
\frac{1}{\omega(\mathbf{p})^6} \\
& \sim & \frac{J^2 \lambda_0^4 V}{\tau^2} ( \mu^{(2)} )^2
\int_\Gamma d^d p \frac{1}{(\mathbf{p}^2 + m^2)^3} \,,
\end{eqnarray}
where the notation $\int_\Gamma$ denotes 
$\int_{-\pi/a}^{\pi/a} \dotsi \int_{-\pi/a}^{\pi/a}$.
(Four powers of $1/\omega(\mathbf{p})$ come from the square of
$1/(E_k - E_l)^2$ and two powers come from the square of 
$\bra{\mathbf{p}_1, \mathbf{p}_2} \frac{dH}{ds} \ket{\mathrm{vac}(0)}$.)
For $d=1,2,3$, this integral is nondivergent as $a \to 0$. Thus,
\begin{equation}
\label{pcreate2}
P_{\mathrm{create}}^{(2)} = \tilde{O} \left( \frac{J^2 \lambda_0^4
  ( \mu^{(2)} )^2 V}{\tau^2 m^{6-d}} \right) \,.
\end{equation}
By \eq{Jscaleweak},
\begin{equation}
\label{jo}
J = \left\{ \begin{array}{ll}
\tilde{O} \left( \sqrt{\tau} \right) \,, & d=1,2 \,, \\
\tilde{O} \left( \frac{\sqrt{\tau}}{a} \right) \,, & d=3 \,.
\end{array} \right.
\end{equation}
By \eq{mu2},
\begin{equation}
\label{mu2o}
\mu^{(2)} = \left\{ \begin{array}{ll}
\tilde{O}(1) \,, & d=1,2 \,, \\
\tilde{O}(1/a^2) \,, & d=3 \,.
\end{array} \right.
\end{equation}
Thus, by \eq{pcreate2}, \eq{jo}, and \eq{mu2o},
\begin{equation}
P_{\mathrm{create}}^{(2)} = \left\{ \begin{array}{ll}
\tilde{O}\left( \frac{V}{\tau} \right) \,, & d=1,2 \,, \\
\tilde{O} \left( \frac{V}{\tau a^6} \right) \,, & d=3 \,.
\end{array} \right.
\end{equation}

Next, we consider the amplitude to create four particles.
At $s=0$, the corresponding term in the numerator of \eq{summedadiabatic} is
\begin{equation}
\label{treelevel}
\bra{\mathbf{p}_1,\mathbf{p}_2,\mathbf{p}_3,\mathbf{p}_4} 
\frac{\lambda_0}{4!}
\sum_{\mathbf{x} \in \Omega} a^d \phi^4(\mathbf{x})
\ket{\mathrm{vac}(0)} = \frac{ \lambda_0
  \delta_{\mathbf{p}_1+\mathbf{p}_2+\mathbf{p}_3+\mathbf{p}_4,0}}
{4 V \sqrt{\omega(\mathbf{p}_1) \omega(\mathbf{p}_2)
    \omega(\mathbf{p}_3) \omega(\mathbf{p}_4)}}
\,.
\end{equation}
We obtain the probability of excitation due to creation of four
particles from the vacuum by substituting the matrix element 
above into \eq{summedadiabatic}, squaring the resulting amplitude,
and summing over all allowed combinations of the four outgoing
momenta. Thus,
\begin{equation}
P_{\mathrm{create}}^{(4)} \sim 
\sum_{\mathbf{p}_1,\mathbf{p}_2,\mathbf{p}_3,\mathbf{p}_4 \in \Gamma}
\frac{J^2 \lambda_0^2
  \delta_{\mathbf{p}_1+\mathbf{p}_2+\mathbf{p}_3+\mathbf{p}_4,0}}
{V^2 \tau^2 (\omega(\mathbf{p}_1)+\omega(\mathbf{p}_2) +
  \omega(\mathbf{p}_3) + \omega(\mathbf{p}_4))^4 \omega(\mathbf{p}_1)
  \omega(\mathbf{p}_2) \omega(\mathbf{p}_3) \omega(\mathbf{p}_4)} 
\,.
\end{equation}
One can verify that this sum has the following scaling as $a\to 0$:
\begin{equation}
P_{\mathrm{create}}^{(4)} =
\label{pcreate}
\left\{ \begin{array}{ll} \tilde{O} \left(
  \frac{V J^2}{\tau^2} \right) \,, & d=1,2 \,, \vspace{5pt} \\
\tilde{O} \left( \frac{V J^2}{\tau^2 a} \right) \,, & d=3 \,.
\end{array} \right.
\end{equation}
By \eq{Jscaleweak} and \eq{pcreate}, 
\begin{equation}
P_{\mathrm{create}}^{(4)}= \left\{ \begin{array}{ll}
\tilde{O} \left( \frac{V}{\tau} \right) \,, & d=1,2 \,, \\
\tilde{O} \left( \frac{V}{\tau a^3} \right) \,, & d=3 \,.
\end{array} \right.
\end{equation}

Finally, we consider the process in which the time dependence of the
$\phi^4$ term causes a single particle to split into three. For this
process, the relevant matrix element is
\begin{equation}
\label{asplit}
\bra{\mathbf{p}_2,\mathbf{p}_3,\mathbf{p}_4} \frac{\lambda_0}{4 !} 
\sum_{\mathbf{x} \in \Omega} 
a^d \phi^4(\mathbf{x}) \ket{\mathbf{p}_1} = \frac{ \lambda_0
  \delta_{\mathbf{p}_2+\mathbf{p}_3+\mathbf{p}_4,\mathbf{p}_1}}
{4 V \sqrt{\omega(\mathbf{p}_1) \omega(\mathbf{p}_2)
    \omega(\mathbf{p}_3) \omega(\mathbf{p}_4)}} \,,
\end{equation}
where $\mathbf{p}_1$ is the momentum of the incoming particle. 
By our choice of path, the physical mass is $s$-independent to
first order in the coupling, and the $s$ dependence of the 
coupling is only logarithmically divergent as $a \to 0$. Thus,
by \eq{summedadiabatic},
\begin{equation}
\label{Psplitdef}
P_{\mathrm{split}} \sim \frac{J^2}{\tau^2 V^2}
\sum_{\mathbf{p}_2,\mathbf{p}_3,\mathbf{p}_4 \in \Gamma}
\frac{\lambda_0^2 \delta_{\mathbf{p}_2+\mathbf{p}_3+\mathbf{p}_4,\mathbf{p}_1}}
{(\omega(\mathbf{p}_2)+\omega(\mathbf{p}_3)+\omega(\mathbf{p}_4)-\omega(\mathbf{p}_1))^4 
\omega(\mathbf{p}_1) \omega(\mathbf{p}_2) \omega(\mathbf{p}_3) \omega(\mathbf{p}_4)} \,.
\end{equation}

Let us now examine the divergence structure of $P_{\mathrm{split}}$ as
$a \to 0$. In the limit of large volume, the sum converges to the
following integral:
%
\[
\frac{J^2}{\tau^2} \int_\Gamma \frac{d^d p_2}{(2\pi)^d}
\int_\Gamma \frac{d^d p_3}{(2\pi)^d} \frac{\lambda_0^2}{(\omega(\mathbf{p}_2)+\omega(\mathbf{p}_3) +
  \omega(\mathbf{p}_1-\mathbf{p}_2-\mathbf{p}_3) - \omega(\mathbf{p}_1))^4 \omega(\mathbf{p}_1) \omega(\mathbf{p}_2)
  \omega(\mathbf{p}_3) \omega(\mathbf{p}_1-\mathbf{p}_2-\mathbf{p}_3)}
\,.
\]
If this were divergent as $a \to 0$, then, by approximating the integrand 
with its value at large
$|\mathbf{p}_2|$ and $|\mathbf{p}_3|$, we would be able to isolate the
divergence:
\begin{equation}
P_{\mathrm{split}} \sim \frac{J^2 \lambda_0^2}{\tau^2
  \omega(\mathbf{p}_1)} \int_\Gamma d^d p_2 \int_\Gamma d^d p_3
\frac{1}{(|\mathbf{p}_2|+|\mathbf{p}_3|+|\mathbf{p}_2+\mathbf{p}_3|)^4
  |\mathbf{p}_2| |\mathbf{p}_3| |\mathbf{p}_2+\mathbf{p}_3|} \,.
\end{equation}
However, for $d=1,2,3$, this is convergent as $a \to 0$. Thus, recalling 
\eq{Jscaleweak}, we obtain
\begin{equation}
P_{\mathrm{split}} = O \left( \frac{J^2}{\tau^2} \right) =
\left\{ \begin{array}{ll} \tilde{O} \left( \frac{1}{\tau} \right) \,, 
& d=1,2 \,, \\
\tilde{O} \left( \frac{1}{\tau a^2} \right) \,, & d=3 \,.
\end{array} \right.
\end{equation}

We can consider two criteria regarding diabatic particle creation. If
our detectors are localized, we may be able to tolerate a low constant
density of stray particles created during state preparation. This
background is similar to that encountered in experiments and may not
invalidate conclusions from the simulation. Alternatively, one could adopt a
strict criterion by demanding that, with high probability, not even
one stray particle is created in the volume being simulated during
state preparation. This strict criterion can be quantified by
demanding that the adiabatically produced state has an inner product of at
least $1-\epsilon_{\mathrm{ad}}$ with the exact state. This
parameter $\epsilon_{\mathrm{ad}}$ is thus directly comparable
with $\epsilon_{\mathrm{trunc}}$, and the two sources of error can be
added. Applying the strict criterion, we demand that
$P_{\mathrm{create}}^{(2)}$, $P_{\mathrm{create}}^{(4)}$, and 
$P_{\mathrm{split}}$ each be of order
$\epsilon_{\mathrm{ad}}$ and obtain
\begin{equation}
\tau_{\mathrm{strict}} = \left\{ \begin{array}{ll} 
\tilde{O} \left(\frac{V}{\epsilon_{\mathrm{ad}}} \right) \,, &  d=1,2 \,,\\
\tilde{O} \left(\frac{V}{a^6 \epsilon_{\mathrm{ad}}} \right) \,, & d=3 \,.
\end{array} \right. 
\end{equation}
Applying the more lenient criterion that
$(P_{\mathrm{create}}^{(2)}+P_{\mathrm{create}}^{(4)})/V$ and
$P_{\mathrm{split}}$ each be of order $\epsilon_{\mathrm{ad}}$ yields
\begin{equation}
\tau_{\mathrm{lenient}} = \left\{ \begin{array}{ll} 
\tilde{O} \left( \frac{1}{\epsilon_{\mathrm{ad}}} \right) \,, & d=1,2 \,,\\
\tilde{O} \left( \frac{1}{a^6 \epsilon_{\mathrm{ad}}} \right) \,, & d=3 \,.
\end{array} \right.
\end{equation}
If a $k\th$-order Suzuki-Trotter formula is used, the asymptotic scaling of 
the total number of gates needed for adiabatic state preparation is
$O \big( (\mathcal{V} \tau)^{1+\frac{1}{2k}} \big) = O \big( (V
\tau/a^d)^{1+\frac{1}{2k}} \big)$. Thus,
\begin{eqnarray}
\label{Gstrict}
G_{\mathrm{adiabatic}}^{\mathrm{strict}} & = &
\left\{ \begin{array}{ll} \tilde{O} \left(
\left( \frac{V^2}{a^d \epsilon_{\mathrm{ad}}} \right)^{1+\frac{1}{2k}}
\right) \,, & d=1,2 \,, \\
\tilde{O} \left( \left( \frac{V^2}{a^9 \epsilon_{\mathrm{ad}}}
\right)^{1+ \frac{1}{2k}} \right) \,, & d=3 \,. \end{array} \right. \\
\label{Glenient}
G_{\mathrm{adiabatic}}^{\mathrm{lenient}} & = & 
\left\{ \begin{array}{ll} \tilde{O} \left(
\left( \frac{V}{a^d \epsilon_{\mathrm{ad}}} \right)^{1+\frac{1}{2k}}
\right) \,, & d=1,2 \,, \\
\tilde{O} \left( \left( \frac{V}{a^9 \epsilon_{\mathrm{ad}}}
\right)^{1+ \frac{1}{2k}} \right) \,, & d=3 \,. \end{array} \right.
\end{eqnarray}

\subsubsection{Strong Coupling}
\label{strong}

In two and three spacetime dimensions, we can obtain a strongly coupled
(that is, non-perturbative) field theory by approaching the 
phase transition (\sect{QPT}). As in the case of weak
coupling, the necessary time for adiabatic state preparation depends
on various physical parameters of the system being simulated,
including the momentum of the incoming particles, the volume, the
strength of the final coupling, the number of spatial dimensions, and
the physical mass. To keep the discussion concise, we restrict our
discussion to the case of ultrarelativistic incoming particles, with
coupling strength close to the critical value. Under these conditions,
the incoming particles can produce a shower of many 
$(n_{\mathrm{out}} \sim p/m)$ outgoing particles. Because of the strong
coupling, perturbation theory is inapplicable and, even
if it could be used, would take exponential computation in the number of
outgoing particles.

In the strongly coupled case, we vary the Hamiltonian \eq{HS}
with $s$ by keeping the bare mass constant at $m_0$ and setting the
bare coupling to $s \lambda_0$. We choose $\lambda_0$ only slightly
below the critical value $\lambda_c$, so that at $s=1$ the system
closely approaches the phase transition, as illustrated in 
Fig.~\ref{paths}. Examining \eq{thetap2} suggests that we can estimate
phase errors by understanding the behavior of $m^2(s)$ at $s=0$ and
$s=1$, without needing to know exactly what happens in between. 
From \eq{mu1}, 
\begin{equation}
\left. \frac{d m^2}{d s} \right|_{s=0} = \left\{ \begin{array}{ll} 
\frac{\lambda_0}{8 \pi} \log \left( \frac{64}{m_0^2 a^2} \right) \,, &
d=1 \,, \\
\frac{25.379}{16
  \pi^2} \frac{\lambda_0}{a} \,, & 
d=2 \,,
\end{array} \right.
\end{equation}
and, 
from \eq{numass} and \eq{nu}, 
\begin{equation}
\left. \frac{d m^2}{d s} \right|_{s=1} \sim \left\{
\begin{array}{ll}
-2 \lambda_0 (\lambda_c - \lambda_0) \,, & 
d=1 \,, \\
-1.26 \lambda_0 (\lambda_c - \lambda_0)^{0.26} \,, & 
d=2 \,.
\end{array} \right.
\end{equation}
Thus, \eq{phasecrit} yields
\begin{equation}
\label{unconcrete}
J = \tilde{O} \left( \sqrt{ \frac{\tau \lambda_0}{a^{d-1} p^2 \mathcal{D}}}
\right) \,, \quad d=1,2 \,,
\end{equation}
under the assumption that $(\lambda_c - \lambda_0)$ is very
small.

The result \eq{thetap2} rests on two approximations, a Taylor
expansion to second order in \eq{thetaj} and an
approximation of a sum by an integral in \eq{thetap1}. The validity
conditions for these approximations become most stringent at $s=1$, 
where the derivatives of $m^2$ with respect to $s$ become large. 
Working out the $O(J^{-4})$ term in \eq{thetap2} at $s=1$, one
finds that it will be much smaller than the $O(J^{-2})$ term at $s=1$
provided
\begin{equation}
\label{Taylorcrit}
J \gg \frac{1}{\lambda_c - \lambda_0} \,.
\end{equation}
Similarly, higher-order terms in the Taylor expansion are suppressed
by additional powers of $\frac{1}{J(\lambda_c-\lambda_0)}$. The criterion
\eq{Taylorcrit} also suffices to justify the approximation of the sum
by an integral in \eq{thetap1}. 

We must next consider adiabaticity to determine $\tau$. In
the ultrarelativistic limit, the relevant energy gap $\gamma$ is $\sim
\frac{m^2}{p}$.\footnote{This is the case unless there exists a bound
state of three particles whose binding energy is at least $2m$.} 
This takes its minimum value at $s=1$, namely,
\begin{equation}
\label{gammamin}
\gamma_{\min} \sim \left\{ \begin{array}{ll}
\frac{(\lambda_c - \lambda_0)^2}{p} \,, & d=1 \,, \\
\frac{(\lambda_c - \lambda_0)^{1.26}}{p} \,, & d=2 \,.
\end{array} \right.
\end{equation}
Unlike in the perturbative case, we cannot make a detailed
quantitative analysis. However, under the condition~(\ref{Taylorcrit}),
the energy gap $\gamma$ changes only slightly in any adiabatic step of
the process described by \eq{Mj}, \eq{Uj}, and \eq{Mprod}. Thus we
apply the traditional adiabatic approximation \eq{traditional} and
find that each adiabatic step contributes an excitation amplitude of order
$\frac{1}{\tau \gamma^2}$. Depending on the relative phases of the excitation
amplitudes arising from the $J$ steps, the total excitation amplitude
could be as high as $\frac{J}{\tau \gamma^2}$. Indeed, a detailed
analysis applying \eq{traditional} to \eq{Mj}, \eq{Uj}, and
\eq{Mprod} suggests that this bound is not overly pessimistic.
Thus, to keep the error probability at some small constant
$\epsilon_{\mathrm{ad}}$,
we choose 
\begin{equation}
\label{tauscale}
\tau \sim \frac{J}{\gamma^2 \sqrt{\epsilon_{\mathrm{ad}}}} \,.
\end{equation}

We now consider asymptotic scaling with $p$ for fixed
$\lambda_0$. To achieve continuum-like behavior we need $a \ll
\frac{1}{p}$. Thus \eq{unconcrete} yields
\begin{equation}
\label{pscaleJ}
J \sim \tau^{1/2} p^{(d-3)/2} \,, \quad d=1,2 \,.
\end{equation}
Substituting \eq{Taylorcrit} and \eq{gammamin} into
\eq{tauscale}, we see that we need
\begin{equation}
\label{cond1}
\tau \gtrsim p^2 \,, \quad d=1,2 \,.
\end{equation}
Substituting \eq{pscaleJ} and \eq{gammamin} into
\eq{tauscale}, we see that we also need 
\begin{equation}
\label{cond2}
\tau \gtrsim p^{d+1} \,, \quad d=1,2 \,.
\end{equation}
The scaling $\tau = O(p^{d+1})$ for $d=1,2$ suffices to satisfy both 
conditions~(\ref{cond1}) and (\ref{cond2}). Thus, by the results of
\sect{Trotter}, the total number of gates needed for adiabatic state
preparation scales as
\begin{eqnarray}
G_{\mathrm{strong}} & = & 
O \left( \left( V \tau \right)^{1+o(1)} p^{d+1+o(1)} \right) \\
& = & O \left( V^{1+o(1)} p^{2d+2+o(1)} \right),
\label{strongpscale}
\end{eqnarray}
for $d=1,2$. 

Next, we consider asymptotic scaling with $(\lambda_c - \lambda_0)$
for fixed $p$. 
We substitute \eq{unconcrete} into \eq{tauscale}, obtaining
\begin{equation}
\tau \sim \left\{ \begin{array}{ll} \left( \frac{1}{\lambda_c -
    \lambda_0} \right)^8 \,, & d=1 \,, \\
\left( \frac{1}{\lambda_c - \lambda_0} \right)^{5.04} \,, & d=2 \,.
\end{array} \right. 
\end{equation}
The scaling of $J$ as $\sqrt{\tau}$ in \eq{unconcrete} automatically 
satisfies the condition (\ref{Taylorcrit}). 
The spacing between particles in the in and out states must be of order 
$1/m$. Thus, with constant $a$, $\mathcal{V} \sim 1/(\lambda_c -
\lambda_0)^{\nu d}$. If a $k\th$-order Suzuki-Trotter formula is used, the
necessary number of quantum gates, $G_{\mathrm{strong}}$, scales as
$O\big( ( \mathcal{V} \tau )^{1+\frac{1}{2k}} \big)$. Thus,
\begin{equation}
\label{stronglambdascale}
G_{\mathrm{strong}} \sim \left\{ \begin{array}{ll} \left( \frac{1}{\lambda_c -
    \lambda_0} \right)^{9 \left( 1+\frac{1}{2k} \right)} \,, & d=1 \,, \\
\left( \frac{1}{\lambda_c - \lambda_0} \right)^{6.3
  \left(1+\frac{1}{2k} \right) } \,, & d=2 \,.
\end{array} \right. 
\end{equation}
Note that one could improve this scaling by choosing a more optimized
adiabatic state-preparation schedule, which slows down as the gap gets
smaller. 


\subsection{Suzuki-Trotter Formulae for Large Lattices}
\label{Trotter}

It appears that, while scaling with $t$ has been thoroughly studied,
little attention has been given to scaling of quantum simulation
algorithms with the number of lattice sites $\mathcal{V}$.
Using a result of Suzuki and elementary Lie algebra theory, we derive
linear scaling provided the Hamiltonian is local.

For any positive integer $k$ and any pair of Hamiltonians $A, B$,
\begin{equation}
\label{hightrotter}
\left( e^{i A \alpha_1 t/n} e^{i B \beta_1 t/n} e^{i A \alpha_2 t/n}
e^{i \beta_2 B t/n} \dotsm e^{i A \alpha_r t/n} \right)^n = e^{i (A + B) t} 
+ O(t^{2k+1}/n^{2k}) \,,
\end{equation}
where $r = 1+5^{k-1}$ and
$\alpha_1,\ldots,\alpha_r,\beta_1,\ldots,\beta_{r-1}$ are specially
chosen coefficients such that $\sum_{j=1}^r \alpha_j = 1$ and
$\sum_{j=1}^{r-1} \beta_j = 1$ \cite{Suzuki}. Thus, using the $k\th$-order 
Suzuki-Trotter formula \eq{hightrotter}, one can simulate evolution 
for time $t$ with $O\big( t^{\frac{2k+1}{2k}} \big)$ quantum gates
\cite{Cleve_sim}. To determine the $\mathcal{V}$ scaling, we use the
following standard theorem (cf. the Baker-Campbell-Hausdorff
formula).
\begin{theorem}\label{BCH}
Let $A$ and $B$ be elements of a Lie algebra defined over any field of
characteristic 0. Then $e^{A} e^{B} = e^{C}$, where $C$ is a formal
infinite sum of elements of the Lie algebra generated by $A$ and $B$.
\end{theorem}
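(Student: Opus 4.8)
The plan is to reduce the statement to the universal case of the free Lie algebra and then prove it using the characterization of Lie elements as \emph{primitive} elements (Friedrichs' criterion). \textbf{Reduction to the free case.} The Lie subalgebra generated by $A$ and $B$ is the image of the free Lie algebra $\mathcal{L}$ on two generators $x,y$ under the homomorphism $x\mapsto A$, $y\mapsto B$. Hence it suffices to produce, once and for all, an expression $C=\sum_{n\ge 1}C_n(x,y)$ in which each homogeneous component $C_n$ is a bracket polynomial, i.e.\ an element of $\mathcal{L}$; applying the homomorphism then yields the desired formal sum for arbitrary $A,B$. I would realize $\mathcal{L}$ inside its universal enveloping algebra, which for the free Lie algebra is the free associative algebra $\mathcal{A}=k\langle x,y\rangle$, with $\mathcal{L}\subset\mathcal{A}$ the Lie subalgebra generated by $x,y$ under the commutator. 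To make sense of $e^x$, $e^y$ and of $\log$, pass to the completion $\widehat{\mathcal{A}}=k\langle\langle x,y\rangle\rangle$ with respect to the grading by total degree. Because each fixed degree receives contributions from only finitely many terms of any power series, $\exp$ maps the positive-degree part bijectively onto $1+(\text{positive-degree part})$, with inverse $\log$, and all the usual formal identities hold.

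\textbf{Strategy.} Set $g=e^xe^y$ and $C=\log g\in\widehat{\mathcal{A}}$; its homogeneous components are the $C_n$. I must show $C$ lies in the completed free Lie algebra $\widehat{\mathcal{L}}$. The key tool is Friedrichs' criterion: over a field of characteristic $0$, an element of $\widehat{\mathcal{A}}$ belongs to $\widehat{\mathcal{L}}$ if and only if it is primitive for the coproduct, i.e.\ $\Delta(u)=u\otimes 1+1\otimes u$, where $\Delta\colon\widehat{\mathcal{A}}\to\widehat{\mathcal{A}}\,\widehat{\otimes}\,\widehat{\mathcal{A}}$ is the continuous algebra homomorphism determined by $\Delta(x)=x\otimes 1+1\otimes x$ and $\Delta(y)=y\otimes 1+1\otimes y$. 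Thus the whole problem reduces to checking that $C$ is primitive.

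\textbf{Execution.} First observe that $e^x$ is \emph{group-like}, meaning $\Delta(e^x)=e^x\otimes e^x$: since $\Delta$ is an algebra homomorphism, $\Delta(e^x)=e^{\Delta(x)}=e^{x\otimes 1+1\otimes x}$, and because $x\otimes 1$ and $1\otimes x$ commute this equals $e^{x\otimes 1}e^{1\otimes x}=(e^x\otimes 1)(1\otimes e^x)=e^x\otimes e^x$. The same holds for $e^y$, and since the product of group-like elements is group-like (as $\Delta$ is a homomorphism), $g=e^xe^y$ is group-like. Applying $\log$ to $\Delta(g)=g\otimes g=(g\otimes 1)(1\otimes g)$, and using that the two commuting factors each have constant term $1$, gives
\[
\Delta(C)=\log\Delta(g)=\log(g\otimes 1)+\log(1\otimes g)=C\otimes 1+1\otimes C \,,
\]
so $C$ is primitive, and therefore $C\in\widehat{\mathcal{L}}$. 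Reading off degree $n$ shows each $C_n$ is a Lie element, and applying $x\mapsto A$, $y\mapsto B$ completes the proof.

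\textbf{Main obstacle.} The only substantial step is Friedrichs' criterion itself. The easy direction (Lie $\Rightarrow$ primitive) holds because the primitive elements form a Lie subalgebra containing $x,y$, but the converse (primitive $\Rightarrow$ Lie) is where characteristic $0$ is essential. I would establish it via the Dynkin--Specht--Wever projection: the operator sending a degree-$n$ word $z_1z_2\cdots z_n$ to the left-normed bracket $\tfrac{1}{n}[z_1,[z_2,[\dots,[z_{n-1},z_n]\dots]]]$ is a projection of the degree-$n$ part of $\widehat{\mathcal{A}}$ onto that of $\widehat{\mathcal{L}}$, acting as the identity on primitive elements. The division by $n$ is exactly where the characteristic-$0$ hypothesis enters. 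No analytic convergence is needed anywhere: everything is organized degree by degree, which is precisely the content of the ``formal infinite sum'' in the statement.
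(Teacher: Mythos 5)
The paper does not actually prove Theorem~\ref{BCH}: it states it as a standard result (``cf.\ the Baker--Campbell--Hausdorff formula'') and immediately uses it to control the error term $\Delta_{2k+1}$ in the Suzuki--Trotter analysis, so there is no in-paper argument to compare against. Your proposal supplies a correct and entirely standard proof: the reduction to the free Lie algebra on two generators, the passage to the completed free associative algebra where $\exp$ and $\log$ are degree-by-degree bijections, the observation that $e^x e^y$ is group-like so that $C=\log(e^xe^y)$ is primitive, and the invocation of Friedrichs' criterion to conclude that each homogeneous component $C_n$ is a Lie polynomial. All of these steps are sound, and you correctly isolate the one place where real work (and the characteristic-$0$ hypothesis) is needed, namely the implication ``primitive $\Rightarrow$ Lie,'' which you propose to handle via the Dynkin--Specht--Wever projection. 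One small caution: the statement that the normalized left-bracketing operator ``acts as the identity on primitive elements'' is essentially equivalent to the converse direction of Friedrichs' criterion, so as written it is the assertion to be proved rather than a proof of it; a complete write-up would need either the convolution/Eulerian-idempotent computation showing $\theta_n(u)=nu$ for primitive $u$, or the classical DSW argument. This does not affect the correctness of the overall plan, and for the purposes of the paper --- which only needs the qualitative fact that $C$ is built from iterated commutators of $A$ and $B$, so that locality bounds $\|\Delta_{2k+1}\|=O(\mathcal{V})$ --- your argument is more than adequate.
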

$A$ and $B$ generate a Lie algebra by commutation and linear
combination. Thus, without requiring any explicit calculation,
Theorem~\ref{BCH} together with \eq{hightrotter} implies
\begin{eqnarray}
\left( e^{i A \alpha_1 t/n} e^{i B \beta_1 t/n} \dotsm e^{i A
  \alpha_r t/n} \right)^n & = & e^{i (A + B) t} + 
\bigg[\frac{1}{n}\sum_{j=1}^{n} 
e^{i (A + B) tj/n} \Delta_{2k+1} e^{i (A + B) t(n-j)/n} \bigg]
\frac{t^{2k+1}}{n^{2k}} \nn\\
& & + \,\, O(n^{-(2k+1)})  \,,
\end{eqnarray}
\begin{equation}
\Big\lVert 
\left( e^{i A \alpha_1 t/n} e^{i B \beta_1 t/n} \dotsm e^{i A
  \alpha_r t/n} \right)^n - e^{i (A + B) t} 
\Big\rVert
= \frac{t^{2k+1}}{n^{2k}}  \| \Delta_{2k+1} \| 
+  O(n^{-(2k+1)})\,,
\qquad\qquad\,\,
\end{equation}
where $\Delta_{2k+1}$ is a linear combination of nested
commutators. In general, $\| \Delta_{2k+1} \|$ could be as large as 
$\left( \max \left\{ \|A\|, \|B\| \right\} \right)^{2k+1}$. However, 
by \eq{canonical}, one sees that, for the pair of local 
Hamiltonians $H_{\phi}, H_{\pi}$, $\|\Delta_{2k+1}\| =
O(\mathcal{V})$, for any fixed $k$. Thus, one needs only
$n = O \big( t^{\frac{2k+1}{2k}} \mathcal{V}^{\frac{1}{2k}} 
\big)$. Recalling the $O(\mathcal{V})$ cost for simulating each 
$e^{i H_{\phi} \delta t}$ or $e^{i  H_{\pi} \delta t}$, one sees that 
the total number of gates scales as $O\big( ( t \mathcal{V} 
)^{1+\frac{1}{2k}} \big)$. Note that this conclusion may be of 
general interest, as it applies to any lattice Hamiltonian for which 
non-neighboring terms commute.
Our scaling analysis also applies to Suzuki-Trotter decompositions 
\cite{suzuki:1993,wiebe:2010} for time-dependent Hamiltonians.

In the case of strong coupling, we care not only about how the number
of gates scales with $\mathcal{V}$, but also about scaling with
$p$. In the presence of high-energy incoming particles, the field can
have large distortions from its vacuum state. For example, if
$\bra{\psi} \phi(\mathbf{x}) \ket{\psi}$ is large, then local terms in
$\Delta_{2k+1} \ket{\psi}$ such as $\pi(\mathbf{x}) \phi(\mathbf{x})^3
\ket{\psi}$ can become large. We can obtain a heuristic upper bound on
this effect by noting that, in the strongly coupled case, $m_0^2 > 0$,
so each local term in $H$ is a positive operator. Thus, if $\bra{\psi}
H \ket{\psi} \leq E$, then the expectation value of each of the local
terms is bounded above by $E$. Using $E$ as a simple estimate of the
maximum magnitude of a local term, we see that $\Delta_{2k+1}
\ket{\psi}$, which is a sum of $O(\mathcal{V})$ terms, each of which
is of degree $2k+1$ in the local terms of $H$, has magnitude at most
$O(\mathcal{V} E^{2k+1})$, or in other words $O(\mathcal{V}
p^{2k+1})$. Recalling that $a$ scales as a small multiple of $1/p$, we
see that $\Delta_{2k+1} \ket{\psi} = O(V p^{2k+1+d})$. Thus, $n =
O(p^{1+(1+d)/2k}t^{1+1/2k})$. Each timestep requires $O(\mathcal{V}) =
O(Vp^d)$ gates to implement. Thus, the overall scaling is
$O(p^{d+1+o(1)} (tV)^{1+o(1)})$ quantum gates to simulate the strongly
coupled theory at large $p$.


\subsection{Measurement of Occupation Numbers}
\label{measurement}

In \sect{lattice}, we saw that the free theory can be decomposed into
bosonic modes of definite momentum. In this section, we analyze the
complexity of measuring the occupation of these modes. 

By \eq{laddercom1}, the operator $L^{-d} a_{\mathbf{p}}^\dag
a_{\mathbf{p}}$ has eigenvalues $0,1,2,3,\ldots$, which indicate the
number of particles in momentum mode $\mathbf{p}$. By the method of
phase estimation \cite{Kitaev_book}, measuring $L^{-d}
a_{\mathbf{p}}^\dag a_{\mathbf{p}}$ reduces to simulating $e^{i
  L^{-d} a_{\mathbf{p}}^\dag a_{\mathbf{p}} t}$ for various $t$. By \eq{a},
\begin{equation}
L^{-d} a_{\mathbf{p}}^\dag a_{\mathbf{p}} = \Pi_{\mathbf{p}} +
\Phi_{\mathbf{p}} + \chi_{\mathbf{p}} \,,
\end{equation}
where
\begin{eqnarray}
\label{Pip}
\Pi_{\mathbf{p}} & = & \frac{a^d}{\mathcal{V}} \sum_{\mathbf{x},\mathbf{y}
  \in \Omega} e^{i \mathbf{p} \cdot (\mathbf{x}-\mathbf{y})} 
\frac{1}{2 \omega(\mathbf{p})} \pi(\mathbf{x}) \pi(\mathbf{y}) \,, \\
\label{Phip}
\Phi_{\mathbf{p}} & = & \frac{a^d}{\mathcal{V}} \sum_{\mathbf{x},\mathbf{y}
  \in \Omega} e^{i \mathbf{p} \cdot (\mathbf{x}-\mathbf{y})}
\frac{\omega(\mathbf{p})}{2} \phi(\mathbf{x}) \phi(\mathbf{y}) \,, \\
\chi_{\mathbf{p}} & = & \frac{i a^{d}}{2\mathcal{V}}
\sum_{\mathbf{x},\mathbf{y} \in \Omega} 
e^{i \mathbf{p} \cdot (\mathbf{x} - \mathbf{y})} \left[ 
\phi(\mathbf{x}) \pi(\mathbf{y}) - \pi(\mathbf{x}) \phi(\mathbf{y}) \right]
\,.
\end{eqnarray}
Simulating  $e^{i L^{-d} a_{\mathbf{p}}^\dag a_{\mathbf{p}} t}$ with standard
Suzuki-Trotter formulae is not very efficient because of the
$O(\mathcal{V}^2)$ mutually non-commuting terms in
$\chi_{\mathbf{p}}$. For example, the method of \cite{Cleve_sim}
requires $O(\mathcal{V}^4)$ quantum gates for this task. 

However, there is a simple remedy for this inefficiency. 
A short calculation shows that
\begin{equation}
2(\Phi_{\mathbf{p}} + \Pi_{\mathbf{p}}) = \frac{1}{L^d} \big(
a_{\mathbf{p}}^\dag a_{\mathbf{p}} + a_{-\mathbf{p}}^\dag
a_{-\mathbf{p}} \big) + \id .
\end{equation}
By \eq{hightrotter}, the problem of simulating time evolution induced
by $\frac{1}{L^d} \big( a_{\mathbf{p}}^\dag a_{\mathbf{p}} + a_{-\mathbf{p}}^\dag
a_{-\mathbf{p}} \big) + \id$ thus reduces to the problem of simulating each
of $e^{i \Pi_{\mathbf{p}} t}$ and  $e^{i \Phi_{\mathbf{p}} t}$. 
One can efficiently simulate each of 
$e^{i \Pi_{\mathbf{p}} t}$ and  $e^{i \Phi_{\mathbf{p}} t}$ 
by going to the basis in which it is diagonal,
reversibly computing the induced phases, and then applying them by
phase kickback (\sect{reversible}).

The nested commutators appearing in the leading correction to the 
$k\th$-order Suzuki-Trotter approximation have norm $O(\mathcal{V})$. To see
this, first note that by \eq{Pip} and \eq{Phip}, each power of
$\Pi_{\mathbf{p}}$ or $\Phi_{\mathbf{p}}$ contributes a
coefficient $\frac{a^d}{\mathcal{V}}$ and a summation over
$\mathcal{V}^2$ pairs of lattice points. Thus a nested commutator at
$(k+1)\th$ order in $\Pi_{\mathbf{p}}$ and $\Phi_{\mathbf{p}}$ consists of
a coefficient $\frac{a^{d(k+1)}}{\mathcal{V}^{k+1}}$ times a sum of
$\mathcal{V}^{2(k+1)}$ nested commutators of $\phi(\mathbf{x})
\phi(\mathbf{y})$ operators and $\pi(\mathbf{x}) \pi(\mathbf{y})$
operators. Out of these $\mathcal{V}^{2(k+1)}$ nested commutators,
only $O(\mathcal{V}^{k+2})$ are non-zero, because of the product of $k$ delta
functions arising from \eq{canonical}. The sum of
$O(\mathcal{V}^{k+2})$ non-zero terms all with coefficient
$\frac{a^{d(k+1)}}{\mathcal{V}^{k+1}}$ yields, by the triangle
inequality, a total operator norm of $O(\mathcal{V})$.

By the above analysis, with a $k\th$-order Suzuki-Trotter formula,
it suffices to use $O\big( t^{\frac{2k+1}{2k}}
\mathcal{V}^{\frac{1}{2k}} \big)$ timesteps. Each step requires
$\tilde{O}(\mathcal{V}^2)$ gates to simulate, owing to the cost of reversibly
computing the double sums in $\Pi_{\mathbf{p}}$ and
$\Phi_{\mathbf{p}}$, so the total cost of measuring the combined
occupation of modes $\mathbf{p}$ and $-\mathbf{p}$ is $\tilde{O}\big(
t^{\frac{2k+1}{2k}} \mathcal{V}^{2+\frac{1}{2k}} \big)$. To
distinguish eigenvalues separated by a gap $\gamma$ using phase
estimation, we can choose $t \sim 1/\gamma$. All of the
eigenvalues of $\Phi_{\mathbf{p}} + \Pi_{\mathbf{p}}$ are separated by
gaps of $1/2$. Thus, $t = O(1)$.

Subtracting $1$ from the eigenvalue produced by phase estimation 
yields the total occupation of modes $\mathbf{p}$ and $-\mathbf{p}$.
This method has the deficiency of failing to distinguish particles with
momentum $\mathbf{p}$ from those with momentum
$-\mathbf{p}$. Furthermore, the complexity of this measurement scales
quadratically with $V$. Both of these problems can be corrected by
simulating localized detectors. Let $\mathcal{V}_D \subset \mathcal{V}$ be
the region occupied by our simulated detector. As the simplest case, one can
consider $\mathcal{V}_D = a \mathbb{Z}_{\hat{L}_D}^d$; in general, 
$\mathcal{V}_D$ is any spatial translation of this. 
Then, the corresponding Fourier basis on this region is 
$\Gamma_D = \frac{2 \pi}{L_D} \mathbb{Z}_{\hat{L}_D}^d$,
where $L_D = a \hat{L}_D$ is the length of the detector. A set of operators
$\{a_{\mathbf{p},D}, a_{\mathbf{p},D}^\dag | \mathbf{p} \in \Gamma_D\}$
can be obtained by replacing $\Omega$, $\Gamma$, and 
$\mathcal{V}$ with their local counterparts, $\Omega_D$, $\Gamma_D$, and
$\mathcal{V}_D$, in \eq{a}. These operators obey all the commutation 
relations that one expects of creation and annihilation operators.

Physically, the operator $a_{\mathbf{p},D}^\dag a_{\mathbf{p},D}$ can be
interpreted as a number operator for the region $\mathcal{V}_D$. By the
results of \sect{locality}, $a_{\mathbf{p},D}^\dag a_{\mathbf{p},D}$ can
be exponentially well approximated by a linear combination of $\phi$ and
$\pi$ operators with support only within $\mathcal{V}_D$ or 
a distance $O(1/m_0)$ of $\mathcal{V}_D$. Thus, by phase
estimation, we can measure in the eigenbasis of
$\frac{1}{L_D^d} \big( a_{\mathbf{p}, D}^\dag a_{\mathbf{p}, D} +
a_{-\mathbf{p},D}^\dag a_{-\mathbf{p},D} \big) + \id$ using 
$O\big(\mathcal{V}_D^{2 + \frac{1}{2k}}\big)$ quantum gates, independently of
$V$.

The localized detector 
$\frac{1}{L_D^d} \big( a_{\mathbf{p}, D}^\dag a_{\mathbf{p}, D} +
a_{-\mathbf{p},D}^\dag a_{-\mathbf{p},D} \big) + \id$ detects only
particles in the region $\mathcal{V}_D$. Thus it is partially
momentum-resolving and partially position-resolving. In accordance
with the uncertainty principle, the momentum resolution of this
detector must be lower than that of 
$\frac{1}{L^d} \big( a_{\mathbf{p}}^\dag a_{\mathbf{p}} +
a_{-\mathbf{p}}^\dag a_{-\mathbf{p}} \big) + \id$. This is reflected
in the fact that the momentum lattice $\Gamma_D$ is more
coarse-grained than $\Gamma$. After a collision, the shower of
outgoing particles will be directed outward from the collision
region. 
Thus, to resolve the $\mathbf{p}$ vs $-\mathbf{p}$ ambiguity, it should 
suffice to surround the collision region with a small number of localized 
detectors. For example, one could simulate $2d$ detectors corresponding to 
the faces of a $d$-dimensional cube surrounding the collision region.

Number operators with different momenta or within different spatial regions
commute (\sect{locality}).
One can thus measure $q$ momentum modes within
each of $r$ spatial regions simply by repeating the phase estimation 
procedure for each number operator, with total complexity 
$O \big( r q \mathcal{V}_D^{2+\frac{1}{2k}} \big)$.


\subsection{Detection of Bound States}
\label{detectors}

In some scattering processes, especially at strong coupling, the
outgoing particles may be in bound states. In this case, it may not be
desirable to turn off the coupling adiabatically and measure the
occupation numbers of momentum modes of the free theory. Instead, we
can measure the total energy and momentum within each of a set of spatial
regions. If the regions are small compared with the separation between
particles (some of which may be composite), then each region contains
at most one particle, and we obtain its energy and spatial
momentum. There is some probability that multiple particles  are in a
single region, in which case our simulated ``detector'' fails to
resolve them. Although the analogy between our measurement and real
detectors is rather loose, both share this basic feature of
limited resolution.

To measure the energy in a region $R \subseteq \Omega$, we multiply the 
Hamiltonian density by an envelope function, $f_R(\mathbf{x})$, localized 
in $R$: 
\begin{equation}
H[R] = \sum_{\mathbf{x} \in \Omega} a^d f_R(\mathbf{x}) 
\left[ \frac{1}{2} \pi(\mathbf{x})^2
  + \frac{1}{2} \left( \nabla_a \phi \right)^2 (\mathbf{x}) +
  \frac{1}{2} m_0^2 \phi(\mathbf{x})^2 + \frac{\lambda_0}{4!}
  \phi(\mathbf{x})^4 \right] \,. 
\end{equation}
The operator $H[R]$ will then be sensitive only to particles in $R$. 

The operator $H[R]$ does not commute with the full Hamiltonian, $H$. 
Even in the vacuum, not only the value of $H[R]$ but also 
its uncertainty will be non-zero. The vacuum expectation value of
$H[R]$ can be measured and subtracted. The uncertainty in $H[R]$ is
undesirable because it reduces the signal-to-noise ratio of the
measurements. In $d=1,2$ (the cases of interest for strong coupling), 
choosing $f_R$ to be Gaussian suffices to keep this problem under control; 
the rapid decay of the Fourier transform of $f_R$ suppresses the 
contribution of short-wavelength modes to $H[R]$. Calculations show that, as $a \to 0$, the 
variance in $H[R]$ scales as
\begin{equation}
\sigma^2 \sim \left\{ \begin{array}{cl} \frac{1}{r^3 m}\,, & d=1, \\[3pt]
\frac{1}{r^2} \log \left( \frac{1}{ma} \right)\,, & d=2,
\end{array} \right.
\end{equation}
where $r$ is the spatial width of the Gaussian envelope $f_R$. Thus,
one can lower the noise by increasing the radius of the detector, and
this radius need only scale at most logarithmically in $1/a$. 

By the canonical commutation relations, $\left[ H[R], H[R'] \right] =
0$ whenever the supports of $f_R$ and $f_{R'}$ are
disjoint. Truncating the tails of the Gaussian envelopes to obtain
this property approximates the exactly Gaussian detectors exponentially well. 
The spacetime location of the particle detection relative to the scattering 
vertex indicates the particle's velocity. Together with the measured energy,
this allows an estimate of the particle's momentum and hence the
identification of bound states. Geometrically, one sees that a
detector of radius $r$ at a distance $\ell$ from the scattering point 
yields a measurement of particle velocity with uncertainties in angle
and magnitude each scaling as $r/\ell$. Thus, it suffices to make $\ell$ 
scale at most logarithmically in $1/a$ to achieve constant detector 
resolution in $d=1,2$.

One can use phase estimation to measure energy to precision
$\Delta_E$ in a region $R$ by implementing $e^{-iH[R]t}$ for $t \sim
\frac{1}{\Delta_E}$. $e^{-iH[R]/\Delta_E}$ can be implemented with $O
\big( \left( |R|/\Delta_E \right)^{1+o(1)} \big)$ quantum gates
(\sect{Trotter}). (Note that the number of gates needed 
scales as $p^{d+1}$ in the limit
of large particle momentum; see \sect{Trotter}.) If one wishes to
detect outgoing particles in the full solid angle around the
scattering region, then one needs to simulate a spherical shell of
detectors. Such a shell encompasses only a small fraction of
the total volume being simulated. Hence, the cost of the
Suzuki-Trotter simulation of these detectors is small compared with
that of simulating the time evolution according to the full
Hamiltonian. 


\subsection{Localized Creation Operators}
\label{locality}

In this section we show that the operator $a_{\mathbf{x}}^\dag$
defined in \eq{axdag} is quasilocal, namely, it can be expanded in
the form
\begin{equation}
a_{\mathbf{x}}^\dag = \sum_{\mathbf{y}} \left[ f(\mathbf{y}-\mathbf{x})
  \pi(\mathbf{y}) + g(\mathbf{y} - \mathbf{x}) \phi(\mathbf{y})
  \right] \,,
\end{equation}
where $f(\mathbf{y}-\mathbf{x})$ and $g(\mathbf{y} - \mathbf{x})$ are
each either zero or exponentially small for $| \mathbf{y} -
\mathbf{x}| \gg 1/m_0$. This implies that the Hamiltonian
$H_{\mathbf{\psi}}$ defined in \eq{hpsi} is quasilocal whenever
$\psi$ has local support. Therefore simulating time evolution
according to $H_{\psi}$ in order to create a wavepacket as in
\sect{details} does not disturb previously created wavepackets provided
they are well separated. Furthermore, the number of quantum gates
needed to simulate $H_{\psi}$ scales only with the number of lattice
sites in the support of $\psi$, rather than with $\mathcal{V}$. Similarly,
the quasilocality of $a_{\mathbf{x}}^\dag$ allows us to show in
\sect{measurement} that localized detectors can be simulated with
complexity scaling with their volume rather than with $V$.

By \eq{axdag} and \eq{a} one calculates
\begin{equation}
a_{\mathbf{x}}^\dag = \frac{1}{2} \phi(\mathbf{x}) - i
\sum_{\mathbf{y} \in \Omega} a^d f_d(\mathbf{y}-\mathbf{x}) \pi(\mathbf{y})
\,,
\end{equation}
where
\begin{equation}
f_d(\mathbf{y} - \mathbf{x}) = 
\sum_{\mathbf{p} \in \Gamma} L^{-d} e^{i \mathbf{p} \cdot (\mathbf{y}- \mathbf{x})}
\frac{1}{2 \omega(\mathbf{p})}
\,.
\end{equation}
Near the infinite-volume and continuum limits,
\begin{equation}
f_d(\mathbf{y} - \mathbf{x}) \simeq \int_\Gamma
\frac{d^d p}{(2 \pi)^d}
 \frac{e^{i \mathbf{p} \cdot (\mathbf{y} 
- \mathbf{x})}}{2 \sqrt{\mathbf{p}^2 + m_0^2}} \,.
\end{equation}
For $|\mathbf{y}-\mathbf{x}| \gg 1/m_0$, this function decays exponentially with
characteristic length $1/m_0$. For example, in $d=1$ with infinite
volume,
\begin{equation}
f_1(y - x) \simeq \int_{-\infty}^{\infty} \frac{dp}{4 \pi}
\frac{e^{ip(y-x)}}{\sqrt{p^2+m_0^2}} = \frac{1}{2\pi} K_0(m_0 |y-x|) \,,
\end{equation}
where $K_0$ is the modified Bessel function of the second kind, 
which satisfies
\begin{equation}
K_0(z) \sim \sqrt{\frac{\pi}{2z}} e^{-z} 
\,,\quad \textrm{$z \rightarrow \infty$.}
\end{equation}


\section{Some Field-Theoretical Aspects}
\label{fieldtheory}

This section describes some quantum field-theoretical details,
beginning with the perturbative renormalization of the mass
in \sect{sec:renorm}.

Section~\ref{sec:eft} gives a brief introduction to effective field 
theory, which is now a well-developed formalism underlying our modern 
understanding of quantum field theory. In its regime of validity,
typically below a particular energy scale, an effective field theory
reproduces the behavior of the full theory. Although it consists of
infinitely many terms, it can be truncated, with corresponding finite
and controllable errors.

In \sect{sec:a}, we analyse the effect of discretizing the spatial
dimensions of the continuum $\phi^4$ quantum field theory.
The discretized Lagrangian can be thought of as the leading 
contribution to an effective field theory. From the leading operators
left out we can thus infer the scaling of the error associated with  
a non-zero lattice spacing, $a$.
First, in \sect{ssec:eft}, we obtain the general form of the
effective field theory, including the scaling of the coefficients of
different operators. 
Next, in \sect{sec:weak}, we explicitly calculate the coefficients
of the leading operators in the complete effective field theory
at weak coupling, matching the full and effective theories at
an energy-momentum scale $p \sim 1/a$.
The effective field theory is demonstrated to consist of three different
classes of operators, shown with the scaling of their coefficients
in Table~\ref{table:eftshort}. At strong coupling, the operators and
their scaling remain the same at the matching scale, although the
explicit coefficients are no longer calculable.  However, the running
of the coefficients down to lower energies is determined by their
anomalous dimensions, which depend on the coupling strength. These
anomalous dimensions modify the scaling; at weak coupling the
modification is small, but at strong coupling it could be larger (though
the scaling will remain polynomial).

Section~\ref{sec:vol} addresses finite-volume effects, which should be
small, since the interactions in our field theory are short-range.
This expectation can be confirmed and quantified in the perturbative regime.
From a technical perspective, the finite volume means that, in the 
calculations of operator coefficients, integrals over loop momenta become 
(discrete) sums.
Consequently the coefficients of the operators are modified. The results
for the operator $\phi^6$ are shown in Table~\ref{table:coeff}. Note that 
magnitude of the coefficient is increased for finite length $L$.

In \sect{sec:Veff}, we assess another effect of a finite volume:
state preparation is affected, since asymptotic wavepackets are then 
only approximately free. As before, corrections should be small, since
the interactions are short-range, and we can confirm this claim in
the perturbative regime, this time by calculating the effective potential
in the Born approximation. The leading asymptotic behavior of the
effective potential is given in Table~\ref{table:Veff}.

\noindent
\medskip
\begin{table}[hbt]
\begin{center}
\begin{tabular}{|c|c|c|}
\hline \T\B
Class & Operators & Scaling of coupling 
\\
\hline
\hline
\T\B
I & $\phi^{2n}$ ($n \geq 3$) & $\lambda^n a^{2n-D}$ \\[3pt]
\hline
\T\B
II & $\phi \partial_{\bf x}^{2l} \phi$ ($l \geq 2$) & $a^{2l-2}$ 
\\
\hline
\T
III & $\phi^{2j+1} \partial_{\bf x}^{2l} \phi$ 
& $\lambda^{j+1} a^{2j+2l+2-D}$ \\
\B & ($j\geq 1$, $l \geq2$) 
&  \\
\hline
\end{tabular}
\vspace{6pt}
\caption{Effective field theory operators fall into three classes 
(\sect{sec:a}).
The general operator in each class is shown, with the canonical scaling 
of its coefficient in $D$ spacetime dimensions.
}
\label{table:eftshort}
\end{center}
\end{table}
\medskip

\noindent
\medskip
\begin{table}[hbt!]
\begin{center}
\begin{tabular}{|c|l|}
\hline \T\B
$D$ & 
\hspace{50pt} Coefficient of $\phi^6/6!$ \\
\hline 
\rule{0pt}{4.0ex}
2 \T & 
$ -\frac{45}{64\pi^5}\lambda^3 a^4 
\left[ 1 + \frac{20}{3} \frac{1}{\hat{L}^2} \right] $
\\[10pt]
3 & 
$  -\frac{5}{64\pi^5}\lambda^3 a^3 
\left[ 10\sqrt{2} + \frac{43\sqrt{2}}{\hat{L}^2} \right] $
\\[10pt]
4 & 
$ -\frac{15}{128\pi^5}\lambda^3 a^2 
\left[ 2(2\sqrt{3}+\pi) + \frac{4}{9}(26\sqrt{3}+9\pi)\frac{1}{\hat{L}^2}
\right] $
\\[10pt]
\hline
\end{tabular}
\vspace{6pt}
\caption{Wilson coefficient of operator $\phi^6/6!$ in effective field 
theory for $\phi^4$ theory in $D$ dimensions, with a finite number
$2\hat{L}$ of lattice sites in each dimension (\sect{sec:vol}).
Corrections to the square-bracketed expressions are of order 
$(m^2a^2,1/\hat{L}^3)$.
}
\label{table:coeff}
\end{center}
\end{table}

\noindent
\medskip
\begin{table}[hbt!]
\begin{center}
\begin{tabular}{|c|l|}
\hline \T\B
$D$ & 
\hspace{25pt}$V(r \rightarrow \infty)$ \\
\hline 
\rule{0pt}{4.0ex}
2 \T & 
$-\frac{\lambda^2}{32 m^3}\frac{1}{\sqrt{\pi mr}} e^{-2mr}
$
\\[10pt]
3 & 
$-\frac{\lambda^2}{64\pi^{3/2} m}\frac{1}{(mr)^{3/2}}
e^{-2mr} 
$
\\[10pt]
4 & 
$- \frac{\lambda^2}{128\pi^{5/2}m^{3/2}}\frac{1}{r^{5/2}} e^{-2mr}
$
\\[10pt]
\hline
\end{tabular}
\vspace{6pt}
\caption{Leading asymptotic behavior as $r \rightarrow \infty$ 
of effective potential for $\phi^4$ theory in $D$ dimensions 
(\sect{sec:Veff}).}
\label{table:Veff}
\end{center}
\end{table}

\subsection{Mass Renormalization} 
\label{sec:renorm}

In this section, we calculate the renormalized (or physical) mass
of the discretized theory in the perturbative regime.
First, in \sect{sec:discth}, we derive the action and the
propagator. Next, in \sect{sec:mass}, we use (a hybrid form of)
perturbation theory to obtain the mass to second order in the 
coupling.

\subsubsection{The Discretized Theory} \label{sec:discth}

In quantum field theory, the Lagrangian formulation is generally more 
convenient than the equivalent Hamiltonian formulation.
The Lagrangian (density) ${\cal L}(\phi,\partial_\mu\phi)$ defines the 
field theory and is related to the Hamiltonian density 
${\cal H}(\pi,\phi)$ by 
\begin{equation*}
{\cal H} = \pi \dot{\phi} - {\cal L} \,,
\end{equation*}
where $\pi = \partial{\cal L}/\partial\dot{\phi}$ is the conjugate
momentum density.
For the scalar $\phi^4$ quantum field theory,
\begin{equation} \label{eq:L}
{\cal L} = \frac{1}{2}\partial_\mu\phi\partial^\mu\phi
           - \frac{1}{2} m_0^2 \phi^2 - \frac{1}{4 !}\lambda_0\phi^4 \,,
\end{equation}
where $\mu = 0, 1, \ldots, d$ in $D=d+1$ spacetime dimensions.
Then the field and coupling have the following mass dimensions:
\begin{equation} \label{eq:dim}
 \left[ \phi \right] = \frac{D-2}{2} \,, \,\,\,
 \left[ \lambda_0 \right] = 4-D \,. 
\end{equation}
We shall discretize the spatial dimensions, that is, put them on
a lattice with spacing $a$,
\begin{equation}
a \mathbb{Z}^d = \big\{ \mathbf{x} \, \big| \, x_i/a \in \mathbb{Z} \big\} \,.
\end{equation}
The time dimension will be left continuous.
Spatial derivatives become (forward and backward) difference operators,	
\begin{eqnarray}
\Delta_i^f f(x) & = & 
\frac{1}{a} \left( f(x + a \hat{\imath}) - f(x) \right) \,, \\
\Delta_i^b f(x) & = & 
\frac{1}{a} \left( f(x) - f(x - a \hat{\imath})\right) \,, \nn
\end{eqnarray}
so that
\begin{eqnarray} \label{delsq}
-\nabla_a^2 f(x) & \equiv & - \Delta_i^b \Delta_i^f f(x) \nn \\
& = & \sum_{i=1}^d \frac{1}{a^2}
 \left( 2 f(x) - f(x + a\hat{\imath}) - f(x - a\hat{\imath}) \right)
\,.
\end{eqnarray}

The free-theory action becomes
\begin{eqnarray}
S_{\rm free} & = & -\frac{1}{2} \iint dx^0 dy^0 \sum_{\mathbf{x},\mathbf{y}}
a^{2d} \phi(x) \big(\partial_t^2 -\nabla_a^2 + m_0^2 \big)_{x,y} \phi(y)
\,,
\end{eqnarray}
where 
\begin{equation}
\big(\partial_t^2 -\nabla_a^2 + m_0^2 \big)_{x,y} 
= a^{-d} \big(\partial_t^2 -\nabla_a^2 + m_0^2 \big) \,
  \delta_{\mathbf{x},\mathbf{y}} \, \delta(x^0-y^0) \,.
\end{equation}
$G(x,y;a)$ is the inverse of  
$\left(\partial_t^2 -\nabla_a^2 + m_0^2 \right)_{x,y}$, that is,
\begin{equation}
\int dy^0 \sum_{\mathbf{y}}
a^{d} \big(\partial_t^2 -\nabla_a^2 + m_0^2 \big)_{x,y} G(y,z;a)
= a^{-d} \delta_{\mathbf{x},\mathbf{z}} \, (-i) \delta(x^0-z^0) \,.
\end{equation}
Using the Fourier transform
\begin{equation}
G(x,y;a) = \int \frac{dp^0}{2\pi} 
\int_{-\pi/a}^{\pi/a} \frac{d^dp}{(2\pi)^d}
e^{-i p\cdot(x-y)} \tilde{G}(p;a) \,,
\end{equation}
we obtain the propagator
\begin{equation} \label{propagatord}
\tilde{G}(p;a) = \frac{i}{(p^0)^2 
- \sum_{i=1}^d \frac{4}{a^2}\sin^2\left(\frac{a p^i}{2}\right)
- m_0^2} \,.
\end{equation}
In the limit $a \rightarrow 0$, we recover the familiar
propagator of the continuum theory,
\begin{equation} \label{propagatorc}
\tilde{G}(p) = \frac{i}{p^2-m_0^2} \,.
\end{equation}
The Lagrangian of the interacting theory with spatial dimensions 
put on a lattice is therefore
\begin{equation} \label{discL}
{\cal L}^{(0)} =  \frac{1}{2}(\partial_t \phi)^2
 + \frac{1}{2} \phi \nabla_a^2 \phi - \frac{1}{2} m_0^2\phi^2 
 - \frac{\lambda_0}{4 !}\phi^4
 \,.
\end{equation}

\subsubsection{The Physical Mass} \label{sec:mass}

The analysis of the adiabatic turn-on procedure involves the physical mass. 
A suitable expression for this can be obtained from a hybrid form of
perturbation theory, namely partially renormalized perturbation theory,
in which we use the bare coupling and field but the renormalized mass.
From \eq{discL}, we have
\begin{eqnarray} 
{\cal L}^{(0)} 
& = & 
 \frac{1}{2}(\partial_t \phi)^2
 + \frac{1}{2} \phi \nabla_a^2 \phi - \frac{1}{2} m^2\phi^2 
 - \frac{\lambda_0}{4 !}\phi^4 - \frac{1}{2} \delta_m\phi^2
\,,
\end{eqnarray}
where the mass counterterm is
\begin{equation}
\delta_m \equiv \lambda_0 \mu = m_0^2 - m^2 \,.
\end{equation}

The shift in the mass is determined by one-particle irreducible (1PI)
diagrams. These are diagrams that remain connected after any single line
is cut.
At first order in $\lambda_0$, the 1PI 
insertions into the propagator give 
\begin{eqnarray}
-i M(p)
& = & 
\begin{array}{l} \includegraphics[width=0.6in]{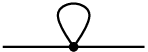} 
\end{array} 
+
\begin{array}{l} \includegraphics[width=0.6in]{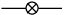} 
\end{array} 
 + O(\lambda_0^2)
\,,
  \label{eq:diags}
\end{eqnarray}
where the second diagram is the counterterm.
The calculation of the one-loop diagram, given in Appendix~B, 
implies that
\begin{eqnarray}
m_0^2 & = & m^2 + \lambda_0 \mu \,,
\nonumber \\ 
\mu & = & 
\left\{
\begin{array}{ll}
-\frac{1}{8\pi} \log\Big(\frac{64}{m^2a^2}\Big) 
+ \cdots \,,
 & \text{for $D=2$},\\
[5pt]
-\frac{r_0^{(2)}}{16\pi^2}\frac{1}{a} 
+ \cdots \,, 
& \text{for $D=3$},\\
[5pt]
-\frac{r_0^{(3)}}{32\pi^3}\frac{1}{a^2}
+ \cdots \,, 
& \text{for $D=4$},
\end{array}
\right.
   \label{eq:path}
\end{eqnarray}
where 
\begin{equation}
r_0^{(2)} = 25.379\ldots \,, \qquad
r_0^{(3)} = 112.948\ldots \,.
\end{equation}
Equation~(\ref{eq:path}) determines how the bare coupling $\lambda_0$
and bare mass $m_0$ are related if the physical mass $m$ has a specified
value to one-loop order.

At order $\lambda_0^2$, the 1PI amplitude has the additional
contributions
\begin{eqnarray}
\begin{array}{l} \includegraphics[width=0.6in]{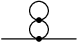} 
\end{array} 
+
\begin{array}{l} \includegraphics[width=0.6in]{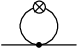} 
\end{array} 
+
\begin{array}{l} \includegraphics[width=0.6in]{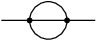} 
\end{array} 
\,.
\end{eqnarray}
The renormalization condition satisfied at first order in $\lambda_0$ 
implies that the first two diagrams cancel. The calculation of the 
remaining two-loop diagram (see Appendix~B) 
implies that 
\begin{eqnarray}
m^2 & = & 
\left\{
\begin{array}{ll}
(m^{(1)})^2 - \frac{\lambda_0^2}{384 (m^{(1)})^2} 
+ \cdots \,,
 & \text{for $D=2$},\\
[5pt]
(m^{(1)})^2 + \frac{\lambda_0^2}{96\pi^2} \log(m^{(1)} a)  
+ \cdots \,, 
& \text{for $D=3$},\\
[5pt]
(m^{(1)})^2 - \frac{r_1^{(3)}}{1536\pi^7} \frac{\lambda_0^2}{a^2} 
+ \cdots \,, 
& \text{for $D=4$},
\end{array}
\right.
\end{eqnarray}
where $m^{(1)}$ denotes the renormalized (physical) mass at one-loop 
order, namely, the quantity that is kept constant when one follows the
path specified by \eq{eq:path}. (The constant $r_1^{(3)}$ is defined in
Appendix~B.) 
If, instead, one keeps $m_0$ constant
while turning on $\lambda_0$, then the expression in terms of $m_0$,
that is, the result from bare perturbation theory equivalent to the above, 
is more relevant: 
\begin{eqnarray}
m^2 & = & 
\left\{
\begin{array}{ll}
m_0^2 + \frac{\lambda_0}{8\pi} \log\Big(\frac{64}{m_0^2a^2}\Big) 
-  \frac{\lambda_0^2}{64\pi^2 m_0^2} \log\Big(\frac{64}{m_0^2a^2}\Big)
- \frac{\lambda_0^2}{384 m_0^2} 
+ \cdots \,,
 & \text{for $D=2$},\\
[5pt]
m_0^2 + \frac{r_0^{(2)}}{16\pi^2}\frac{\lambda_0}{a}
- \frac{r_0^{(2)}}{256\pi^3}\frac{\lambda_0^2}{m_0a}
+ \frac{\lambda_0^2}{96\pi^2} \log(m_0 a)  
+ \cdots \,, 
& \text{for $D=3$},\\
[5pt]
m_0^2 + \frac{r_0^{(3)}}{32\pi^3}\frac{\lambda_0}{a^2}
+  \frac{r_0^{(3)}}{512\pi^5}\frac{\lambda_0^2}{a^2}\log(m_0 a)
- \frac{r_1^{(3)}}{1536\pi^7} \frac{\lambda_0^2}{a^2} 
+ \cdots \,, 
& \text{for $D=4$}.
\end{array}
\right.
\end{eqnarray}

\medskip

\subsection{Effective Field Theory} \label{sec:eft}

The formalism of effective field theories (EFTs) is typically used to
calculate observables in physically relevant theories and hence 
make predictions.
However, the influence of the EFT approach extends beyond just 
providing a tool for tackling otherwise intractable problems: indeed,
it has profoundly changed our understanding of renormalizability.  
In this work, the EFT framework is applied to determining the scaling
of lattice errors. Somewhat similar ideas were employed by Symanzik in 
the construction of improved actions in Euclidean lattice theories
\cite{Symanzik:1979ph,Symanzik:1983dc,Symanzik:1983gh}.

An effective field theory can be regarded as the low-energy 
limit of the fundamental theory under consideration. 
An EFT for a full theory is thus somewhat analogous to a Taylor series for
a function. 
The canonical example is 
Fermi theory, in which the four-fermion Hamiltonian is
\begin{eqnarray}
{\cal H}_{\rm eff} & = & \frac{4 G_F}{\sqrt{2}} 
      \left( \bar l_L \gamma^\mu \nu_L \right)
      \left( \bar u_L \gamma_\mu d_L \right) + \mbox{h.c.}
\end{eqnarray}
The modern interpretation is that this is an effective low-energy 
theory, in which the $W$ boson has been removed as an explicit,
dynamical degree of freedom. 
Pictorially, this corresponds to 
\medskip
\begin{center}
\includegraphics[width=3.5in]{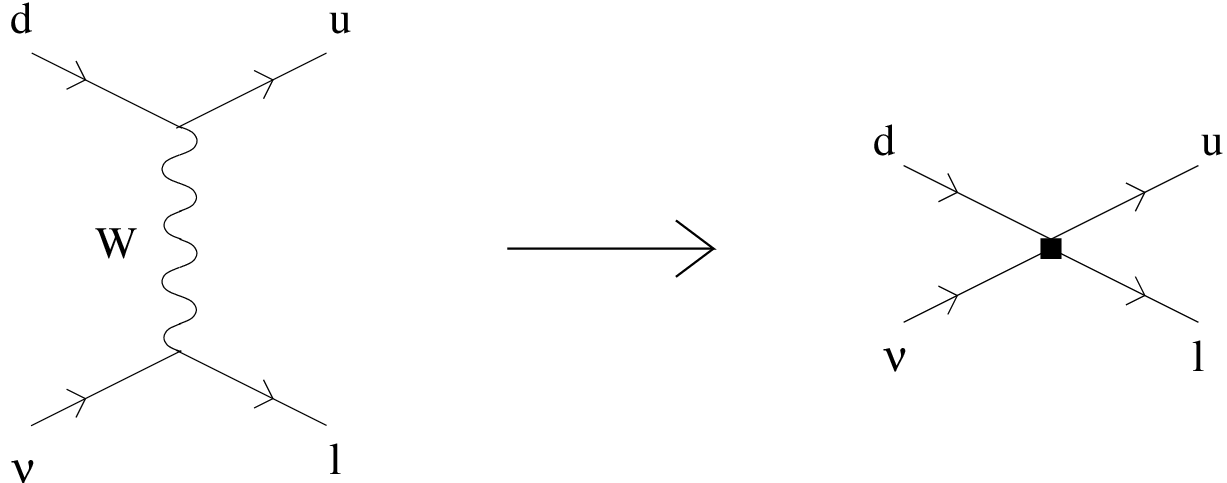}
\end{center}
\medskip
In other words, terms of order $k^2/M_W^2$ have 
been neglected in the propagator:
\begin{eqnarray} 
-\frac{i}{k^2-M_W^2} & = & \frac{i}{M_W^2} 
+ O \Big(\frac{k^2}{M_W^2}\Big)
\,.
\end{eqnarray}

An EFT is constructed from only the relevant infrared degrees of freedom 
and involves an expansion in some suitable small parameter.
The resulting effective Lagrangian will typically take the form
\begin{eqnarray}
  {\cal L}_{\text{eff}} & = & \sum_{n\geq 0} {\cal L}^{(n)} 
           \, = \, {\cal L}^{(0)} 
              + \sum_{n\geq 1}\sum_{i_n} \frac{c_{i_n}}{\Lambda^n}
                                         {\cal O}_{i_n}^{(n)}\,, 
   \label{eq:EFT}
\end{eqnarray}
where ${c_{i_n}}$ are dimensionless coefficients, $\Lambda$ is the
fundamental mass scale below which the EFT is valid, and the local operators 
${\cal O}_{i_n}^{(n)}$ have the same symmetries as the underlying theory.
This is an infinite series, but the higher the dimension of the operator 
the more powers of $\Lambda$ by which it is suppressed. In other words, the 
lowest-dimensional operators will be the most important ones.
Thus, in practice, one can truncate the series at some order dictated 
by the desired accuracy, so that one is left with a finite number of 
operators and hence a finite number of parameters ${c_{i_n}}$ to determine. 
If the underlying theory is known and weakly coupled, one may be able 
to compute the parameters. Otherwise, one can take them to be 
experimental inputs.

\subsection{Effects of Non-zero Lattice Spacing} \label{sec:a}

In this section, we determine the infinite series of operators
comprising the (complete) effective field theory whose leading terms
are ${\cal L}^{(0)}$.
The operators not included in ${\cal L}^{(0)}$ fall into three classes: 
operators of the form $\phi^{2n}$,
Lorentz-violating operators arising solely from discretization effects,
and Lorentz-violating operators due to discretization and quantum effects.
The Wilson coefficients of the ignored operators give the error 
associated with using ${\cal L}^{(0)}$ on a spatial lattice to approximate 
the continuum theory.

\subsubsection{The General Effective Theory} \label{ssec:eft}

The full (untruncated) effective Lagrangian will have every coupling respecting 
the $\phi \rightarrow -\phi$ symmetry and so will take the form
\begin{equation}
{\cal L}_{\rm eff} = {\cal L}^{(0)} + \frac{c}{6!}\phi^6 
+ c'\phi^3\partial^2\phi + \frac{c''}{8!}\phi^8 + \cdots \,.
\end{equation}
This can be simplified. First, the chain rule and integration by parts
(with boundary terms dropped) can be used to write any operator with two 
derivatives acting on different fields in the form $\phi^n \partial^2 \phi$. 
For example,
\begin{eqnarray}
\phi^2 \partial_\mu \phi \partial^\mu \phi
& = & \frac{1}{3} \partial_\mu(\phi^3) \partial^\mu \phi
\,\,\, \rightarrow \,\,\, - \frac{1}{3} \phi^3 \partial^2 \phi \,.
\end{eqnarray}
Such an operator can then be simplified via the 
equation of motion \cite{Arzt:1993gz,Georgi:1991ch}:
if this were $\partial^2 \phi + m^2 \phi = 0$, it would imply that the 
operator $\phi^3\partial^2\phi$ was redundant and could be eliminated
entirely.%
\footnote{The situation is modified by certain discretization effects,
a subtlety we shall describe later.} 
Equivalently, one can think of this reduction as making a field
redefinition on $\phi$ to eliminate the operator:
if we write 
\begin{equation}
{\cal L}_{\text{eff}} = \frac{1}{2}(\partial_\mu \phi)^2 
- \frac{1}{2} m^2 \phi^2 - \frac{\lambda}{4 !} \phi^4 
+ \eta g_1 \phi^6 +  \eta g_2 \phi^3 \partial^2 \phi \,,
\end{equation}
where $\eta$ is a small parameter, then the shift
of variables $\phi \rightarrow \phi + \eta g_2 \phi^3$
induces
\begin{equation}
{\cal L}_{\text{eff}}  \rightarrow \frac{1}{2}(\partial_\mu \phi)^2 
- \frac{1}{2} m^2 \phi^2 - \frac{\lambda'}{4 !} \phi^4 
+ \eta g_1' \phi^6 + O(\eta^2) \,.
\end{equation}
One can then iterate the process, repeatedly shifting variables to 
remove redundant terms order by order in $\eta$.

Equation~(\ref{eq:dim}) implies that
\begin{equation}
 \left[ c \right] = 6-2D \,, \,\,\,
 \left[ c'' \right] = 8-3D \,.
\end{equation}
In $D=4$ dimensions, $\left[ c \right] = -2$ and 
$\left[ c'' \right] = -4$. Since the only pertinent dimensionful
parameter is the lattice spacing, that is, $\Lambda \sim \pi/a$,
this means that $c \sim a^2$ and $c'' \sim a^4$. We see then
that, of the operators not included in the Lagrangian
${\cal L}^{(0)}$, $\phi^6$ is more significant than 
$\phi^{2n}, \, n > 3$, that is, its effects are suppressed by fewer powers
of $pa \ll 1$. 

In $D=2,3$, the scaling of the coefficients with $a$ is somewhat 
less obvious, because now the coupling $\lambda$ provides another 
dimensionful parameter (recall that $\left[\lambda\right] = 4-D$). 
To obtain the scaling of $c$, one should consider the Feynman diagram 
that generates the corresponding operator. This involves three $\phi^4$ 
vertices, so
\begin{eqnarray}
\begin{array}{l} \includegraphics[width=0.6in]{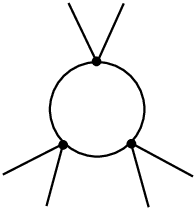} 
\end{array} 
& \sim & \lambda^3 a^{6-D} \,.
   \label{roundabout}
\end{eqnarray}
Equation~(\ref{roundabout}) refers not to the diagram's whole amplitude 
but only to its contribution to the coefficient $c$.
(Other diagrams involve higher powers of $\lambda$ and hence their
contributions are suppressed by higher powers of $a$.)
%
Likewise, the coefficient of $\phi^8$ will scale as $\lambda^4 a^{8-D}$, 
which means that it is suppressed by $a^2$ relative to the coefficient of 
$\phi^6$. 

In the following subsection, we verify these scalings in the perturbative
regime by explicit calculation.

\subsubsection{Matching} \label{sec:weak}

We must {\em match} the full theory on to the effective theory at a suitable 
energy scale. What this means is that we calculate matrix
elements in the full theory and in the effective theory. Comparing these
gives the Wilson coefficients (that is, the coefficients of the terms
in the effective Lagrangian). 
At weak coupling, that is, for sufficiently small values of the coupling, 
this matching can be done in ordinary perturbation theory.
In our case, the continuum theory and discretized theory correspond to the
full and effective theories, respectively, and the scale for matching is 
determined by the lattice spacing, $a$. 

\medskip

\paragraph{Operators induced purely by Discretization}

First, consider the matching of the two-point function.
Taylor expansion of \eq{delsq} gives
\begin{eqnarray}
-\nabla_a^2 f(x) & = &  - \nabla^2 f(x) 
- \sum_{i=1}^{d} \frac{1}{12} \partial_i^4 f(x) a^2
+ \cdots \,,
\end{eqnarray}
where $\nabla_a^2$ denotes the discrete Laplacian and
$\nabla^2$ denotes the continuum Laplacian. Thus, there
are Lorentz-violating operators induced purely by the
discretization. The leading operator of this kind is $\sum_{i=1}^{d} \phi
\partial_i^4 \phi$, which we shall denote by a box on a line.
Diagrammatically, the matching corresponds to
\begin{equation}
\includegraphics[width=2.4in]{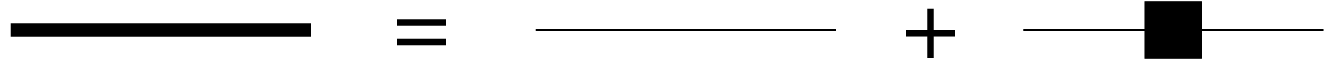}\,\,,
\end{equation}
with the full (effective) theory on the left-hand (right-hand) side
of the equation.
Letting $\tilde{c}/2$ be the coefficient of the operator
$\sum_{i=1}^{d} \phi \partial_i^4 \phi \equiv \phi \partial_{\mathbf{x}}^4 
\phi$, 
and expanding the denominator of \eq{propagatord} to order 
$p_{\mathbf{x}}^4$, we then have 
\begin{eqnarray}
\frac{i}{p^2-m^2} & = & \frac{i}{p^2+\frac{a^2}{12}p_{\mathbf{x}}^4-m^2}
+ i \tilde{c} p_{\mathbf{x}}^4 
\frac{i^2}{(p^2+\frac{a^2}{12}p_{\mathbf{x}}^4-m^2)^2}
\\
& = & \frac{i}{p^2-m^2} - i \frac{a^2 p_{\mathbf{x}}^4}{12(p^2-m^2)^2}
+ i \tilde{c} p_{\mathbf{x}}^4 \frac{i^2}{(p^2-m^2)^2} + \cdots
\\
\Rightarrow \qquad \tilde{c} & = & - \frac{a^2}{12} \,.
\end{eqnarray}
We can write down any operator of this type and calculate its Wilson 
coefficient (at the matching scale) in a similar manner. This matching
calculation is independent of the value of the coupling. Indeed, it has 
no direct relation to quantum mechanics: one can think of such operators 
as arising simply because the difference operators in the discretized theory 
are only approximately equal to the derivatives in the continuum theory.%
\footnote{However, their Wilson coefficients will depend on the scale and
mix (couple with other coefficients), in accordance with renormalization 
group equations.}

For convenience, we shall use the notation
\begin{equation}
\includegraphics[width=3in]{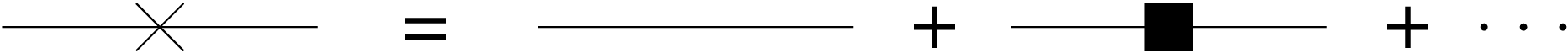} \,\,.
\end{equation}

\medskip

\paragraph{Operators induced by Quantum Effects}

As mentioned in \sect{sec:discth}, quantum effects (which correspond
to loop diagrams) induce operators of the form $\phi^{2n}$, $n \geq 3$ 
in the EFT, but not operators of the form $\phi^{2n-1}$, these being
``protected against'' by the $\phi \rightarrow -\phi$ symmetry.
Furthermore, the larger $n$ is, the greater the suppression in powers
of $a$. 

To obtain the coefficient of the operator $\phi^6$, we must
calculate the Feynman diagram
\begin{eqnarray}
\begin{array}{l} \includegraphics[width=0.6in]{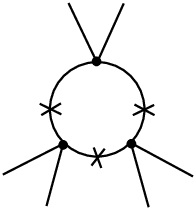} 
\end{array} 
+ \hbox{permutations} 
\,, 
\end{eqnarray}
but it suffices to consider zero external momentum, since this is a 
non-derivative operator.

Diagrammatically, matching of the six-point function is equivalent to
\begin{eqnarray}
\begin{array}{l} \includegraphics[width=0.6in]{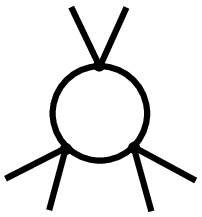}
\end{array}
& = &
\begin{array}{l} \includegraphics[width=0.6in]{diagrams/sixcross.eps}
\end{array}
+
\begin{array}{l} \includegraphics[width=0.6in]{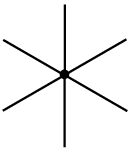}
\end{array}
\,,
\end{eqnarray}
where the diagram on the left-hand side refers to the full theory and
those on the right-hand side to the effective theory.\footnote{
The tree-level diagram with six external legs and one internal line
has the same value in the full and effective theories and thus does
not contribute to the matching.
}
(Permutations of the loop diagrams are implicitly included.)
The result to leading order in $a$ (see Appendix~C) 
is that the coefficient $c$ of the term $(c/6!)\phi^6$ is
\begin{eqnarray}
c & = & 
\left\{
\begin{array}{ll}
-\frac{45}{64\pi^5}\lambda^3 a^4\,, & \text{for $D=2$},\\[3pt]
-\frac{25\sqrt{2}}{32\pi^5}\lambda^3 a^3\,, & \text{for $D=3$},\\[3pt]
-\frac{15}{64\pi^5}(2\sqrt{3}+\pi)\lambda^3 a^2\,, & \text{for $D=4$}.
\end{array}
\right.
\end{eqnarray}

\medskip

\paragraph{Lorentz-Violating Operators induced by Quantum Effects}

Recall that we can use the equation of motion to eliminate operators
of the form $\phi^n \partial^2 \phi$. Note, however, that discretization
modifies the equation of motion from $\partial^2 \phi + m^2 \phi = 0$
to $\partial^2 \phi + m^2 \phi 
- \frac{1}{12} a^2 \phi^3 \partial_{\mathbf{x}}^4 \phi + \cdots = 0$.
Thus,
in addition to the Lorentz-violating operators induced purely by the 
discretization, there are Lorentz-violating operators induced by
loop effects (combined with the discretization). 
The leading operator of this kind is 
$\sum_i \phi^3 \partial_i^4 \phi \equiv \phi^3 \partial_{\mathbf{x}}^4 \phi$.

The precise coefficients of such operators are technically difficult to 
calculate even perturbatively (since the corresponding external momenta 
are non-zero).
However, one can determine their scalings in $a$, in the same manner
as for the other operators: the coefficient of 
$\phi^3 \partial_{\mathbf{x}}^4 \phi$ scales as $\lambda^2 a^{8-D}$.
Hence, this operator is suppressed relative to the leading operators 
in the other two classes.

%
%

\subsubsection{Strong Coupling} \label{sec:strong}

At strong coupling, perturbation theory is inapplicable and we can 
no longer calculate the Wilson coefficients explicitly. 
The $a$ dependence of the coefficients at the matching (energy) scale 
will be unchanged. However, the evolution of the coefficients to lower
scales affects the scaling, as we now describe.

For simplicity, consider first the perturbative regime in $D=4$.
When physical processes involve disparate energy scales, logarithms
of the ratio of those scales generically appear in calculations.
If the scales are widely separated, such that the logarithm is of the order
of the inverse of the expansion parameter, the perturbative expansion
will no longer be valid. 
(In our case, the magnitude of $\log(ma)$ should be compared with
$\lambda^2/(4\pi)$.)
This problem is dealt with as follows.
After one matches on to the appropriate EFT at the first energy scale, 
the (scale-dependent) Wilson coefficients are {\em run} down (that is, 
evolved) to the next energy scale by means of the renormalization group 
equations (RGEs), the solution of which resums the large logarithms.
This process is completely analogous to the case of gauge coupling 
constants, which run and obey renormalization group equations.

The RGEs are characterized by anomalous dimensions, whose effect is to 
add to (or subtract from) the power of the $a$ dependence of the coefficients. 
In the perturbative regime, the anomalous dimensions are small,
suppressed by the coupling.
At strong coupling, however, the anomalous dimensions are incalculable 
by known methods and potentially significant.  
(Nevertheless, the known existence of continuum limits in $D=2,3$
indicates that the anomalous dimensions of suppressed operators will
not override their canonical dimensions.)

\subsection{Effects of Finite Volume on EFT} \label{sec:vol}

Consider now a finite length $L$ of each dimension of the spatial lattice, 
with $\hat{L}$ lattice sites, so that $L = \hat{L}a$ (since periodic
boundary conditions are used).
Now the $d$-dimensional integral over loop momenta 
(in weak-coupling calculations)
becomes a 
$d$-dimensional sum:
\begin{eqnarray}
\int_{-\pi/a}^{\pi/a}\cdots\int_{-\pi/a}^{\pi/a}
\frac{d^dq}{(2\pi)^d}
 & \rightarrow & \frac{1}{(2\pi)^d}\frac{2^d\pi^d}{a^d \hat{L}^d}
 \sum_{q_1} \cdots \sum_{q_d}\,,
\end{eqnarray}
where each momentum component takes a finite number of values,
given by
\begin{eqnarray}
 q_i = \frac{2\pi}{a\hat{L}}n_i \,, \qquad
 n_i = -\frac{\hat{L}}{2}, -\frac{\hat{L}}{2}+1, \ldots,0,\ldots,
\frac{\hat{L}}{2}-1\,.
\end{eqnarray}

The Feynman-diagram calculation proceeds analogously to that in
the previous section; the only difference is that we now have 
a Riemann sum, which we know converges to the corresponding Riemann 
integral as $\hat{L} \rightarrow \infty$.
The difference between the Riemann sum and corresponding integral
is given by the Euler-Maclaurin summation formula 
\cite{Euler:1732,Poisson:1823},
which we therefore use to obtain (see Appendix~D) 
\begin{eqnarray}
c & = & -\frac{45}{64\pi^5}\lambda^3 a^4 
\left[ 1 + \frac{20}{3} \frac{1}{\hat{L}^2} + 
O\left(m^2a^2,\frac{m^2a^2}{\hat{L}^2},\frac{1}{\hat{L}^3}
\right)\right] \,, \qquad \text{for} \,\,\, D=2 \,.
\end{eqnarray}
Using the Euler-Maclaurin formula iteratively for two- and three-dimensional
sums, we also obtain 
\begin{eqnarray}
c & = & -\frac{5}{64\pi^5}\lambda^3 a^3 
\left[ 10\sqrt{2} + \frac{43\sqrt{2}}{\hat{L}^2} + 
O\left(m^2a^2,\frac{m^2a^2}{\hat{L}^2},\frac{1}{\hat{L}^3}
\right)\right] \,, \qquad \text{for} \,\,\, D=3 \,,
\end{eqnarray}
and
\begin{eqnarray}
c & = & -\frac{15}{128\pi^5}\lambda^3 a^2 
\left[ 2(2\sqrt{3}+\pi) + \frac{4}{9}(26\sqrt{3}+9\pi)\frac{1}{\hat{L}^2} + 
O\left(m^2a^2,\frac{m^2a^2}{\hat{L}^2},\frac{1}{\hat{L}^3}
\right)\right] \,, \qquad \text{for} \,\,\, D=4 \,.
\nonumber \\
\end{eqnarray}
Note that, in all dimensions, a finite $\hat{L}$ increases the
magnitude of the Wilson coefficient.


\subsection{Effect of Finite Volume on State Preparation} \label{sec:Veff}

The procedure for state preparation uses the fact that each particle
can be regarded as isolated and free when asymptotically separated. 
For a finite volume, this is an approximation, but the corrections
should be insignificant, since the interactions are short-range.

This claim is quantified by the effective potential $V(r)$ created 
by the interaction. Using the Born approximation from nonrelativistic
quantum mechanics, one can equate the scattering amplitude 
$i{\cal M}(2 \rightarrow 2)$ in the nonrelativistic limit
with ($-i$ times) the Fourier transform of $V(r)$.
For identical scalar particles, using the symmetric wavefunction of 
the two-particle system gives two terms, namely, 
\begin{eqnarray} \label{BornId}
\tilde{V}({\bf p}_i - {\bf p}_f) 
+ \tilde{V}({\bf p}_i + {\bf p}_f) 
\,,
\end{eqnarray}
where, in the center-of-momentum frame, the initial momenta are 
$\pm {\bf p}_i$, and the final momenta are $\pm {\bf p}_f$. 

Consider, then, the QFT amplitude. To lowest order in the coupling
$\lambda$, one simply obtains a repulsive contact term, that is, a term
proportional to the delta function. Since our interest is in the 
long-distance behavior, we must go to order $\lambda^2$.
If the incoming (outgoing) momenta for 
$2\text{-body} \rightarrow 2\text{-body}$ scattering are $p,p'$ ($k,k'$), 
then the amplitude can be expressed in terms of the Mandelstam variables 
$s = (p + p')^2$, $t = (k-p)^2$, $u = (k'-p)^2$.
The 
amplitude
\begin{eqnarray}
i{\cal M}
& = &
\begin{array}{l} \includegraphics[width=0.4in]{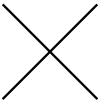}
\end{array}
+
\begin{array}{l} \includegraphics[width=0.4in]{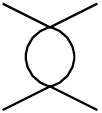}
\end{array}
+
\begin{array}{l} \includegraphics[width=0.5in]{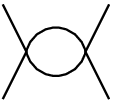}
\end{array}
+
\begin{array}{l} \includegraphics[width=0.4in]{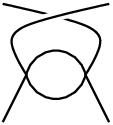}
\end{array}
+
\begin{array}{l} \includegraphics[width=0.4in]{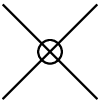}
\end{array}
+
\cdots
\end{eqnarray}
can be obtained by a straightforward perturbative calculation.
In the nonrelativistic limit and center-of-momentum frame, the $s$-channel 
contribution vanishes, and the $t$- and $u$-channels correspond to the 
two terms in \eq{BornId}.
After taking the inverse Fourier transform, we obtain 
(see Appendix~E) 
\begin{eqnarray}
  \label{eq:Veff}
V^{(2)}(r \rightarrow \infty) & = & 
\left\{
\begin{array}{ll}
-\frac{\lambda^2}{32 m^3}\frac{1}{\sqrt{\pi mr}} e^{-2mr}
+ \cdots \,, & \text{for $D=2$} \,,
\\[5pt]
-\frac{\lambda^2}{64\pi^{3/2} m}\frac{1}{(mr)^{3/2}}
e^{-2mr} + \cdots \,,  & \text{for $D=3$} \,,
\\[5pt]
- \frac{\lambda^2}{128\pi^{5/2}m^{3/2}}\frac{1}{r^{5/2}} e^{-2mr}
+ \cdots \,, &  \text{for $D=4$} \,.
\end{array}
\right.
\end{eqnarray}


\section{Conclusions}

In this paper, we have established an efficient quantum algorithm for 
determining scattering amplitudes in a scalar quantum field theory. 
In particular, the algorithm uses adiabatic turn-on to perform the 
crucial step of preparing interacting wavepacket states.  
It also discretizes space; a detailed analysis of complexity has 
addressed, among other issues, discretization errors and the continuum 
limit, a fundamentally important aspect of quantum field theory.

Our quantum algorithm provides exponential speedups over
the fastest known classical algorithms.
Specifically, it applies to both weakly and strongly interacting theories,
with a run-time that is polynomial in the desired precision, 
as well as the number of particles and their energy
(see \eq{Gtotal} and Table~\ref{strongtable}).
In contrast, standard methods in quantum field theory cannot
generally be used at strong coupling, or beyond a certain precision.

We have focussed upon massive scalar $\phi^4$ theory in 
spacetime of four and fewer dimensions.
In future work, we shall extend our results by considering such
problems as fermions, gauge symmetries, and massless particles. 
Our studies pave the way to a quantum algorithm for simulating the
Standard Model of particle physics. 
Such an algorithm would demonstrate that, except for quantum-gravity effects, 
the standard quantum circuit model suffices to capture completely the 
computational power of our universe.

\bigskip
\bigskip
\bigskip

\noindent \textbf{Acknowledgments:} We thank Alexey Gorshkov for
helpful discussions. 
This work was supported by NSF grant PHY-0803371, DOE grant 
  DE-FG03-92-ER40701, and NSA/ARO grant W911NF-09-1-0442.
Much of this work was done while S.J. was at the Institute for Quantum 
Information (IQI), Caltech, supported by the Sherman Fairchild Foundation. 
K.L. was supported in part by NSF grant PHY-0854782. 
He is grateful for the hospitality of the IQI, Caltech, 
during parts of this work.

\clearpage

\section*{Appendices}
\addcontentsline{toc}{section}{\hspace{0.4cm} Appendices}

\subsection*{A. Notation}
\addcontentsline{toc}{subsection}{A. \quad Notation}

\noindent
\medskip
\begin{table}[hbt!]
\begin{center}
\begin{tabular}{|c|l|}
\hline \T\B
Notation & 
Meaning \\
\hline 
\rule{0pt}{4.0ex}
$\Omega$ \T & Set of all spatial lattice points
\\[10pt]
$\Gamma$ & Set of all momentum-space lattice points
\\[10pt]
$D=d+1$ & Number of spacetime dimensions
\\[10pt]
$\mathbf{x}, \mathbf{p}$ & $d$-dimensional spatial and momentum vectors 
\\[10pt]
$x, p$ & $D$-dimensional spacetime and energy-momentum vectors
\\[10pt]
$\phi(\mathbf{x})$ & The field operator at $\mathbf{x}$
\\[10pt]
$\pi(\mathbf{x})$ & The operator canonically conjugate to
$\phi(\mathbf{x})$
\\[10pt]
$H$ & Lattice $\phi^4$ Hamiltonian 
\\[10pt]
${\cal L}$ & Lagrangian density
\\[10pt]
$a_{\mathbf{p}}^\dag, a_{\mathbf{p}}$ & Creation and annihilation
operators for momentum mode $\mathbf{p}$ of the free theory
\\[10pt]
$m_0$ & Bare mass, not to be confused with $m$, the physical mass
of a particle
\\[10pt]
$\lambda_0$ & Bare coupling, not to be confused with $\lambda$,
the physical coupling
\\[10pt]
$a$ & Lattice spacing
\\[10pt]
$L$ & Length of the lattice 
\\[10pt]
$V$ & Total volume of the lattice ($V = L^d$)
\\[10pt]
$\hat{L}$ & Number of lattice sites in one spatial dimension ($\hat{L}=L/a$)
\\[10pt]
${\cal V}$ & Total number of lattice sites ($\mathcal{V} = V/a^d$)
\\[10pt]
$\tau$ & Duration of the simulated adiabatic state preparation
\\[10pt]
$\nabla_a^2$ & Discrete Laplacian 
\\[10pt]
\hline
\end{tabular}
\vspace{6pt}
\caption{Notation}
\label{table:notation}
\end{center}
\end{table}

\clearpage


\subsection*{B. Loop Integrals for Mass Renormalization}
\addcontentsline{toc}{subsection}{B. \quad  Loop Integrals for Mass 
Renormalization}
   \label{app:mass}

At first order in $\lambda_0$, the 1PI 
insertions into the propagator give 
\begin{eqnarray}
-i M(p)
& = & 
\begin{array}{l} \includegraphics[width=0.6in]{diagrams/lineloop.eps} 
\end{array} 
+
\begin{array}{l} \includegraphics[width=0.6in]{diagrams/countercircle.eps} 
\end{array} 
 + O(\lambda_0^2)
\,.
\end{eqnarray}
The one-loop diagram gives
\begin{eqnarray}
\begin{array}{l} \includegraphics[width=0.6in]{diagrams/lineloop.eps} 
\end{array} 
& = & -i\frac{\lambda_0}{2} \int_{-\infty}^{\infty}
\int_{-\pi/a}^{\pi/a}\cdots\int_{-\pi/a}^{\pi/a} 
\frac{d^Dq}{(2\pi)^D} \frac{i}{(q^0)^2 
- \sum_{i=1}^d \frac{4}{a^2}\sin^2\big(\frac{a q^i}{2}\big)-m^2} 
\\
& = & -i\frac{\lambda_0}{4}\frac{a^{2-D}}{(2\pi)^{d}}
\int_{-\pi}^{\pi}\cdots\int_{-\pi}^{\pi} 
d^dq \frac{1}{\sqrt{\sum_{i=1}^d 4\sin^2\big(\frac{q^i}{2}\big)
+m^2 a^2}}
\,.
\end{eqnarray}
In $D=2$ dimensions, the integral is
\begin{eqnarray}
\int_{-\pi}^{\pi} dq \frac{1}{\sqrt{4\sin^2\big(\frac{q}{2}\big)
+y^2}}
& = & \frac{4}{\sqrt{4+y^2}} K\Big(\frac{4}{4+y^2}\Big)
\\
& = & \log\Big(\frac{64}{y^2}\Big) 
+ \frac{1}{8}\bigg(1+\frac{1}{2}\log\Big(\frac{y^2}{64}\Big) \bigg) y^2
+ \cdots \,,
\end{eqnarray}
where $K(x)$ is the complete elliptic integral of the first kind.
For $D=3$ and $D=4$, the integrals converge at $y=0$, so
\begin{eqnarray}
\int_{-\pi}^{\pi}\cdots\int_{-\pi}^{\pi} 
d^dq \frac{1}{\sqrt{\sum_{i=1}^d 4\sin^2\big(\frac{q^i}{2}\big)
+y^2}}
& = & r_0^{(d)} + \cdots \,,
\end{eqnarray}
where the ellipsis on the right-hand side denotes higher-order terms in $y$ 
and
\begin{equation}
r_0^{(2)} = 25.379\ldots \,, \qquad
r_0^{(3)} = 112.948\ldots \,.
\end{equation}
The renormalization condition $M(p=m) = 0$ implies that $i \delta_m$
equals the value of the one-loop diagram. Thus,
\begin{eqnarray}
m_0^2 & = & m^2 + \lambda_0 \mu \,,
\nonumber \\ 
\mu & = & 
\left\{
\begin{array}{ll}
-\frac{1}{8\pi} \log\Big(\frac{64}{m^2a^2}\Big) 
+ \cdots \,,
 & \text{for $D=2$},\\
[5pt]
-\frac{r_0^{(2)}}{16\pi^2}\frac{1}{a} 
+ \cdots \,, 
& \text{for $D=3$},\\
[5pt]
-\frac{r_0^{(3)}}{32\pi^3}\frac{1}{a^2}
+ \cdots \,, 
& \text{for $D=4$}.
\end{array}
\right.
\end{eqnarray}

At order $\lambda_0^2$, the 1PI amplitude has the additional
contributions
\begin{eqnarray}
\begin{array}{l} \includegraphics[width=0.6in]{diagrams/figure8.eps} 
\end{array} 
+
\begin{array}{l} \includegraphics[width=0.6in]{diagrams/counterloopcircle.eps} 
\end{array} 
+
\begin{array}{l} \includegraphics[width=0.6in]{diagrams/sunset.eps} 
\end{array} 
\,.
\end{eqnarray}
The renormalization condition satisfied at first order in $\lambda_0$ 
implies that the first two diagrams cancel. The remaining two-loop
diagram (with external momentum $p$) gives 
\begin{eqnarray}
\begin{array}{l} \includegraphics[width=0.6in]{diagrams/sunset.eps} 
\end{array}
& = & 
\frac{(-i\lambda_0)^2}{6} \iint 
\frac{d^Dk}{(2\pi)^D} \frac{d^Dq}{(2\pi)^D} 
\frac{i}{(k^0)^2 
- \sum_{i} \frac{4}{a^2}\sin^2\big(\frac{a k^i}{2}\big)-m^2} 
\,\,
\frac{i}{(q^0)^2 
- \sum_{i} \frac{4}{a^2}\sin^2\big(\frac{a q^i}{2}\big)-m^2} 
\nonumber \\
& & \qquad\qquad\qquad\qquad\qquad\qquad
\times
\frac{i}{(p^0+k^0+q^0)^2 
- \sum_{i} \frac{4}{a^2}\sin^2\big(\frac{a(p^i+k^i+q^i)}{2}\big)-m^2} 
\\
& & =  \frac{i\lambda_0^2}{3}
\int_0^1\!\!\!\int_0^1\!\!\!\int_0^1 dx\,dy\,dz\, \delta(x+y+z-1)
\iint \frac{d^Dk}{(2\pi)^D} \frac{d^Dq}{(2\pi)^D} 
\frac{1}{{\mathsf D}^3} \,,
\end{eqnarray}
where a Feynman-parameter integral has been introduced, with
\begin{eqnarray}
{\mathsf D} & = & 
x \left[ (k^0)^2 - \sum_{i} \frac{4}{a^2}\sin^2\big(\frac{a k^i}{2}\big) 
\right]
+ y \left[(q^0)^2 - \sum_{i} \frac{4}{a^2}\sin^2\big(\frac{a q^i}{2}\big) 
\right]
\nonumber \\
& & 
+ z \left[ (p^0+k^0+q^0)^2 
- \sum_{i} \frac{4}{a^2}\sin^2\big(\frac{a(p^i+k^i+q^i)}{2}\big) \right]
-m^2
\,.
\end{eqnarray}
To evaluate the $k^0$ and $q^0$ integrals, one can change variables:
\begin{eqnarray}
{\mathsf D} & = & \beta l_1^2 + \xi l_2^2 + \zeta \, (p^0)^2
- m^2 - x \sum_{i} \frac{4}{a^2}\sin^2\big(\frac{a k^i}{2}\big)
-y \sum_{i} \frac{4}{a^2}\sin^2\big(\frac{a q^i}{2}\big)
\nonumber \\
& &
-z \sum_{i} \frac{4}{a^2}\sin^2\big(\frac{a(p^i+k^i+q^i)}{2}\big)
\,,
\end{eqnarray}
where 
\begin{eqnarray}
l_1 & = & k^0 + \frac{z}{x+z}(q^0 + p^0) \,,  \\
l_2 & = & q^0 + \frac{xz}{xy+xz+yz}p^0 \,,  \\
\beta & = & x+z \,,  \\
\xi & = & \frac{xy+xz+yz}{x+z}\,, \\
\zeta & = & \frac{xyz}{xy+xz+yz}\,. 
\end{eqnarray}
Now,
\begin{eqnarray}
\int_{-\infty}^{\infty} \frac{dl_1}{2\pi}
\int_{-\infty}^{\infty} \frac{dl_2}{2\pi} \,
\frac{2}{(\beta l_1^2 + \xi l_2^2 - A^2)^3}
& = & 
\int_{-\infty}^{\infty} \frac{dl_{1E}}{2\pi}
\int_{-\infty}^{\infty} \frac{dl_{2E}}{2\pi} \,
\frac{2}{(\beta l_{1E}^2 + \xi l_{2E}^2 + A^2)^3}
\\
& = & 
\int_{0}^{\infty} d\rho\, \rho^2
\int_{-\infty}^{\infty} \frac{dl_{1E}}{2\pi}
\int_{-\infty}^{\infty} \frac{dl_{2E}}{2\pi} \,
e^{-\rho(\beta l_{1E}^2 + \xi l_{2E}^2 + A^2)}
\quad \\
& = & 
\int_{0}^{\infty} d\rho\, \rho^2 e^{-\rho A^2}
(4\pi\rho\beta)^{-1/2}(4\pi\rho\xi)^{-1/2}
\\
& = &  
\frac{1}{4\pi\sqrt{\beta\xi}(A^2)^2}
\,.
\end{eqnarray}
Thus, we obtain
\begin{eqnarray}
\begin{array}{l} \includegraphics[width=0.6in]{diagrams/sunset.eps} 
\end{array}
& = & 
\frac{i\lambda_0^2}{24\pi} \frac{a^{4-2d}}{(2\pi)^{2d}}
\iiint_0^1 dx\,dy\,dz\, 
\frac{\delta(x+y+z-1)}{\sqrt{xy+xz+yz}}
\int_{-\pi}^{\pi} d^dk
\int_{-\pi}^{\pi} d^dq \,\frac{1}{\Delta^2}
\,,
\end{eqnarray}
where
\begin{equation}
\Delta =
m^2 a^2 - \zeta \, (p^0)^2 a^2 
+ x \sum_{i=1}^d 4\sin^2\big(\frac{k^i}{2}\big) 
+ y \sum_{i=1}^d 4\sin^2\big(\frac{q^i}{2}\big)
+ z \sum_{i=1}^d 4\sin^2\big(\frac{ap^i+k^i+q^i}{2}\big)
\,.
\end{equation}
We shall consider this result at the point $p=(m,\mathbf{0})$, that is, 
take the renormalization condition to be
$M(p=(m,\mathbf{0})) = 0$.

As $a \rightarrow 0$, the momentum integral is convergent in $D=4$,
but becomes singular in $D=2,3$. The singular part can be extracted,
and the final result is then
\begin{eqnarray}
\left.
\begin{array}{l} \includegraphics[width=0.6in]{diagrams/sunset.eps} 
\end{array}\right|_{p=(m,\mathbf{0})}
& = &
\left\{
\begin{array}{ll}
\frac{i \lambda_0^2}{384 m^2}
+ \cdots \,,
 & \text{for $D=2$},\\
[5pt]
-\frac{i\lambda_0^2}{96\pi^2} \log(m a)
+ \cdots \,, 
& \text{for $D=3$},\\
[5pt]
\frac{i r_1^{(3)}}{1536\pi^7} \frac{\lambda_0^2}{a^2}
+ \cdots \,, 
& \text{for $D=4$},
\end{array}
\right.
\end{eqnarray}
where
\begin{eqnarray}
r_1^{(3)}
& = & 
\iiint_0^1 dx\,dy\,dz\, 
\frac{\delta(x+y+z-1)}{\sqrt{xy+xz+yz}}
\iiint_{-\pi}^{\pi} d^3k
\iiint_{-\pi}^{\pi} d^3q \,\frac{1}{[\Delta(a=0)]^2}
\\[5pt]
& \simeq & 3040 \,. \nonumber
\end{eqnarray}
Hence,
\begin{eqnarray}
m^2 & = & 
\left\{
\begin{array}{ll}
(m^{(1)})^2 - \frac{\lambda_0^2}{384 (m^{(1)})^2} 
+ \cdots \,,
 & \text{for $D=2$},\\
[5pt]
(m^{(1)})^2 + \frac{\lambda_0^2}{96\pi^2} \log(m^{(1)} a)  
+ \cdots \,, 
& \text{for $D=3$},\\
[5pt]
(m^{(1)})^2 - \frac{r_1^{(3)}}{1536\pi^7} \frac{\lambda_0^2}{a^2} 
+ \cdots \,, 
& \text{for $D=4$},
\end{array}
\right.
\end{eqnarray}
where $m^{(1)}$ denotes the renormalized (physical) mass at one-loop 
order, namely, the quantity that is kept constant when one follows the
path specified by \eq{eq:path}. 


\subsection*{C. Loop Integrals for Matching}
\addcontentsline{toc}{subsection}{C. \quad Loop Integrals for Matching}
   \label{app:match}

To obtain the coefficient of the operator $\phi^6$, we must
calculate (at zero external momentum --- since this is a non-derivative
operator) the Feynman diagram
\begin{eqnarray} \label{6ptloop}
\begin{array}{l} \includegraphics[width=0.6in]{diagrams/sixcross.eps} 
\end{array} 
+ \hbox{perms.} & =  &  
15 (-i\lambda)^3 \int_{-\infty}^{\infty}
\int_{-\pi/a}^{\pi/a}\cdots\int_{-\pi/a}^{\pi/a} 
\frac{d^Dq}{(2\pi)^D} \left(\frac{i}{q^2-m^2} \right)^3
\\
& = & 
-15 i\lambda^3 a^{6-D} \int_{-\infty}^{\infty}
\int_{-\pi}^{\pi}\cdots\int_{-\pi}^{\pi}
\frac{d^Dq}{(2\pi)^D} \frac{i}{(q^2-m^2 a^2)^3}
\\
& = & - \frac{1}{(2\pi)^d}\frac{45}{16} i\lambda^3 a^{6-D}
\int_{-\pi}^{\pi}\cdots\int_{-\pi}^{\pi} 
d^dq \frac{1}{(\mathbf{q}^2+m^2 a^2)^{5/2}} \,,
\end{eqnarray}
since
\begin{eqnarray}
\int\frac{dp^0}{2\pi} \frac{i}{((p^0)^2-A^2)^3}
& = & \frac{3}{16 A^5} \,.
\end{eqnarray}
Now, evaluation of the remaining $d$-dimensional integral
gives
\begin{eqnarray}
I_d(y) & \equiv &
\int_{-\pi}^{\pi}\cdots\int_{-\pi}^{\pi}
d^dq \frac{1}{(\mathbf{q}^2+y^2)^{5/2}}
\\[10pt]
& = & 
\left\{
\begin{array}{ll}
\frac{2(2\pi^3 + 3 \pi y^2)}{3 y^4(\pi^2+y^2)^{3/2}} \,,\,\, 
& \text{for $d=1$}\,,
\\[10pt]
\frac{4\left(2\pi^2 y 
+ (\pi^2+y^2)\sqrt{2\pi^2+y^2}\arccot\left[y\sqrt{2\pi^2+y^2}/\pi^2\right] 
\right)}
{3y^3(\pi^2+y^2)\sqrt{2\pi^2+y^2}} \,,\,\, 
& \text{for $d=2$}\,,
\\[10pt]
\frac{8\pi \arctan\left[\pi^2/\sqrt{3\pi^4 + 4\pi^2y^2 + y^4}\right]}
{y^2\sqrt{\pi^2+y^2}} \,,\,\, 
& \text{for $d=3$} \,.
\end{array}
\right.
\end{eqnarray}
The power series of these functions around $y=0$ are
\begin{eqnarray}
I_d(y) & = &
\left\{
\begin{array}{ll}
\frac{4}{3y^4} - \frac{1}{2\pi^4} + O(y^2) \,, & \text{for $d=1$},\\
[3pt]
\frac{2\pi}{3y^3} - \frac{10\sqrt{2}}{9\pi^3} + O(y^2) \,,
& \text{for $d=2$},\\
[3pt]
\frac{4\pi}{3 y^2} - \frac{2(2\sqrt{3}+\pi)}{3\pi^2} + O(y^2) \,,
& \text{for $d=3$}.
\end{array}
\right.
\end{eqnarray}
Hence,
\begin{eqnarray}
\begin{array}{l} \includegraphics[width=0.6in]{diagrams/sixcross.eps}
\end{array}
+ \hbox{perms.} & =  &
- \frac{1}{(2\pi)^{D-1}}\frac{45}{16} i\lambda^3 a^{6-D} I_{D-1}(ma)
\\ 
& = & 
\left\{
\begin{array}{ll}
-\frac{i}{2\pi}\frac{45}{16}\lambda^3 a^4
\left[ \frac{4}{3(ma)^4} - \frac{1}{2\pi^4} + O(m^2a^2)  \right] \,,
& \text{for $D=2$},
\\[6pt]
-\frac{i}{(2\pi)^2}\frac{45}{16}\lambda^3 a^3
\left[ \frac{2\pi}{3(ma)^3} - \frac{10\sqrt{2}}{9\pi^3} + O(m^2a^2) 
\right] \,, 
& \text{for $D=3$}, $\quad$
\\[6pt]
-\frac{i}{(2\pi)^3}\frac{45}{16}\lambda^3 a^2
\left[ \frac{4\pi}{3 (ma)^2} - \frac{2(2\sqrt{3}+\pi)}{3\pi^2} + 
O(m^2a^2)  \right] \,,
& \text{for $D=4$}.
\end{array}
\right.
   \label{eq:6pt}
\end{eqnarray}

Diagrammatically, matching of the six-point function is equivalent to
\begin{eqnarray}
\begin{array}{l} \includegraphics[width=0.6in]{diagrams/sixthick.eps}
\end{array}
& = &
\begin{array}{l} \includegraphics[width=0.6in]{diagrams/sixcross.eps}
\end{array}
+
\begin{array}{l} \includegraphics[width=0.6in]{diagrams/sixvertex.eps}
\end{array}
\,,
\end{eqnarray}
where the diagram on the left-hand side refers to the full theory and
those on the right-hand side to the effective theory.
(Permutations of the loop diagrams are implicitly included.)
The coefficient $c$ of the term $(c/6!)\phi^6$ is then
\begin{eqnarray}
c & = & 
 \frac{1}{(2\pi)^{D-1}}\frac{45}{16} \lambda^3 
\big( a^{6-D} I_{D-1}(ma)
 - \left[ a^{6-D} I_{D-1}(ma) \right]_{a=0} \big) \,,
\end{eqnarray}
since the expression with $a=0$ corresponds to the full theory.
Thus, to leading order in $a$,
\begin{eqnarray}
c & = & 
\left\{
\begin{array}{ll}
-\frac{45}{64\pi^5}\lambda^3 a^4\,, & \text{for $D=2$},\\[3pt]
-\frac{25\sqrt{2}}{32\pi^5}\lambda^3 a^3\,, & \text{for $D=3$},\\[3pt]
-\frac{15}{64\pi^5}(2\sqrt{3}+\pi)\lambda^3 a^2\,, & \text{for $D=4$}.
\end{array}
\right.
\end{eqnarray}

\medskip

Note that the sum of the propagator \eq{propagatord} and the series of
Lorentz-violating operators that begins with 
$\phi \partial_{\mathbf{x}}^4 \phi$ is equal to \eq{propagatorc},
which thus appears in \eq{6ptloop}.
Without these operators, \eq{propagatord} would appear instead,
and the resulting six-point function would take the form
\begin{eqnarray}
a^4\Big[\frac{c_1}{(ma)^4} + \frac{c_2}{(ma)^2} + c_3 + \cdots\Big] \,,
\,\,\, \text{for $D=2$}\,. 
\end{eqnarray} 
In particular, it would have a term of the form $a^2/m^2$, which
is divergent in the infrared. ($a$ can be regarded as an ultraviolet (UV)
regulator and $m$ as an infrared (IR) regulator.)
Since the full (continuum) theory corresponds to $a=0$, this term is 
absent from the full theory, as \eq{eq:6pt} confirms. 
However, the EFT must reproduce the IR behavior of the full theory, 
and so {\em must} have the same IR divergences. Therefore, one would 
not have the correct effective field theory. We see, then, that obtaining
the correct IR structure depends upon including the Lorentz-violating 
operators. 


\subsection*{D. Loop Sums for Matching}
\addcontentsline{toc}{subsection}{D. \quad Loop Sums for Matching}
   \label{app:vol}

The Feynman-diagram calculation proceeds analogously to that in
the previous section, with the replacement of $I_d(y)$ by 
\begin{eqnarray}
I_d(y,\hat{L}) & \equiv & \frac{2^d\pi^d}{\hat{L}^d}
 \sum_{n_1=-\frac{\hat{L}}{2}}^{\frac{\hat{L}}{2}-1} \cdots 
 \sum_{n_d=-\frac{\hat{L}}{2}}^{\frac{\hat{L}}{2}-1}
\frac{1}{\left[ \frac{4\pi^2}{\hat{L}^2}(n_1^2 + n_2^2 + \cdots + n_d^2) 
+ y^2 \right]^{5/2}}
\,,
\end{eqnarray}
where $I_d(y) = I_d(y,\infty)$.
This is simply a Riemann sum, which we know converges to the 
corresponding Riemann integral as $\hat{L} \rightarrow \infty$.

The difference between the Riemann sum and corresponding integral
is given by the Euler-Maclaurin summation formula 
\cite{Euler:1732,Poisson:1823},
which we can write in the following form.

\bigskip
\noindent
\textbf{Euler-Maclaurin summation formula}
\begin{eqnarray} \label{EulerMac2}
\frac{c}{N}\sum_{i=-N}^{N-1} f\left(\frac{ci}{N}\right)
& = & \int_{-c}^c f(x) dx 
- \frac{c}{2N}\left(f(c) - f(-c)\right)
+ \sum_{k=1}^{m} \frac{B_{2k}}{(2k)!}\left(\frac{c}{N}\right)^{2k}
                 f^{(2k-1)}(x)\Big|_{-c}^{c}
\\ 
& & - \int_{-c}^c \frac{1}{(2m+1)!} 
  B_{2m+1}\big(\left\{Nx/c\right\}\big)
  \left(\frac{c}{N}\right)^{2m+1} f^{(2m+1)}(x) dx \,,
\nonumber
\end{eqnarray}
for $m \geq 1$. Here, $B_k$ are the Bernoulli numbers,
$B_m(x)$ are the Bernoulli polynomials, and 
$\{ x \} = x - \lfloor x \rfloor$ denotes the fractional
part of $x$.

\bigskip
\bigskip

Let 
\begin{eqnarray}
f(q_1,\ldots,q_d) & = & \frac{1}{(\mathbf{q}^2 + m^2 a^2)^{5/2}} \,.
\end{eqnarray}
Then, the one-dimensional sum is given by
\begin{eqnarray}
I_1(ma,\hat{L}) & = & \int_{-\pi}^{\pi} f(q) dq  
+ \frac{1}{12} \Big(\frac{2\pi}{\hat{L}}\Big)^2 f'(q)\Big|_{-\pi}^{\pi}
+ O\Big(\Big(\frac{2\pi}{\hat{L}}\Big)^3\Big)
\\
& = & I_1(ma) - \frac{1}{\hat{L}^2}\frac{10\pi^3}{3(\pi^2+m^2a^2)^{7/2}}
+ O\Big(\Big(\frac{2\pi}{\hat{L}}\Big)^3\Big) \,.
\end{eqnarray}
Hence,
\begin{eqnarray}
\begin{array}{l} \includegraphics[width=0.6in]{diagrams/sixcross.eps}
\end{array}
+ \hbox{perms.} & =  &
-i\frac{45}{32\pi}\lambda^3 a^4
\left[ \frac{4}{3(ma)^4} - \frac{1}{2\pi^4} + O(m^2a^2)  
- \frac{10}{3\pi^4}\frac{1}{\hat{L}^2}(1+O(m^2a^2)) + 
   O\left(\frac{1}{\hat{L}^3}\right)\right] \,,
\nonumber \\
\end{eqnarray}
which implies
\begin{eqnarray}
c & = & -\frac{45}{64\pi^5}\lambda^3 a^4 
\left[ 1 + \frac{20}{3} \frac{1}{\hat{L}^2} + 
O\left(m^2a^2,\frac{m^2a^2}{\hat{L}^2},\frac{1}{\hat{L}^3}
\right)\right] \,, \qquad \text{for} \,\,\, D=2 \,.
\end{eqnarray}

\medskip
To calculate the two-dimensional sum, we use \eq{EulerMac2}
twice. 
\begin{eqnarray}
I_2(ma,\hat{L}) & = & \frac{2\pi}{\hat{L}} 
\sum_{i=-\frac{\hat{L}}{2}}^{\frac{\hat{L}}{2}-1} 
\left[ \int\limits_{-\pi}^{\pi} f\left(\frac{2\pi i}{\hat{L}}, q_2\right) dq_2
+ \frac{1}{6} \left(\frac{2\pi}{\hat{L}}\right)^2 
  f_{q_2}\left(\frac{2\pi i}{\hat{L}}, \pi\right) 
+ O\left(\frac{1}{\hat{L}^3} \right) \right]
\\
& = & 
\int\limits_{-\pi}^{\pi}\int\limits_{-\pi}^{\pi} f(q_1, q_2) d^2 q
+ \frac{1}{3} \left(\frac{2\pi}{\hat{L}}\right)^2
    \int\limits_{-\pi}^{\pi} f_{q_2}(q_1, \pi) dq_1
+ O\left(\frac{1}{\hat{L}^3} \right)
\\
& = & 
I_2 (ma) - \frac{20\pi^3}{3\hat{L}^2} \int\limits_{-\pi}^{\pi}
\frac{1}{(q_1^2 + \pi^2 + m^2 a^2)^{7/2}} dq_1
+ O\left(\frac{1}{\hat{L}^3} \right)
\,.
\end{eqnarray}
Evaluating the second term (exactly) and expanding the result
around $ma=0$, we obtain
\begin{eqnarray}
\begin{array}{l} \includegraphics[width=0.6in]{diagrams/sixcross.eps}
\end{array}
+ \hbox{perms.} & =  &
-i\frac{45}{64\pi^2}\lambda^3 a^3
\left[ \frac{2\pi}{3(ma)^3} - \frac{10\sqrt{2}}{9\pi^3} + O(m^2a^2) 
- \frac{43\sqrt{2}}{9\pi^3}\frac{1}{\hat{L}^2}(1+O(m^2a^2)) 
\right. \nonumber \\
& & \qquad \qquad \qquad \left. 
+ O\left(\frac{1}{\hat{L}^3}\right)\right] \,,
\end{eqnarray}
which implies
\begin{eqnarray}
c & = & -\frac{5}{64\pi^5}\lambda^3 a^3 
\left[ 10\sqrt{2} + \frac{43\sqrt{2}}{\hat{L}^2} + 
O\left(m^2a^2,\frac{m^2a^2}{\hat{L}^2},\frac{1}{\hat{L}^3}
\right)\right] \,, \qquad \text{for} \,\,\, D=3 \,.
\end{eqnarray}

\medskip
Finally,
\begin{eqnarray}
I_3(ma,\hat{L}) & = & \frac{2\pi}{\hat{L}} 
\sum_{i=-\frac{\hat{L}}{2}}^{\frac{\hat{L}}{2}-1}
\left[ \int\limits_{-\pi}^{\pi}\int\limits_{-\pi}^{\pi} 
f\left(\frac{2\pi i}{\hat{L}},q_2,q_3\right) dq_2 dq_3 
+ \frac{1}{3}\left(\frac{2\pi}{\hat{L}}\right)^2 \int\limits_{-\pi}^{\pi}
  f_{q_3}\left(\frac{2\pi i}{\hat{L}},q_2,\pi\right) dq_2
+ O\left(\frac{1}{\hat{L}^3}\right) \right]
\nonumber \\
\\
& = &
\int\limits_{-\pi}^{\pi}\int\limits_{-\pi}^{\pi}\int\limits_{-\pi}^{\pi}
f(q_1,q_2,q_3) d^3q 
+ \frac{1}{2} \left(\frac{2\pi}{\hat{L}}\right)^2
\int\limits_{-\pi}^{\pi}\int\limits_{-\pi}^{\pi}
f_{q_3}(q_1,q_2,\pi) dq_1 dq_2
+ O\left(\frac{1}{\hat{L}^3}\right)
\\
& = &
I_3(ma) - \frac{10\pi^3}{\hat{L}^2}
\int\limits_{-\pi}^{\pi}\int\limits_{-\pi}^{\pi}
\frac{1}{(q_1^2 + q_2^2 + \pi^2 + m^2 a^2)^{7/2}} dq_1 dq_2
+ O\left(\frac{1}{\hat{L}^3}\right)
\,.
\end{eqnarray}
Evaluating the second term (exactly) and expanding the result
around $ma=0$, we obtain
\begin{eqnarray}
\begin{array}{l} \includegraphics[width=0.6in]{diagrams/sixcross.eps}
\end{array}
+ \hbox{perms.} & =  &
-i\frac{45}{128\pi^3}\lambda^3 a^2
\left[ \frac{4\pi}{3 (ma)^2} - \frac{2(2\sqrt{3}+\pi)}{3\pi^2} + 
O(m^2a^2) \right.
\nonumber \\
& & \qquad\qquad\qquad \left.
- \frac{4(26\sqrt{3}+9\pi)}{27\pi^2}\frac{1}{\hat{L}^2}(1+O(m^2a^2)) + 
   O\left(\frac{1}{\hat{L}^3}\right)\right] \,, \qquad\quad
\end{eqnarray}
which implies
\begin{eqnarray}
c & = & -\frac{15}{128\pi^5}\lambda^3 a^2 
\left[ 2(2\sqrt{3}+\pi) + \frac{4}{9}(26\sqrt{3}+9\pi)\frac{1}{\hat{L}^2} + 
O\left(m^2a^2,\frac{m^2a^2}{\hat{L}^2},\frac{1}{\hat{L}^3}
\right)\right] \,, \qquad \text{for} \,\,\, D=4 \,.
\nonumber \\
\end{eqnarray}

Note that, in all dimensions, a finite $\hat{L}$ increases the
magnitude of the Wilson coefficient.


\subsection*{E. Integrals for Effective Potential}
\addcontentsline{toc}{subsection}{E. \quad Integrals for Effective Potential}
   \label{app:Veff}

If the incoming (outgoing) momenta for 
$2\text{-body} \rightarrow 2\text{-body}$ scattering are $p,p'$ ($k,k'$), 
then the amplitude can be expressed in terms of the Mandelstam variables 
$s = (p + p')^2$, $t = (k-p)^2$, $u = (k'-p)^2$.
A straightforward textbook calculation (in renormalized perturbation theory) 
of the one-loop amplitude
\begin{eqnarray}
i{\cal M}_2
& = &
\begin{array}{l} \includegraphics[width=0.4in]{diagrams/four.eps}
\end{array}
+
\begin{array}{l} \includegraphics[width=0.4in]{diagrams/fourloop1.eps}
\end{array}
+
\begin{array}{l} \includegraphics[width=0.5in]{diagrams/fourloop2.eps}
\end{array}
+
\begin{array}{l} \includegraphics[width=0.4in]{diagrams/fourloop3.eps}
\end{array}
+
\begin{array}{l} \includegraphics[width=0.4in]{diagrams/fourcirc.eps}
\end{array}
\end{eqnarray}
gives
\begin{eqnarray} \label{eq:2to2}
i{\cal M}_2 & = & -i\lambda 
+ i\frac{\lambda^2}{2}\frac{\Gamma(2-\frac{D}{2})}{(4\pi)^{D/2}}
\int_0^1 dx \left[ 
\frac{1}{(m^2 -x(1-x)s)^{2-D/2}}  
- \frac{1}{(m^2 -x(1-x)4m^2)^{2-D/2}}  
\right.
\nonumber
\\ 
& & \qquad\qquad\qquad\qquad\qquad\qquad
+ \frac{1}{(m^2 -x(1-x)t)^{2-D/2}}  
- \frac{1}{(m^2)^{2-D/2}}  
\nonumber \\ 
& & \qquad\qquad\qquad\qquad\qquad\qquad
\left.
+ \frac{1}{(m^2 -x(1-x)u)^{2-D/2}}  
- \frac{1}{(m^2)^{2-D/2}}  
\right] \,.
\end{eqnarray}
(Here, the following renormalization condition has been used:
$i{\cal M} = -i\lambda$ at  $s=4m^2$, $t=u=0$. This corresponds to
defining $\lambda$ as the magnitude of the amplitude at zero
$d$-momentum.)
For example, in $D=4$ dimensions,
\begin{eqnarray}
i{\cal M}_2 & = & -i\lambda -i\frac{\lambda^2}{32\pi^2}\int_0^1 dx
\left[
\log\left(\frac{m^2-x(1-x)s}{m^2-x(1-x)4m^2}\right)
+\log\left(\frac{m^2-x(1-x)t}{m^2}\right)
+\log\left(\frac{m^2-x(1-x)u}{m^2}\right)
\right] \,. 
\nonumber \\
\end{eqnarray}

In the nonrelativistic limit, $p = (m,{\bf p})$, etc., so that
$s = 4m^2 - |{\bf p} + {\bf p}'|^2$, 
$t = - |{\bf k} - {\bf p}|^2$,
$u = - |{\bf k}' - {\bf p}|^2$.
Thus, in the center-of-momentum frame, the $s$-channel contribution 
(namely, the first line of the integrand of \eq{eq:2to2}) vanishes, 
and the $t$- and $u$-channels correspond to the two terms in 
\eq{BornId}.
One must also divide by $(\sqrt{2m})^4$ to account for the difference
in the normalization of states.

Hence, the potential is $V({\bf x}) = V^{(1)}({\bf x}) + V^{(2)}({\bf x})$,
where
\begin{eqnarray}
V^{(1)}({\bf x}) & = & \frac{\lambda}{4m^2} \delta^d({\bf x}) \,
\end{eqnarray}
and
\begin{eqnarray}
V^{(2)}({\bf x}) & = & -\frac{\lambda^2}{4m^2}
\frac{\Gamma(2-\frac{D}{2})}{(4\pi)^{D/2}}
\int_0^1 dx \int_{-\infty}^{\infty} \frac{d^{D-1} q}{(2\pi)^{D-1}}
e^{i{\bf q}\cdot{\bf x}}
\left[  \frac{1}{(m^2 +x(1-x){\bf q}^2)^{2-D/2}}  
- \frac{1}{(m^2)^{2-D/2}}  
 \right] \,.
\nonumber \\
\end{eqnarray}

In $D=2$ dimensions,
\begin{eqnarray}
V^{(2)}(r>0) & = & - \frac{\lambda^2}{16\pi m^2}
\int_0^1 dx \int_{-\infty}^{\infty} \frac{d q}{2\pi}
\frac{e^{iqr}}{(m^2 +x(1-x) q^2)}
\\
& = & -\frac{\lambda^2}{32\pi m^3}
\int_0^1 dx \frac{e^{-mr/\sqrt{x(1-x)}}}{\sqrt{x(1-x)}}
\\
& = & 
-\frac{\lambda^2}{32 m^3}\frac{1}{\sqrt{\pi mr}} e^{-2mr}
+ \cdots, \qquad \text{as $r \rightarrow \infty$} \,.
\end{eqnarray}
In the last line, the asymptotic evaluation of the integral
was obtained by Laplace's method.

In $D=3$ dimensions,
\begin{eqnarray}
V^{(2)}(r>0) & = & 
-\frac{\lambda^2}{32\pi m^2}\frac{1}{(2\pi)^2}
\int_0^1 dx \int_{0}^{\infty} dq \, q \int_{0}^{2\pi} d\theta
\frac{e^{i q r \cos\theta}}{\sqrt{m^2 +x(1-x) q^2}}
\\
& = &
-\frac{\lambda^2}{64\pi^2 m^2 r}
\int_0^1 dx \frac{e^{-mr/\sqrt{x(1-x)}}}{\sqrt{x(1-x)}}
\\
& = & 
-\frac{\lambda^2}{64\pi^{3/2} m}\frac{1}{(mr)^{3/2}}
e^{-2mr} + \cdots, \qquad \text{as $r \rightarrow \infty$} \,.
\end{eqnarray}

In $D=4$ dimensions,
\begin{eqnarray}
V^{(2)}(r>0) & = & 
\frac{\lambda^2}{64\pi^2 m^2}\frac{1}{(2\pi)^3}
\int_{0}^{\infty} dq \, q^2 
\int_{-1}^{1} du \, e^{iqru} \int_{0}^{2\pi} d\varphi
\int_0^1 dx
\log\left(1 + \frac{x(1-x)}{m^2} q^2\right) \qquad
\\
& = & 
\frac{\lambda^2}{64\pi^2 m^2}\frac{1}{(2\pi)^2 ir}
\int_{-\infty}^{\infty} dq \, q e^{iqr} 
\int_0^1 dx
\log\left(1 + \frac{x(1-x)}{m^2} q^2\right)
\,.
   \label{eq:V4D}
\end{eqnarray}
To evaluate the integral with respect to $q$, complete the contour 
in the upper half-plane. There is a branch cut on the imaginary axis
from $q=2im$ to $q = i \infty$ (since the logarithm has a branch cut
where its argument is negative).

Let 
\begin{eqnarray}
f(p^2) &  \equiv & \int_0^1 dx
\log\left(1 - \frac{x(1-x)}{m^2} p^2\right) \,.
\end{eqnarray}
Then
\begin{eqnarray}
\int_{-\infty}^{\infty} dq \, q e^{iqr} 
\int_0^1 dx
\log\left(1 + \frac{x(1-x)}{m^2} q^2\right)
& = & \int_{-\infty}^{\infty} dq \, q e^{iqr} f(-q^2)
\\
& = & -2i \int_{2m}^{\infty} d\tilde{q} \, \tilde{q} e^{-\tilde{q} r}
\text{Im}[f(\tilde{q}^2-i\epsilon)] \,,
\end{eqnarray}
where $\tilde{q} = -iq$.

For a fixed $\tilde{q}^2$, there are contributions to $\text{Im} f$ when
$x$ is in the range $x_{-} < x < x_{+}$, where 
\begin{eqnarray}
x_{\pm} & = & \frac{1}{2} \pm \frac{1}{2}\sqrt{1-\frac{4m^2}{\tilde{q}^2}}
\,.
\end{eqnarray}
Hence, 
\begin{eqnarray}
\text{Im}[f(\tilde{q}^2-i\epsilon)]  & = & \pi \int_{x_{-}}^{x_{+}} dx
= \pi \sqrt{1-\frac{4m^2}{\tilde{q}^2}}
\,,
\end{eqnarray}
and
\begin{eqnarray}
\int_{-\infty}^{\infty} dq \, q e^{iqr} 
\int_0^1 dx
\log\left(1 + \frac{x(1-x)}{m^2} q^2\right)
& = &
-2\pi i \int_{2m}^{\infty} d\tilde{q} \, \tilde{q} e^{-\tilde{q} r}
\sqrt{1-\frac{4m^2}{\tilde{q}^2}}
\,.
\end{eqnarray}
Substituting this result into \eq{eq:V4D}, we obtain
\begin{eqnarray}
V^{(2)}(r>0) & = & 
-\frac{\lambda^2}{128\pi^3 m^2 r}
\int_{2m}^{\infty} d\tilde{q} \, \tilde{q} e^{-\tilde{q} r}
\sqrt{1-\frac{4m^2}{\tilde{q}^2}}
\\
& = &
- \frac{\lambda^2}{128\pi^{5/2}m^{3/2}}\frac{1}{r^{5/2}} e^{-2mr}
+ \cdots, \qquad \text{as $r \rightarrow \infty$} \,.
\end{eqnarray}



\subsection*{F. Minimal Qubit Requirement }
\addcontentsline{toc}{subsection}{F. \quad Minimal Qubit Requirement }
   \label{app:qubit}

In this appendix, we estimate the number of qubits needed for a minimal
non-trivial demonstration of our algorithm. Specifically, to simulate
a $2 \to 4$ scattering process in $1+1$ dimensions, on the order of a 
thousand to ten thousand qubits should suffice, depending on
the desired level of precision (see Fig.~\ref{fancybits}). Note that
we assume the qubits and quantum gates are noiseless. A large number
of noisy physical qubits can substitute for a smaller number of
perfect ``logical'' qubits through the use of quantum error correction
(see, for example, \cite{Gottesman}). The ratio of physical to
logical qubits depends not only on the quantum error-correction scheme
but also on the particular implementation chosen (for example, trapped
ions versus superconducting qubits) and the experimental techniques
for reducing sources of noise, which are beyond the scope of this
paper.

We must choose the energy, $E$, to be at least $4m$ so that $2 \to 4$ 
scattering is kinematically allowed. (Actually, one should choose $E$ 
slightly larger than $4m$ so that the process is not suppressed by the 
lack of phase space to scatter into.) Thus, we choose $E=5m$, which 
for ingoing particles of momenta $\pm p$ implies
\begin{equation}
\label{pm}
p/m \simeq 2.
\end{equation}
By our EFT analysis, discretization errors in scattering events 
are of order $(pa)^2$. Setting this to $\epsilon$ and using \eq{pm}, 
we obtain 
\begin{equation}
\label{ma}
ma \simeq \frac{\sqrt{\epsilon}}{2} \,.
\end{equation}

To estimate the total number of qubits, we must determine what
are sufficient numbers of lattice sites and qubits per site. 
Without a sufficient number of lattice sites, the incoming and outgoing
particles cannot be well separated. Thus, the interparticle force will
be non-negligible, and the in and out states created will not be a good
approximation to the asymptotic in and out states that define the $S$-matrix.

%
The interparticle potential at large $r$ is given by \eq{eq:Veff}.
We want substantial scattering to occur 
when the particles most closely approach one another, but
scattering not to occur when the particles are separated in their
in and out states. The expectation value of the distance of closest approach
is on the order of the wavepacket width. The wavepacket width should
not be chosen much larger than the range of the interaction ($\sim
1/m$), or scattering will be unlikely. We thus demand that $F(r) \ll
F(1/m)$, where $F$ is the (magnitude of) the interparticle
force. Quantitatively, we can demand 
\begin{equation}
\label{fcond}
\frac{F(r)}{F(1/m)} \leq \epsilon \,.
\end{equation}
By numerically solving \eq{fcond} and using \eq{ma}, one obtains the
left inset in Fig.~\ref{fancybits}. 

\begin{figure}
\begin{center}
\includegraphics[width=0.6\textwidth]{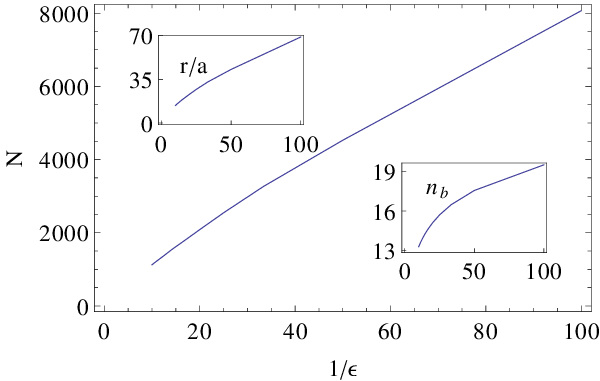}
\vspace{6pt}
\caption{\label{fancybits} The required number of qubits is shown for
  $2 \to 4$ scattering as a function of $1/\epsilon$. The insets
  display the interparticle separation in lattice units ($r/a$) and
  the number of qubits per site ($n_b$), each as a function of
  $1/\epsilon$. Our estimate $N$ for the total number of qubits is $6
  \times (r/a) \times n_b$. The prefactor six is chosen (somewhat
  arbitrarily) to provide enough space for four outgoing particles to
  be well separated, with an extra factor of $1.5$ to allow for the
  possibility that they are not evenly spaced.}
\end{center}
\end{figure}

Next, we estimate the necessary number of qubits per site. In
\sect{qubits}, the constant factors hidden by the big-O notation 
can easily be restored. Specifically, a sufficient choice is
\begin{equation}
\label{exactplacevalue}
n_b = \left\lceil \log_2 \left( 1+ \frac{2 a^d}{\pi} \left(1+
\sqrt{\frac{\mathcal{V}}{\epsilon}} \right)^2 \sqrt{ \langle \phi^2
  \rangle \langle \pi^2 \rangle} \right) \right\rceil.
\end{equation}
To bound $\langle \phi^2 \rangle$ and $\langle \pi^2 \rangle$, we must
estimate the bare quantities $\lambda_0$ and $m_0^2$ and then apply
Proposition \ref{mpos} or \ref{mneg}. In $D=2$, one finds
perturbatively that 
\begin{eqnarray}
\label{mrenorm}
m_0^2 & = & m^2 - \frac{\lambda}{8 \pi} \log \left( \frac{64}{m^2
  a^2} \right) + \cdots \,,\\
\lambda_0 & = & \lambda + \frac{3 \lambda^2}{8 \pi m^2} + \cdots
\,.
\end{eqnarray}
We must now choose a value of $\lambda$ that is small enough to
justify the use of $V^{(2)}$, but large enough to make classical
calculations at high precision difficult. As concrete
examples, we consider the values $\frac{\lambda}{2 \pi m^2} =
\frac{1}{3}$ and $\frac{\lambda}{2 \pi m^2} =
\frac{1}{10}$. The results show that $n_b$ is not very sensitive to
the choice of $\lambda$, as discussed below.

For $0.01 \leq \epsilon \leq 0.1$, our choices of $\lambda$ and $ma$
imply that $m_0^2 \geq 0$, and thus the applicable bound on $\langle
\phi^2 \rangle$ and $\langle \pi^2 \rangle$ is Proposition
\ref{mpos} throughout the entire adiabatic state preparation.  By
\eq{exactplacevalue} and Proposition \ref{mpos}, one obtains $n_b$ as
a function of $\epsilon$, as shown in the right inset in
Fig.~\ref{fancybits}. Replacing $\frac{\lambda}{2 \pi m^2}
= \frac{1}{3}$ with $\frac{\lambda}{2 \pi m^2} = \frac{1}{10}$ changes
$n_b$ from 20 to 19 at $\epsilon = 0.01$ and leaves $n_b$ unchanged at
13 for $\epsilon = 0.1$.
Our estimate of the total number of qubits $N=n_b \mathcal{V}$ needed to 
achieve a given precision $\epsilon$ is plotted in Fig.~\ref{fancybits}.

Finally, let us comment on the asymptotic scaling of the number of
qubits required by our algorithm. As discussed above (see \eq{ma}), 
to restrict discretization errors to order $\epsilon$ we need $ma \sim
\sqrt{\epsilon}$. Furthermore, to obtain good in and out states we
need particles to be separated by a distance $r \sim \frac{1}{m} \log
(1/\epsilon)$. Thus, the total number of lattice sites is
$\mathcal{V} \sim \left(\frac{r}{a}\right)^d = O \left(
\epsilon^{-d/2} \log^d (1/\epsilon) \right)$. The number of qubits per
site is $n_b \sim \log (1/\epsilon)$. Thus, the total number of
qubits is $N = O \left( \epsilon^{-d/2} \log^{d+1} (1/\epsilon)
\right)$. In contrast to the estimate shown in Fig.~\ref{fancybits}, 
which relies on perturbative calculations, this asymptotic scaling should 
hold at both strong and weak coupling. Note that, as mentioned in 
\sect{details}, we consider spatially localized wavepackets and the attendant
uncertainty in momentum to be physically realistic and not a source of
error, although this notion differs from the idealization used to define an
$S$-matrix.


\newpage

\bibliography{qft}

\begin{thebibliography}{10}

\bibitem{Hanneke}
D.~Hanneke, S.~Fogwell, and G.~Gabrielse.
\newblock New measurement of the electron magnetic moment and the fine
  structure constant.
\newblock {\em Physical Review Letters}, 100:120801, 2008.

\bibitem{Clade}
Pierre Clad{\'e}, Estefania de~Mirandes, Malo Cadoret, Sa{\"i}da
  Guellati-Kh{\'e}lifa, Catherine Schwob, Fran{\c{c}}ois Nez, Lucile Julien,
  and Fran{\c{c}}ois Biraben.
\newblock Determination of the fine structure constant based on {B}loch
  oscillations of ultracold atoms in a vertical optical lattice.
\newblock {\em Physical Review Letters}, 96:033001, 2006.

\bibitem{Gerginov}
V.~Gerginov, K.~Calkins, C.~E. Tanner, J.~J. McFerran, S.~Diddams, A.~Bartels,
  and L.~Hollberg.
\newblock Optical frequency measurements of {$^2S_{1/2}-6p ^2P_{1/2}(D_1)$}
  transitions in {$^{133}Cs$} and their impact on the fine-structure constant.
\newblock {\em Physical Review A}, 73:032504, 2006.

\bibitem{Wilson_Nobel}
Kenneth Wilson.
\newblock The renormalization group and critical phenomena.
\newblock In {\em Nobel Lectures, Physics 1981-1990}. World Scientific, 1993.
\newblock Also available at
  {\texttt{www.nobelprize.org/nobel\_prizes/physics/laureates/1982/wilson-lecture.pdf}}.

\bibitem{Shor_factoring}
Peter~W. Shor.
\newblock Polynomial-time algorithms for prime factorization and discrete
  logarithms on a quantum computer.
\newblock {\em SIAM Journal on Computing}, 26:1484--1509, 1997.
\newblock arXiv:quant-ph/9508027.

\bibitem{Lenstra}
A.~K. Lenstra and H.~W.~Lenstra Jr., editors.
\newblock {\em The development of the number field sieve}, volume 1554 of {\em
  Lecture Notes in Mathematics}.
\newblock Springer-Verlag, 1993.

\bibitem{algorithm_zoo}
Stephen Jordan.
\newblock Quantum algorithm zoo.
\newblock {\texttt{http://math.nist.gov/quantum/zoo/}}.

\bibitem{Feynman}
Richard~P. Feynman.
\newblock Simulating physics with computers.
\newblock {\em International Journal of Theoretical Physics}, 21:467--488,
  1982.

\bibitem{Deutsch85}
David Deutsch.
\newblock Quantum theory, the {C}hurch-{T}uring principle, and the universal
  quantum computer.
\newblock {\em Proceedings of the Royal Society of London A}, 400:97--117,
  1985.

\bibitem{Deutsch89}
David Deutsch.
\newblock Quantum computational networks.
\newblock {\em Proceedings of the Royal Society of London A}, 425:73--90, 1989.

\bibitem{Continuous_walk}
Edward Farhi and Sam Gutmann.
\newblock Quantum computation and decision trees.
\newblock {\em Physical Review A}, 58:915--928, 1998.
\newblock arXiv:quant-ph/9706062.

\bibitem{Discrete_walk}
Dorit Aharonov, Andris Ambainis, Julia Kempe, and Umesh Vazirani.
\newblock Quantum walks on graphs.
\newblock In {\em Proceedings of 33rd {ACM} {S}ymposium on {T}heory of
  {C}omputing ({STOC} 2001)}, pages 50--59, 2001.
\newblock arXiv:quant-ph/0012090.

\bibitem{Kitaev97}
Alexei~Yu Kitaev.
\newblock Fault-tolerant quantum computation by anyons.
\newblock {\em Annals of Physics}, 303:2--30, 2003.
\newblock arXiv:quant-ph/9707021.

\bibitem{Farhi_adiabatic}
Edward Farhi, Jeffrey Goldstone, Sam Gutmann, and Michael Sipser.
\newblock Quantum computation by adiabatic evolution.
\newblock 2000.
\newblock arXiv:quant-ph/0001106.

\bibitem{Yao}
Andrew Chi-Chih Yao.
\newblock Quantum circuit complexity.
\newblock In {\em Proceedings of 34th {IEEE} {S}ymposium on {F}oundations of
  {C}omputer {S}cience ({FOCS} 1993)}, pages 352--361, 1993.

\bibitem{Childs_equivalence}
Andrew~M. Childs.
\newblock Universal computation by quantum walk.
\newblock {\em Physical Review Letters}, 102:180501, 2009.
\newblock arXiv:0806.1972.

\bibitem{FKW}
Michael~H. Freedman, Alexei Kitaev, and Zhenghan Wang.
\newblock Simulation of topological field theories by quantum computers.
\newblock {\em Communications in Mathematical Physics}, 227:587--603, 2002.
\newblock arXiv:quant-ph/0001071.

\bibitem{FLS}
Michael~H. Freedman, Michael Larsen, and Zhenghan Wang.
\newblock A modular functor which is universal for quantum computation.
\newblock {\em Communications in Mathematical Physics}, 227:605--622, 2002.
\newblock arXiv:quant-ph/0001108.

\bibitem{Aharonov_equivalence}
Dorit Aharonov, Wim van Dam, Julia Kempe, Zeph Landau, Seth Lloyd, and Oded
  Regev.
\newblock Adiabatic quantum computation is equivalent to standard quantum
  computation.
\newblock {\em SIAM Journal on Computing}, 37:166--194, 2007.
\newblock arXiv:quant-ph/0405098.

\bibitem{Bernstein_Vazirani}
Ethan Bernstein and Umesh Vazirani.
\newblock Quantum complexity theory.
\newblock In {\em Proceedings of the 25th {ACM} {S}ymposium on {T}heory of
  {C}omputing ({STOC 1993})}, pages 11--20, 1993.

\bibitem{Zalka}
Christof Zalka.
\newblock Efficient simulation of quantum systems by quantum computers.
\newblock {\em Proceedings of the Royal Society of London A}, 454:313--322,
  1998.
\newblock arXiv:quant-ph/9603026.

\bibitem{Abrams_Lloyd}
Daniel~S. Abrams and Seth Lloyd.
\newblock Simulation of many-body {F}ermi systems on a universal quantum
  computer.
\newblock {\em Physical Review Letters}, 79:2586--2589, 1997.

\bibitem{Wiesner}
Stephen Wiesner.
\newblock Simulations of many-body quantum systems by a quantum computer.
\newblock {\em arXiv:quant-ph/9603028}, 1996.

\bibitem{Lloyd_science}
Seth Lloyd.
\newblock Universal quantum simulators.
\newblock {\em Science}, 273:1073--1078, 1996.

\bibitem{Byrnes}
Tim Byrnes and Yoshihisa Yamamoto.
\newblock Simulating lattice gauge theories on a quantum computer.
\newblock {\em Physical Review A}, 73:022328, 2006.
\newblock arXiv:quant-ph/0510027.

\bibitem{aspuru2005}
Al{\'a}n Aspuru-Guzik, Anthony~D. Dutoi, Peter~J. Love, and Martin Head-Gordon.
\newblock Simulated quantum computation of molecular energies.
\newblock {\em Science}, 309:1704--1707, 2005.

\bibitem{Buchler}
H.~P. B{\"u}chler, M.~Hermele, S.~D. Huber, Matthew P.~A. Fisher, and
  P.~Zoller.
\newblock Atomic quantum simulator for lattice gauge theoreis and ring exchange
  models.
\newblock {\em Physical Review Letters}, 95:040402, 2005.

\bibitem{Zohar}
Erez Zohar and Benni Reznik.
\newblock Confinement and lattice quantum-electrodynamic electric flux tubes
  simulated with ultracold atoms.
\newblock {\em Physical Review Letters}, 107:275301, 2011.

\bibitem{Szirmai:2011pj}
Gergely Szirmai, Edina Szirmai, Alejandro Zamora, and Maciej Lewenstein.
\newblock Gauge fields emerging from time reversal symmetry breaking for
  spin-5/2 fermions in a honeycomb lattice.
\newblock {\em Physical Review A}, 84:011611(R), 2011.

\bibitem{IgnacioCirac:2010us}
J.~Ignacio Cirac, Paolo Maraner, and Jiannis~K. Pachos.
\newblock Cold atom simulation of interacting relativistic quantum field
  theories.
\newblock {\em Physical Review Letters}, 105:190403, 2010.

\bibitem{Mazza}
Leonardo Mazza, Alejandro Bermudez, Nathan Goldman, Matteo Rizzi, Miguel~Angel
  Martin-Delgado, and Maciej Lewenstein.
\newblock An optical-lattice-based quantum simulator for relativistic field
  theories and topological insulators.
\newblock {\em New Journal of Physics}, 14:015007, 2012.

\bibitem{Kapit}
Eliot Kapit and Erich Mueller.
\newblock Optical-lattice {H}amiltonians for relativistic quantum
  electrodynamics.
\newblock {\em Physical Review A}, 83:033625, 2011.

\bibitem{Bermudez}
A.~Bermudez, L.~Mazza, M.~Rizzi, N.~Goldman, M.~Lewenstein, and M.~A.
  Martin-Delgado.
\newblock Wilson fermions and axion electrodynamics in optical lattices.
\newblock {\em Physical Review Letters}, 105:190404, 2010.

\bibitem{Maraner}
Paolo Maraner and Jiannis~K. Pachos.
\newblock Yang-{M}ills gauge theories from simple fermionic lattice models.
\newblock {\em Physics Letters A}, 373:2542--2545, 2009.

\bibitem{Lepori}
L.~Lepori, G.~Mussardo, and A.~Trombettoni.
\newblock (3+1) massive {D}irac fermions with ultracold atoms in frustrated
  cubic optical lattices.
\newblock {\em Europhysics Letters}, 92:50003, 2010.

\bibitem{Maeda}
Kenji Maeda, Gordon Baym, and Tetsuo Hatsuda.
\newblock Simulating dense {QCD} matter with ultracold atomic boson-fermion
  mixtures.
\newblock {\em Physical Review Letters}, 103:085301, 2009.

\bibitem{Rapp}
{\'A}kos Rapp, Gergely Zar{\'a}nd, Carsten Honerkamp, and Walter Hofstetter.
\newblock Color superfluidity and {``baryon''} formation in ultracold fermions.
\newblock {\em Physical Review Letters}, 98:160405, 2007.

\bibitem{Weimer}
Hendrik Weimer, Markus M{\"u}ller, Igor Lesanovsky, Peter Zoller, and
  Hans~Peter B{\"u}chler.
\newblock A {R}ydberg quantum simulator.
\newblock {\em Nature Physics}, 6:382--388, 2010.

\bibitem{Casanova}
J.~Casanova, L.~Lamata, I.~L. Egusquiza, R.~Gerritsma, C.~F. Roos, J.~J.
  Garcia-Ripoll, and E.~Solano.
\newblock Quantum simulation of quantum field theories in trapped ions.
\newblock {\em Physical Review Letters}, 107:260501, 2011.

\bibitem{Casanova2}
J.~Casanova, A.~Mezzacapo, L.~Lamata, and E.~Solano.
\newblock Quantum simulation of interacting fermion lattice models in trapped
  ions.
\newblock {\em Physical Review Letters}, 108:190502, 2012.
\newblock arXiv:1110.3730.

\bibitem{Doucot}
B.~Doucot, L.~B. Ioffe, and J.~Vidal.
\newblock {Discrete non-Abelian gauge theories in two-dimensional lattices and
  their realizations in Josephson-junction arrays}.
\newblock {\em Physical Review B}, 69:214501, 2004.
\newblock arXiv:cond-mat/0302104.

\bibitem{Lewenstein}
Maciej Lewenstein, Anna Sanpera, Veronica Ahufinger, Bogdan Damski, Aditi~Sen
  De, and Ujjwal Sen.
\newblock Ultracold atomic gases in optical lattices: mimicking condensed
  matter physics and beyond.
\newblock {\em Advances in Physics}, 56:243--379, 2007.
\newblock arXiv:cond-mat/0606771.

\bibitem{Johanning}
Michael Johanning, Andr{\'e}s Var{\'o}n, and Christof Wunderlich.
\newblock Quantum simulations with cold trapped ions.
\newblock {\em Journal of Physics B}, 42:154009, 2009.
\newblock arXiv:0905.0118.

\bibitem{Aharonov_Benor}
Dorit Aharonov and Michael Ben-Or.
\newblock Fault-tolerant quantum computation with constant error rate.
\newblock {\em SIAM Journal on Computing}, 38:1207--1282, 2008.
\newblock arXiv:quant-ph/9611025.

\bibitem{Knill_Laflamme_Zurek}
Emmanuel Knill, Raymond Laflamme, and Wojciech~H. Zurek.
\newblock Resilient quantum computation.
\newblock {\em Science}, 279:342--345, 1998.
\newblock arXiv:quant-ph/9702058.

\bibitem{shortversion}
Stephen~P. Jordan, Keith S.~M. Lee, and John Preskill.
\newblock Quantum algorithms for quantum field theories.
\newblock {\em Science}, 336:1130--1133, 2012.
\newblock arXiv:1111.3633.

\bibitem{Glimm:1968kh}
J.~Glimm and A.~Jaffe.
\newblock A {$\lambda \phi^4$} quantum field theory without cutoffs. {I}.
\newblock {\em Physical Review}, 176:1945--1961, 1968.

\bibitem{Glimm:1986ki}
J.~Glimm and A.~Jaffe.
\newblock The {$\lambda (\phi^4)_2$} quantum field theory without cutoffs:
  {II.} the field operators and the approximate vacuum.
\newblock {\em Annals of Mathematics}, 91:362--401, 1970.

\bibitem{Glimm:1986kj}
J.~Glimm and A.~Jaffe.
\newblock The {$\lambda (\phi^4)_2$} quantum field theory without cutoffs.
  {III.} the physical vacuum.
\newblock {\em Acta Mathematica}, 125:203--267, 1970.

\bibitem{Glimm:1972kn}
J.~Glimm and A.~Jaffe.
\newblock The {$\lambda \phi^4_2$} quantum field theory without cutoffs. {IV.}
  perturbations of the {H}amiltonian.
\newblock {\em Journal of Mathematical Physics}, 13:1568--1584, 1972.

\bibitem{Glimm:1986kk}
J.~Glimm, A.~Jaffe, and T.~Spencer.
\newblock The {W}ightman axioms and particle structure in the weakly coupled
  {$P(\phi)_2$} quantum field model.
\newblock {\em Annals of Mathematics}, 100:585--632, 1974.

\bibitem{Glimm:1973kp}
J.~Glimm and A.~Jaffe.
\newblock Positivity of the {$\phi_3^4$} {H}amiltonian.
\newblock {\em Fortschrift der Physik}, 21:327--376, 1973.

\bibitem{Feldman:1976im}
Joel~S. Feldman and Konrad Osterwalder.
\newblock The {W}ightman axioms and mass gap for weakly coupled
  {$(\varphi^4)_3$} quantum field theories.
\newblock {\em Annals of Physics}, 97:80--135, 1976.

\bibitem{Osterwalder:1975zn}
K.~Osterwalder and R.~S{\'e}n{\'e}or.
\newblock The scattering matrix is nontrivial for weakly coupled {$P(\phi)_2$}
  models.
\newblock {\em Helvetica Physica Acta}, 49:525--536, 1976.

\bibitem{Eckmann:1976xa}
J.-P. Eckmann, H.~Epstein, and J.~Fr{\"o}hlich.
\newblock Asymptotic perturbation expansion for the {$S$} matrix and the
  definition of time ordered functions in relativistic quantum field models.
\newblock {\em Annales de l'institut Henri Poincar{\'e} (A) Physique
  th{\'e}orique}, 25:1--34, 1976.

\bibitem{Constantinescu:1977xr}
F.~Constantinescu.
\newblock Nontriviality of the scattering matrix for weakly coupled
  {$\phi_3^4$} models.
\newblock {\em Annals of Physics}, 108:37--48, 1977.

\bibitem{Eckmann:1975}
J.-P. Eckmann, J.~Magnen, and R.~S{\'e}n{\'e}or.
\newblock Decay properties and {B}orel summability for the {S}chwinger
  functions in {$P(\phi)_2$} theories.
\newblock {\em Communications in Mathematical Physics}, 39:251--271, 1975.

\bibitem{Magnen:1977ha}
J.~Magnen and R.~S{\'e}n{\'e}or.
\newblock Phase space cell expansion and {B}orel summability for the
  {E}uclidean {$\phi_3^4$} theory.
\newblock {\em Communications in Mathematical Physics}, 56:237--276, 1977.

\bibitem{Eckmann:1979pr}
J.-P. Eckmann and H.~Epstein.
\newblock Borel summability of the mass and the {$S$}-matrix in {$\phi^4$}
  models.
\newblock {\em Communications in Mathematical Physics}, 68:245--258, 1979.

\bibitem{Aizenman:1982ze}
Michael Aizenman.
\newblock Geometric analysis of {$\phi^4$} fields and {I}sing models. {Parts I
  and II}.
\newblock {\em Communications in Mathematical Physics}, 86:1--48, 1982.

\bibitem{Frohlich:1982tw}
J.~Fr{\"o}hlich.
\newblock On the triviality of {$\lambda \varphi_d^4$} theories and the
  approach to the critical point in {$d$
  $\raisebox{-4pt}{\includegraphics[width=0.15in]{}}$ 4}
  dimensions.
\newblock {\em Nuclear Physics B}, 200:281--296, 1982.

\bibitem{Sokal}
David~C. Brydges, J{\"u}rg Fr{\"o}lich, and Alan~D. Sokal.
\newblock A new proof of the existence and nontriviality of the continuum
  {$\varphi^4_2$} and {$\varphi^4_3$} quantum field theories.
\newblock {\em Communications in Mathematical Physics}, 91:141--186, 1983.

\bibitem{Glimm:1974tz}
James Glimm and Arthur Jaffe.
\newblock {$\phi_2^4$} quantum field model in the single-phase region:
  Differentiability of the mass and bounds on critical exponents.
\newblock {\em Physical Review D}, 10:536--539, 1974.

\bibitem{Guerra:1975ym}
F.~Guerra, L.~Rosen, and B.~Simon.
\newblock Correlation inequalities and the mass gap in {$P(\phi)_2$. III. Mass}
  gap for a class of strongly coupled theories with nonzero external field.
\newblock {\em Communications in Mathematical Physics}, 41:19--32, 1975.

\bibitem{McBryan:1976ga}
O.~A. McBryan and J.~Rosen.
\newblock Existence of the critical point in {$\phi^4$} field theory.
\newblock {\em Communications in Mathematical Physics}, 51:97--105, 1976.

\bibitem{LeGuillou:1977ju}
J.~C. Le~Guillou and Jean Zinn-Justin.
\newblock Critical exponents for the {$n$}-vector model in three dimensions
  from field theory.
\newblock {\em Physical Review Letters}, 39:95--98, 1977.

\bibitem{Luscher:1987ay}
M.~L{\"u}scher and P.~Weisz.
\newblock Scaling laws and triviality bounds in the lattice {$\phi^4$} theory.
  {(I).} one-component model in the symmetric phase.
\newblock {\em Nuclear Physics B}, 290:25, 1987.

\bibitem{Nielsen_Chuang}
Michael Nielsen and Isaac Chuang.
\newblock {\em Quantum Computation and Quantum Information}.
\newblock Cambridge, 2000.

\bibitem{FT82}
E.~Fredkin and T.~Toffoli.
\newblock Conservative logic.
\newblock {\em International Journal of Theoretical Physics}, 21:219--253,
  1982.

\bibitem{Cleve_revisit}
R.~Cleve, A.~Ekert, C.~Macchiavello, and M.~Mosca.
\newblock Quantum algorithms revisited.
\newblock {\em Proceedings of the Royal Society A}, 454:339--354, 1998.
\newblock arXiv:quant-ph/9708016.

\bibitem{Kitaev95}
A.~Yu Kitaev.
\newblock Quantum measurements and the {A}belian stabilizer problem.
\newblock 1995.
\newblock arXiv:quant-ph/9511026.

\bibitem{Knill}
E.~Knill.
\newblock Approximation by quantum circuits.
\newblock Technical Report LAUR-95-2225, Los Alamos National Laboratory, 1995.
\newblock arXiv:quant-ph/9508006.

\bibitem{Kitaev_book}
A.~Yu Kitaev, A.~H. Shen, and M.~N. Vyalyi.
\newblock {\em Classical and Quantum Computation}, volume~47 of {\em Graduate
  Studies in Mathematics}.
\newblock American Mathematical Society, 2002.

\bibitem{Kempe_Regev}
Julia Kempe and Oded Regev.
\newblock 3-local {H}amiltonian is {QMA}-complete.
\newblock {\em Quantum Information and Computation}, 3:258--264, 2003.
\newblock arXiv:quant-ph/0302079.

\bibitem{Kempe}
Julia Kempe, Alexei Kitaev, and Oded Regev.
\newblock The complexity of the local {H}amiltonian problem.
\newblock {\em SIAM Journal on Computing}, 35:1070--1097, 2006.
\newblock arXiv:quant-ph/0406180.

\bibitem{Ruskai}
Sabine Jansen, Ruedi Seiler, and Mary-Beth Ruskai.
\newblock Bounds for the adiabatic approximation with applications to quantum
  computation.
\newblock {\em Journal of Mathematical Physics}, 48:102111, 2007.
\newblock arXiv:quant-ph/0603175.

\bibitem{Goldstone}
Jeffrey Goldstone.
\newblock Adiabatic theorem.
\newblock Recounted in Appendix F of arXiv:0809.2307.

\bibitem{Messiah}
Albert Messiah.
\newblock {\em Quantum Mechanics}.
\newblock Dover, 1999.
\newblock Reprint of the two-volume edition published by Wiley, 1961-1962.

\bibitem{Kitaev_Webb}
Alexei Kitaev and William~A. Webb.
\newblock Wavefunction preparation and resampling using a quantum computer.
\newblock {\em arXiv:0801.0342}, 2008.

\bibitem{Grover_Rudolph}
Lov Grover and Terry Rudolph.
\newblock Creating superpositions that correspond to efficiently integrable
  probability distributions.
\newblock 2002.
\newblock arXiv:quant-ph/0208112.

\bibitem{Bunch}
James~R. Bunch and John~E. Hopcroft.
\newblock Triangular factorization and inversion by fast matrix multiplication.
\newblock {\em Mathematics of Computation}, 28:231--236, 1974.

\bibitem{Coppersmith}
Don Coppersmith and Schmuel Winograd.
\newblock Matrix multiplication via arithmetic progressions.
\newblock {\em Journal of Symbolic Computation}, 9:251--280, 1990.

\bibitem{Williams}
Virginia~Vassilevska Williams.
\newblock Multiplying matrices faster than {C}oppersmith-{W}inograd.
\newblock In {\em Proceedings of the 44th {ACM} {S}ymposium on {T}heory of
  {C}omputing ({STOC} 2012)}, pages 887--898, 2012.
\newblock Full version:
  \texttt{www.cs.berkeley.edu/{\raise.17ex\hbox{$\scriptstyle\mathtt{\sim}$}}virgi/matrixmult-f.pdf}.

\bibitem{Lipatov:1976ny}
L.N. Lipatov.
\newblock Divergence of the perturbation theory series and the quasiclassical
  theory.
\newblock {\em Sov. Phys. JETP}, 45:216--223, 1977.

\bibitem{Brezin:1976vw}
E.~Br\'ezin, J.C. Le~Guillou, and J.~Zinn-Justin.
\newblock Perturbation theory at large order. {I.} the {$\phi^{2N}$}
  interaction.
\newblock {\em Physical Review D}, 15:1544--1557, 1977.

\bibitem{Suzuki}
M.~Suzuki.
\newblock Fractal decomposition of exponential operators with applications to
  many-body theories and {M}onte {C}arlo simulations.
\newblock {\em Physics Letters A}, 146:319--323, 1990.

\bibitem{Cleve_sim}
D.W. Berry, G.~Ahokas, R.~Cleve, and B.~C. Sanders.
\newblock Efficient quantum algorithms for simulating sparse {H}amiltonians.
\newblock {\em Communications in Mathematical Physics}, 270:359--371, 2007.
\newblock arXiv:quant-ph/0508139.

\bibitem{suzuki:1993}
Masuo Suzuki.
\newblock General decomposition theory of ordered exponentials.
\newblock {\em Proceedings of the Japan Academy Ser. B}, 69:161--166, 1993.

\bibitem{wiebe:2010}
Nathan Wiebe, Dominic Berry, Peter H{\o}yer, and Barry~C. Sanders.
\newblock Higher order decompositions of ordered operator exponentials.
\newblock {\em Journal of Physics A: Mathematical and Theoretical}, 43:065203,
  2010.

\bibitem{Symanzik:1979ph}
K.~Symanzik.
\newblock Cutoff dependence in {$\phi^4_4$} theory.
\newblock In G.~'t~Hooft, editor, {\em Carg{\'e}se Lectures 1979}, page 313,
  New York, 1980. Plenum.

\bibitem{Symanzik:1983dc}
K.~Symanzik.
\newblock Continuum limit and improved action in lattice theories. 1.
  principles and {$\phi^4$} theory.
\newblock {\em Nuclear Physics B}, 226:187, 1983.

\bibitem{Symanzik:1983gh}
K.~Symanzik.
\newblock Continuum limit and improved action in lattice theories. 2. {$O(N)$}
  nonlinear sigma model in perturbation theory.
\newblock {\em Nuclear Physics B}, 205:205, 1983.

\bibitem{Arzt:1993gz}
Christopher Arzt.
\newblock Reduced effective {L}agrangians.
\newblock {\em Physics Letters B}, 342:189--195, 1995.

\bibitem{Georgi:1991ch}
Howard Georgi.
\newblock On-shell effective field theory.
\newblock {\em Nuclear Physics B}, 361:339--350, 1991.

\bibitem{Euler:1732}
L.~Eulero.
\newblock Methodus generalis summandi progressiones.
\newblock {\em Commentarii academiae scientiarum imperialis Petropolitanae},
  6:68, 1732.
\newblock Reprinted in his {\it Opera Omnia}, ser. 1, vol. 14, 42.

\bibitem{Poisson:1823}
S.~D. Poisson.
\newblock M{\'e}moire sur le calcul num{\'e}rique des int{\'e}grales
  d{\'e}finies.
\newblock {\em M{\'e}moires de l'Acad{\'e}mie Royale des Sciences de l'Institut
  de France ser. 2}, 6:571, 1823.

\bibitem{Gottesman}
Daniel Gottesman.
\newblock An introduction to quantum error correction and fault-tolerant
  quantum computation.
\newblock In {\em Quantum Information Science and its Contributions to
  Mathematics, {P}roceedings of {S}ymposia in {A}pplied {M}athematics \bf{68}},
  pages 13--58. American Mathematical Society, 2010.
\newblock arXiv:0904.2557.

\end{thebibliography}

\end{document}